\newtheorem{theorem}{Theorem}
\newtheorem{corollary}{Corollary}
\newtheorem{proposition}{Proposition}
\newtheorem{lemma}{Lemma}
\newtheorem{remark}{Remark}
\newcommand{\dsp}{\displaystyle}
\numberwithin{equation}{section}
\begin{document}
\title{The Vlasov-Poisson equation in $\mathbb{R}^3$\\ with infinite charge and velocities}
\author{Silvia Caprino*,  Guido Cavallaro$^+$ and Carlo Marchioro$^{++}$}
\maketitle
\begin{abstract}
We consider the Vlasov-Poisson equation  in $\mathbb{R}^3$ with initial data which are not $L^1$ in space and have unbounded support in the velocities. Assuming for the density a slight decay in space and a strong decay in velocities, we prove existence and uniqueness of the solution, thus generalizing the analogous result given in \cite{R3} for data compactly supported in the velocities.
\end{abstract}

\noindent \textit{Keywords}: Vlasov-Poisson  equation; infinitely  extended  plasma; unbounded velocities;  local energy.

\noindent
\textit{Mathematics  Subject  Classification 2010}: 82D10, 35Q99, 76X05

\footnotetext{*Dipartimento di Matematica Universit\`a Tor Vergata, via della Ricerca Scientifica, 00133 Roma (Italy), 
caprino@mat.uniroma2.it}
\footnotetext{$^+$Dipartimento di Matematica Universit\`a  La Sapienza, p.le A. Moro 2,  00185 Roma (Italy),  
cavallar@mat.uniroma1.it}
\footnotetext{$^{++}$International  Research Center M$\&$MOCS   (Mathematics and Mechanics of Complex Systems),
marchior@mat.uniroma1.it}
\section{Introduction}

We consider a one-species, positively charged plasma, under the influence of the auto-induced electric field. The equation describing the time evolution of this system is the following Vlasov-Poisson equation:
\begin{equation}
\label{Eq}
\left\{
\begin{aligned}
&\partial_t f(x,v,t) +v\cdot \nabla_x f(x,v,t)+  E(x,t) \cdot \nabla_v f(x,v,t)=0  \\
&E(x,t)=\int \frac{x-y}{|x-y|^3} \ \rho(y,t) \, dy     \\
&\rho(x,t)=\int f(x,v,t) \, dv \\
&f(x,v,0)=f_0(x,v)\geq 0
\end{aligned}
\right.
\end{equation} 
where $f(x,v,t)$ is the distribution of charged particles at the point of the phase space $(x,v)$ at time $t,$ $\rho$ is the spatial density and $E$ is the electric field.
Equation (\ref{Eq}) shows that $f$ is time-invariant along the solutions of the so called characteristics equations:
\begin{equation}
\label{ch}
\begin{cases}
\dsp  \dot{X}(t)= V(t)\\
\dsp  \dot{V}(t)= E(X(t),t) \\
\dsp ( X(0), V(0))=(x,v)
 \end{cases}
\end{equation}
where $$(X(t),V(t))=(X(x,v,t),V(x,v,t))$$ denote position and velocity of a particle starting at time $t=0$ from $(x,v)$
and the initial datum $(x,v)$ is distributed according to $f_0$. It is well known that along (\ref{ch})
the partial differential equation (\ref{Eq}) transforms into an ordinary differential equation, hence
a result of existence and uniqueness of solutions to (\ref{ch}) implies the same result for solutions to (\ref{Eq}),
with regularity properties depending on the regularity of $f_0$. Since $f$ is time-invariant along this motion,
$f(X(t), V(t),t)=f_0(x,v)$,
and 
\begin{equation}
\|f(t)\|_{L^\infty}=\|f_0\|_{L^\infty}.
\label{f0}
\end{equation}

This equation has been widely studied and the problem of existence and uniqueness of the solution in ${\mathbb{R}}^3$, for $L^1$ data, is completely solved, as it can be seen in many papers. We quote \cite{L, Pf, S} and a nice review of many such results  \cite{G} for all. The subsequent problem of a spatial density not belonging to $L^1$ has been investigated since many years, for instance in papers \cite{CCMP, CMP} and more recently in \cite{CCM2, Rem, R3}. Other papers related to this problem, with many different optics, are, to quote some of them,  \cite{Ch, J, P, S1, S2, S3}. In particular, in \cite{Rem} and  \cite{R3} we have considered an infinitely extended plasma, confined in a cylinder by an external magnetic field and in the whole space respectively. In both cases the spatial density is not supposed to be in $L^1$ and in both cases it is of primary relevance to achieve a good position of the equations, since the electric field could be infinite. To avoid this problem, we have assumed that the spatial density, even if not integrable, is slightly decaying at infinity. Another problem coming from the infinite charge of the plasma is that the total energy of the system is infinite. We bypass this difficulty by introducing the $local \ energy,$ that is a sort of energy of a bounded region, which however takes into account the whole interaction with the rest of the particles. By our hypotheses at $t=0$ it turns out that the local energy is finite and has good properties, such to enable us to prove, in both papers \cite{Rem} and  \cite{R3}, the existence and uniqueness of the solution globally in time, for initial data which are not $L^1$ in space. These results are proved in case that the initial distribution $f_0$ has compact support in the velocities.

In the present paper we extend the analysis to the unbounded velocities case, which appears physically more relevant.
The strategy of the proof is the following: we start from the already known case with a cutoff $N$ on the maximal velocity and we study the limit $N\to \infty$. Assuming a slight decay in space and a strong decay in velocity of the initial distribution $f_0(x,v)$,
we prove that the limit $N\to \infty$ does exist, satisfies the Vlasov-Poisson equation, preserves the decay behavior and
it is unique in this class.
We remark that this decay law includes the important Maxwell-Boltzmann distribution.

The main point for the present generalization is a sharp estimate on the electric field which refines the preceding
one given in \cite{R3}.   We observe that in the present paper the plasma can move in the whole space ${\mathbb{R}}^3$.   Recently an unbounded velocity case has been studied for a plasma confined in an unbounded cylinder by a 
magnetic mirror \cite{inf},  where the plasma moves in a quasi one-dimensional region, and this allow for a slighter spatial
decay at infinity.

\noindent  We refer to the characteristics equations (\ref{ch}) since, as it is well known, the existence of a unique solution to (\ref{ch}) would imply the same result for (\ref{Eq}), in case of smooth data.
The main results of the present paper are stated in the following Theorems:

\medskip

\begin{theorem}
Let us fix an arbitrary positive time T. Let $f_0$ satisfy the following hypotheses:
\begin{equation}
0\leq f_0(x,v)\leq C_0 e^{- \lambda |v|^2}g(|x|)\label{G}
 \end{equation}
 where $g$ is a positive, bounded, continuous function satisfying, for any $ i\in {\mathbb{Z}}^3\setminus{\{0\}},$
\begin{equation}
\int_{|i-x|\leq1}g(|x|)\ dx \leq C_1\frac{1}{|i|^{2+\epsilon}} 
\label{ass}
\end{equation}
for any fixed $1/15<\epsilon <1$,
 being $\lambda,$ $C_0$ and $C_1$  positive constants. Then there exists a solution to equations (\ref{ch}) on $[0, T]$ and positive constants $C$ and $\bar{\lambda}$ such that  
 \begin{equation}
 0\leq f(x,v,t)\leq Ce^{-\bar \lambda |v|^2},\label{dect}
 \end{equation}
and for any $ i\in {\mathbb{Z}}^3\setminus {\{0\}}$
\begin{equation}
\int_{|i-x|\leq1}\rho(x,t)\ dx \leq C\frac{\log^\frac32 (1+ |i|)}{|i|^{2+\epsilon}}.\label{asst}
\end{equation}
 \label{ent}
This solution is unique in the class of those satisfying (\ref{dect}) and (\ref{asst}).
\label{th_02}
\end{theorem}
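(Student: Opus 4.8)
The strategy is exactly the one announced in the Introduction: regularize by a velocity cutoff and pass to the limit. For each $N$ I would set $f_0^N(x,v)=f_0(x,v)\,\mathbf{1}_{|v|\le N}$; since $f_0^N$ is compactly supported in velocity and still obeys (\ref{G}) and (\ref{ass}), the result of \cite{R3} furnishes a solution $f^N$ of (\ref{Eq}) on $[0,T]$, with flow $(X^N,V^N)$ solving (\ref{ch}). The whole task is then to establish bounds of the form (\ref{dect})--(\ref{asst}) that are \emph{uniform in $N$}, and afterwards to let $N\to\infty$. Every estimate is carried out on the characteristics (\ref{ch}), and I drop the superscript $N$ for readability.

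The first ingredient is the representation of the local mass through the flow. Since (\ref{ch}) is measure preserving, for $i\in\mathbb{Z}^3\setminus\{0\}$ one has
\[
\int_{|i-x|\le1}\rho(x,t)\,dx=\int_{\{(x,v):\,|X(x,v,t)-i|\le1\}}f_0(x,v)\,dx\,dv,
\]
so that, by (\ref{G}), it remains to control the set of initial data whose trajectory reaches the unit cell about $i$. Writing $x=X(t)-\int_0^t V(s)\,ds$ and $V(s)=v+\int_0^s E(X(\tau),\tau)\,d\tau$, this set is located, for each fixed $v$, within distance $1+\tfrac12\|E\|_\infty t^2$ of $i-tv$. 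The second ingredient is the field bound: splitting $E$ as in (\ref{Eq}) into the part with $|x-y|\le1$, controlled by $\|\rho(t)\|_{L^\infty}\!\int_{|z|\le1}|z|^{-2}dz$, and the part with $|x-y|>1$, controlled by $\sum_k|x-k|^{-2}\int_{|k-y|\le1}\rho(y,t)\,dy$, the decay (\ref{asst}) renders the latter sum uniformly bounded in $x$ once it is established.

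The heart of the matter --- and the refinement over \cite{R3} --- is to keep $\|\rho(t)\|_{L^\infty}$ and the local mass bounded independently of $N$ despite the unbounded velocities. The naive estimate $|v_0|\ge|v|-\|E\|_\infty t$ inflates $\rho$ by a factor $(\|E\|_\infty t)^2$, coming from the sphere of radius $\|E\|_\infty t$ in velocity space, and when fed back into the field it produces a superlinear inequality for $\|E\|_\infty$ that does not close. The point I would exploit is that the velocity shift $\int_0^t E(X(s),s)\,ds$ is, to leading order, a \emph{single vector} common to all trajectories ending near a given point, hence a mere translation in $v$ that leaves $\int e^{-\lambda|v|^2}dv$ invariant; this gives $\|\rho(t)\|_{L^\infty}\le C$ with $C$ independent of $\|E\|_\infty$. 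The genuine difficulty is the \textbf{residual dependence of this shift on the initial velocity}: two trajectories with velocities $v_0,v_0'$ separate by $\approx t|v_0-v_0'|$, so their shifts differ by $\lesssim\|\nabla E\|_\infty t^2|v_0-v_0'|$, and $\|\nabla E\|_\infty$ is only \emph{logarithmically} bounded, being a singular integral of the bounded density $\rho$. Propagating this logarithmic loss through the Gaussian integral in three velocity dimensions is what I expect to yield the factor $\log^{3/2}(1+|i|)$ in (\ref{asst}), and making the resulting iteration close --- the logarithmic corrections having to remain summable against the $|k|^{-(2+\epsilon)}$ decay --- is what I expect to force the threshold $\epsilon>1/15$. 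This balance between displacement, the Jacobian of the flow at fixed $v$, and the log-divergence of $\nabla E$ is the step I anticipate to be the main obstacle; a continuity (bootstrap) argument in $t$ on $[0,T]$ then upgrades these a priori bounds into (\ref{dect}) and (\ref{asst}), uniformly in $N$.

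Finally, the uniform bounds give equiboundedness and equicontinuity of the fields $E^N$, so that along a subsequence the fields and the corresponding characteristics converge, producing in the limit a solution of (\ref{ch}), hence of (\ref{Eq}), still satisfying (\ref{dect})--(\ref{asst}). For uniqueness within this class I would compare two such solutions through $\sup_{s\le t}\big(|X^{(1)}-X^{(2)}|+|V^{(1)}-V^{(2)}|\big)$: the bounds (\ref{dect})--(\ref{asst}) make the field log-Lipschitz, the difference of the two fields is then controlled by this quantity times a logarithm, and an Osgood/Gronwall argument (using $\int_0 dr/(r|\log r|)=\infty$) forces the two flows to coincide.
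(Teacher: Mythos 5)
Your overall skeleton --- velocity cutoff, uniform-in-$N$ estimates, passage to the limit, Osgood-type uniqueness --- coincides with the paper's, but the two analytical devices that actually make the scheme close are absent from your proposal, and the mechanisms you put in their place would not work. First, you never introduce the local energy $W^N(\mu,R,t)$ of (\ref{e}) and its supremum $Q^N(R,t)$ of (\ref{Q}). This is the paper's substitute for the (infinite) total energy: Lemma \ref{luc} and Proposition \ref{propo} give $Q^N(R^N(t),t)\le C R^N(t)^{1-\epsilon}$, and through the interpolation bound (\ref{ss}) this is what controls the intermediate range of the field in Proposition \ref{prop2} and Lemma \ref{L1}. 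Your proposed far-field bound $\sum_k|x-k|^{-2}\int_{|k-y|\le1}\rho(y,t)\,dy$ presupposes (\ref{asst}) at time $t$, and your route to $\|\rho(t)\|_{L^\infty}\le C$ --- the velocity shift being ``to leading order a single vector common to all trajectories ending near a given point'' --- is a heuristic, not an estimate. The paper instead first proves only $\rho^N\le CN^{3\alpha}$ in (\ref{rho_t}) and recovers an $N$-independent bound a posteriori from (\ref{N_0}), which itself requires the full convergence machinery.

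Second, and decisively, the proposal omits Proposition \ref{propE}, the time-averaged field estimate along characteristics $\int_0^t|E^N(X^N(s),s)|\,ds\le C[\mathcal{V}^N(t)]^\alpha$ with $\alpha<2/3$, proved by a Pfaffelmoser-type phase-space decomposition combined with an iterated time averaging over growing intervals $\Delta_\ell$. This is exactly what the introduction calls the main point of the generalization, and the upper bound $2/3$ is what makes the iteration for $\delta^N(t)=\sup|X^N(t)-X^{N+1}(t)|$ converge: the double time integration in (\ref{estN}) produces the factor $e^{\sqrt{C}TN^{(\frac32\alpha+1)}}$ in (\ref{fin}), which is beaten by the Gaussian tail $e^{-\frac{\lambda}{4}N^2}$ precisely because $\frac32\alpha+1<2$. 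A soft compactness argument (``equiboundedness and equicontinuity \dots along a subsequence'') has no access to this, since the comparison of $E^N$ with $E^{N+1}$ must isolate the exponentially small contribution $e^{-\lambda N^2}$ of the shell $N\le|w|\le N+1$, as in the term $\mathcal{I}_2$ of Proposition \ref{p4}. Finally, your attributions of the two exponents are incorrect: the $\log^{3/2}(1+|i|)$ in (\ref{asst}) does not come from a logarithmic divergence of $\nabla E$ propagated through the Gaussian, but simply from splitting velocities at $a_i\sim\sqrt{\log|i|}$ --- fast particles are killed by the Gaussian decay, slow particles arriving near $i$ started within distance $Ca_i$ of $i$ by (\ref{N_1}), and summing (\ref{ass}) over the $\sim a_i^3$ lattice cells gives $(\log|i|)^{3/2}$; likewise the threshold $\epsilon>1/15$ is not a summability condition against $|k|^{-(2+\epsilon)}$ but the condition under which the parameters $\gamma,\eta,\delta$ of the iterative field estimate (conditions (\ref{gamma_iter}), (\ref{eta_iter}), (\ref{delta4})) admit a common admissible choice.
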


\begin{remark}
The assumption (\ref{ass}) can be satisfied in case that the spatial density $\rho(x,0),$ even being not integrable, has a suitable decay at infinity,  but also whenever $\rho(x,0)$ is piecewise constant (or has an oscillatory character) over suitable sparse sets.
Hence hypothesis (\ref{ass}) allows for spatial densities which possibly do not belong  to any $L^p$ space.
If we assume that the spatial density is point-wise decreasing for large $|x|$, then Theorem \ref{th_02}
of course  remains valid, but the thesis can be improved, since we are able to show that at any time
$t\in [0,T]$ the spatial density keeps the same decreasing property. This is the object of the next Theorem 2.

Note that the upper bound for $\epsilon$ is due to allow infinite mass. We do not study the cases $\epsilon=1$
(a border case with infinite mass) and $\epsilon>1$ (finite mass),  in which the proof is simpler.

\end{remark}

\begin{theorem}
Let us fix an arbitrary positive time T. Let $f_0$ satisfy the following hypotheses:
\begin{equation}
0\leq f_0(x,v)\leq C_0 e^{- \lambda |v|^2}g(|x|)\label{Ga}
 \end{equation}
 where $g$ is a bounded, continuous, not increasing  function such that, for $|x|\geq 1$,
\begin{equation}
g(|x|) = C\frac{1}{|x|^{2+\epsilon}} 
\label{asp}
\end{equation}
for any fixed  $1/15<\epsilon <1$,
  being $\lambda,$ $C_0$ and $C$  positive constants. Then there exists a solution to equations (\ref{ch}) on $[0, T]$ and positive constants $C'$ and ${\lambda}'$ such that  
 \begin{equation}
 0\leq f(x,v,t)\leq C'e^{- \lambda' |v|^2}g(|x|)\label{dec}.
 \end{equation}
This solution is unique in the class of those satisfying (\ref{dec}).
\label{th_03}
\end{theorem}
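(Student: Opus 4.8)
The plan is to deduce Theorem \ref{th_03} from Theorem \ref{th_02}, upgrading the spatial estimate (\ref{asst}) — which carries a spurious logarithmic factor — to the sharp pointwise bound (\ref{dec}), the improvement being made possible by the monotonicity of $g$. First I check that the hypotheses of Theorem \ref{th_03} imply those of Theorem \ref{th_02}: since $g$ is non-increasing and $g(r)=C\,r^{-(2+\epsilon)}$ for $r\ge 1$, on each cell $\{|i-x|\le 1\}$ with $i\in\mathbb{Z}^3\setminus\{0\}$ one has $|x|\ge |i|/2$, whence $\int_{|i-x|\le 1} g(|x|)\,dx\le c\,|i|^{-(2+\epsilon)}$, which is exactly (\ref{ass}). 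Theorem \ref{th_02} then furnishes a solution $f$ of (\ref{ch}) on $[0,T]$ obeying (\ref{dect}) and (\ref{asst}), unique in that class. Conversely, any $f$ satisfying (\ref{dec}) satisfies (\ref{dect}) (bound $g$ by $\|g\|_\infty$) and, upon integrating in $v$, gives $\rho\le C''g(|x|)$, hence (\ref{asst}); thus the uniqueness class of Theorem \ref{th_03} is contained in that of Theorem \ref{th_02}, and uniqueness is \emph{inherited}. It therefore suffices to prove that the solution produced above satisfies the pointwise bound (\ref{dec}).

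The engine of the argument is a uniform control of the field along the flow. From (\ref{dect}) the density is bounded, $\rho(x,t)=\int f\,dv\le C\int e^{-\bar\lambda|v|^2}\,dv=:K$, and combining this $L^\infty$ bound near a point with the tail estimate (\ref{asst}) summed over the lattice — this is precisely the sharp field estimate that is the technical core of the paper — one obtains a constant $C_E$ with $\sup_{t\in[0,T]}\|E(\cdot,t)\|_{L^\infty}\le C_E$. Writing $(X_0,V_0)$ for the datum at time $0$ of the characteristic arriving at $(x,v)$ at time $t$, and using $|V(\tau)-v|\le C_E|t-\tau|$, one gets at once the increment bounds
\[
|V_0-v|\le C_E T=:\Delta, \qquad |X_0-x|\le \int_0^t|V(\tau)|\,d\tau\le T\,(|v|+\Delta),
\]
valid uniformly in $|v|$ and $|x|$.

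Now I propagate the bound on $f_0$. Time-invariance gives $f(x,v,t)=f_0(X_0,V_0)\le C_0\,e^{-\lambda|V_0|^2}g(|X_0|)$. The velocity factor is controlled uniformly: from $|V_0|\ge|v|-\Delta$ and Young's inequality, $e^{-\lambda|V_0|^2}\le c_1\,e^{-\mu|v|^2}$ with $\mu=\lambda/2$ and $c_1=c_1(\lambda,\Delta)$. For the spatial factor I split according to the size of $|v|$, which is exactly where the Gaussian weight must compensate the loss of localization of fast particles. If $|v|\le|x|/(4T)$ and $|x|\ge R_0:=\max(2,4T\Delta)$, the second increment bound gives $|X_0|\ge|x|-T(|v|+\Delta)\ge|x|/2$, so by monotonicity and the explicit form of $g$, $g(|X_0|)\le g(|x|/2)=2^{2+\epsilon}g(|x|)$, yielding $f\le c_1 2^{2+\epsilon}C_0\,e^{-\mu|v|^2}g(|x|)$. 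If instead $|v|>|x|/(4T)$, I use $g(|X_0|)\le\|g\|_\infty$ together with $e^{-\mu|v|^2}\le e^{-(\mu/2)|v|^2}e^{-\mu|x|^2/(32T^2)}$; since a Gaussian is dominated by any inverse power, $e^{-\mu|x|^2/(32T^2)}\le c_2\,|x|^{-(2+\epsilon)}=c_2C^{-1}g(|x|)$ for $|x|\ge 1$, producing $f\le c_3\,e^{-(\mu/2)|v|^2}g(|x|)$. The region $|x|<R_0$ is trivial, since there $g$ is bounded below by a positive constant. Taking $\lambda'=\mu/2$ and $C'$ the largest of the constants above gives (\ref{dec}), and positivity is automatic from $f=f_0\circ(\text{char})^{-1}\ge 0$.

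The main obstacle is the uniform field bound $\|E(\cdot,t)\|_{L^\infty}\le C_E$ on $[0,T]$: everything downstream, and in particular the uniform-in-$(x,v)$ increment estimates that make the two-regime splitting work, rests on it, and it is precisely the refined estimate on $E$ announced in the introduction. Since my downstream argument works for any $\epsilon\in(0,1)$, the lower restriction $\epsilon>1/15$ must enter through that field estimate rather than through the decay propagation. Granting it (established already for Theorem \ref{th_02}), the difficulty specific to Theorem \ref{th_03} is the high-velocity balance: no spatial decay may be lost, and this succeeds only because the strong weight $e^{-\lambda|v|^2}$ beats the polynomial $|x|^{-(2+\epsilon)}$ exactly when $|v|\gtrsim|x|/T$, i.e.\ precisely in the regime where the crude displacement bound fails to keep $|X_0|$ comparable to $|x|$.
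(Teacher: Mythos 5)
Your proposal is correct and follows essentially the same route as the paper: existence and uniqueness are inherited from Theorem \ref{th_02}, and the pointwise bound (\ref{dec}) is propagated along the characteristics by splitting into a slow-velocity regime, where the displacement bound keeps $|X_0|$ comparable to $|x|$ so that monotonicity and the explicit form (\ref{asp}) of $g$ apply, and a fast-velocity regime, where the Gaussian weight in $v$ absorbs the polynomial spatial decay. The only cosmetic difference is that you extract uniform increment bounds from an $L^\infty$ bound on the limiting field (deduced from (\ref{dect}) and (\ref{asst})), whereas the paper reuses the bounds $|V^N(t)|\leq C(|v|+1)$ and $|X^N(t)|\leq |x|+C(|v|+1)$ of (\ref{N_0})--(\ref{N_1}), already established at the $N$-truncated level in the proof of Theorem \ref{th_02}.
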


\bigskip

The paper is devoted to the proofs of Theorems \ref{th_02} and \ref{th_03} and is planned in the following way. In Section 2 we define the partial dynamics, which is an evolution equation for a system regularized with respect to our purpose, that is with a cutoff in the velocities. For this system we introduce the local energy and we state many properties, in particular the bound on the auto-induced electric field and the bound on the local energy. The proof of Theorems \ref{th_02} and \ref{th_03} are given in Section 3, where we show that the estimates proved in the previous section can be made uniform with respect to the cutoff. Finally in Section 4 we give the proof of the main estimate on the electric field, which is fundamental in the proofs of both theorems, and the Appendix  is devoted to the proofs of some technical lemmas. 

\section{The partial dynamics }

Beside system (\ref{ch}) we introduce a modified differential system, called $the \\ partial \ dynamics,$ in which the initial density has compact support in the velocities. More precisely, for any positive integer $N,$  we  consider the following equations:
\begin{equation}
\label{chN}
\begin{cases}
\dsp  \dot{X}^N(t)= V^N(t)\\
\dsp  \dot{V}^N(t)= E^N(X^N(t),t) \\
\dsp ( X^N(0), V^N(0))=(x,v), 
 \end{cases}
\end{equation}
where
\begin{equation*}
\left(X^N(t),V^N(t)\right)=\left(X^N(x,v,t), V^N(x,v,t)\right)
\end{equation*} 
\begin{equation*}
E^N(x,t)=\int \frac{x-y}{|x-y|^3} \ \rho^N(y,t) \, dy
\end{equation*} 
\begin{equation*}
\rho^N(x,t)=\int f^N(x,v,t) \, dv
\end{equation*} 
and
\begin{equation*}
f^N(X^N(t),V^N(t),t)=f^N_0(x,v),
\end{equation*} 
being the initial distribution $f_0^N$ defined as
\begin{equation}
f^N_0(x,v)=f_0(x,v)\chi\left(|v|\leq N\right) \label{ic}
\end{equation}
where $\chi\left(\cdot\right)$ is the characteristic function of the set $\left(\cdot\right)$.  Since in the equations (\ref{chN}) the initial datum $f_0^N$ has compact support in the velocities, for any fixed $N$ the solution does exist unique, as it is proved in \cite{R3}. Our purpose, in order to prove both Theorems \ref{th_02} and \ref{th_03}, is to show that this solution converges, as $N\to\infty,$ to a solution to (\ref{ch}).

Along the paper some positive constants will appear, generally denoted by $C,$ except some which will be numbered in order to be quoted in the sequel.  All of them will depend exclusively on $ \|f_0\|_{L^\infty}$ and an arbitrarily fixed, once for ever, time $T,$ but not on $N,$ while any dependence on $N$ in the estimates will be clearly stressed.

\subsection{The local energy}

We define the $local\ energy$, already introduced in our previous papers, as a fundamental tool to deal with the infinite charge of the plasma.

 For any vector $\mu\in \mathbb{R}^3$ and any $R>0$ we define the function:
\begin{equation}
\varphi^{\mu,R}( x)=\varphi\Bigg(\frac{| x-\mu|}{R}\Bigg) \label{a}
\end{equation}
with $\varphi$ a smooth function such that:
\begin{equation}
\varphi(r)=1 \ \ \hbox{if} \ \ r\in[0,1] \label{b}
\end{equation}
\begin{equation}
\varphi(r)=0 \ \ \hbox{if} \ \ r\in[2,+\infty) \label{c}
\end{equation}
\begin{equation}
-2\leq \varphi'(r)\leq 0.\label{d}
\end{equation}
We define the local energy as:
\begin{equation}
\begin{split}
& W^N( \mu,R,t)=\\&\frac12 \int d x\, \varphi^{\mu,R}( x)\left[\int d v \ |v|^2 f^N( x, v,t)\,+
 \rho^N( x,t)\int dy \ 
\frac{ \rho^N ( y,t)}{  | x- y|}\right].
\end{split} \label{e}
\end{equation}
 It depends on the property of $f^N$ whether or not $W^N$ is bounded. For the moment, we stress that the local energy takes into account the complete interaction with the rest of the plasma out of the sphere of center $\mu$ and radius $2R.$

We set

\begin{equation}
Q^N(R,t)=\sup_{\mu\in {\mathbb {R}}^3 }W^N( \mu,R,t).\label{Q}
\end{equation}

First of all we observe that the hypotheses in Theorem \ref{th_02} (and hence those in Theorem \ref{th_03} which are more strict) ensure that the local energy is bounded at time $t=0$ and that the following holds:
\begin{lemma}
In the hypotheses of Theorem \ref{th_02}, $\forall  R\geq 1$ it holds
\begin{equation}
Q^N(R,0)\leq CR^{1-\epsilon}.
\end{equation}\label{luc}
\end{lemma}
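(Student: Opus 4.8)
The plan is to bound $W^N(\mu,R,0)$ by $CR^{1-\epsilon}$ uniformly in $\mu\in\mathbb{R}^3$ and in $N$, so that taking the supremum over $\mu$ in (\ref{Q}) yields the claim. The starting point is the pointwise datum (\ref{G}): integrating in $v$, and using that the velocity cutoff in (\ref{ic}) can only decrease the integrals, one gets the two $N$-independent bounds $\rho^N(x,0)\leq C_0 g(|x|)\int e^{-\lambda|v|^2}\,dv=C\,g(|x|)$ and $\int |v|^2 f^N_0(x,v)\,dv\leq C_0 g(|x|)\int |v|^2 e^{-\lambda|v|^2}\,dv=C\,g(|x|)$. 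Since $0\leq\varphi^{\mu,R}\leq 1$ and $\varphi^{\mu,R}$ is supported in the ball $|x-\mu|\leq 2R$ by (\ref{c}), everything reduces to estimating spatial integrals of $g$, and of its self-potential, over such balls.

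The geometric core of the argument is the claim that, for every $\mu$ and every $R\geq 1$,
\begin{equation*}
\int_{|x-\mu|\leq 2R} g(|x|)\,dx\leq C\,R^{1-\epsilon}.
\end{equation*}
I would cover the ball by the unit balls $\{|x-i|\leq 1\}$, $i\in\mathbb{Z}^3$, which have bounded overlap, and estimate each piece by (\ref{ass}), treating by hand the finitely many cells near the origin where $g$ is merely bounded. This reduces the estimate to a sum of $|i|^{-(2+\epsilon)}$ over the $O(R^3)$ lattice points $i$ with $|i-\mu|\lesssim R$. When $\mu$ is near the origin the sum is $\sum_{0<|i|\lesssim R}|i|^{-(2+\epsilon)}$; splitting into dyadic shells $|i|\sim 2^k$ of cardinality $\sim 2^{3k}$ yields the geometric series $\sum_k 2^{k(1-\epsilon)}$, which sums to $CR^{1-\epsilon}$ precisely because $\epsilon<1$ (equivalently, compare with $\int_1^R r^{-\epsilon}\,dr$). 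When instead $|\mu|\gg R$ every such $i$ satisfies $|i|\gtrsim|\mu|$, so the sum is at most $CR^3|\mu|^{-(2+\epsilon)}\leq CR^{1-\epsilon}$; the bound is therefore uniform in $\mu$. Together with the bound $\int |v|^2 f^N_0\,dv\leq C g(|x|)$ this already controls the kinetic part of $W^N(\mu,R,0)$ by $CR^{1-\epsilon}$.

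It remains to handle the potential part $\tfrac12\int\varphi^{\mu,R}(x)\,\rho^N(x,0)\,U^N(x)\,dx$, where $U^N(x)=\int|x-y|^{-1}\rho^N(y,0)\,dy\leq C\,U(x)$ and $U(x):=\int g(|y|)/|x-y|\,dy$. The key point, and the main obstacle, is to show that the self-potential $U(x)$ is bounded uniformly in $x$; granting this, the potential part is again $\leq C\int_{|x-\mu|\leq 2R} g\leq CR^{1-\epsilon}$ by the geometric estimate above. To bound $U(x)$ I would again split via the lattice: the cells with $|x-i|\geq 2$ contribute, after (\ref{ass}), the discrete convolution $\sum_{i\neq 0}\bigl(|x-i|\,|i|^{2+\epsilon}\bigr)^{-1}$, while the $O(1)$ cells with $|x-i|<2$ and the cell at the origin are controlled directly using $g\in L^\infty$ and the integrability of $|x-y|^{-1}$ in $\mathbb{R}^3$. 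The convolution sum is then compared with $\int |x-y|^{-1}\langle y\rangle^{-(2+\epsilon)}\,dy$, writing $\langle y\rangle=\max(1,|y|)$, which I would estimate by separating the regions $|y|\leq|x|/2$, $|y-x|\leq|x|/2$ and $|y|\geq 2|x|$: in each region the exponent range $2<2+\epsilon<3$ makes both the origin and infinity integrable, and each piece is $O(|x|^{-\epsilon})+O(1)$, hence bounded (in fact decaying) as $|x|\to\infty$; for bounded $|x|$ boundedness is immediate.

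Putting the two parts together bounds $W^N(\mu,R,0)$ by $CR^{1-\epsilon}$ with a constant independent of $\mu$ and $N$, and the supremum over $\mu$ gives $Q^N(R,0)\leq CR^{1-\epsilon}$. I expect the only genuinely delicate step to be the uniform-in-$x$ bound on the self-potential $U$, i.e. the convolution estimate; note that only $\epsilon\in(0,1)$ is used here, the lower bound $\epsilon>1/15$ being reserved for the later, time-dependent estimates.
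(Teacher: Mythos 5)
Your proof is correct and follows essentially the same route as the paper: reduce to spatial integrals of $g$ via the Gaussian velocity decay, cover the ball $|x-\mu|\leq 2R$ by unit lattice cells and apply (\ref{ass}) (splitting the cases $\mu$ near and far from the origin), and show that the self-potential $\int g(|y|)\,|x-y|^{-1}\,dy$ is bounded uniformly in $x$. The only cosmetic difference is in that last step, where the paper splits the lattice sum over the index sets $\{i:|i|\leq|x+i|\}$ and $\{i:|i|>|x+i|\}$ instead of comparing with a continuous convolution integral; both arguments work, and your dyadic-shell count reproduces the paper's bound $\sum_{1<|i|\leq CR}|i|^{-(2+\epsilon)}\leq CR^{1-\epsilon}$.
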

\begin{proof}
We take any $\mu\in \mathbb{R}^3$ and $R\geq 1,$ assumed integer for simplicity, and we estimate $W^N(\mu, R, 0).$ By assumption (\ref{G}) we have
\begin{equation}
\begin{split}
W^N(\mu, R, 0)\leq C
 \int d x  \,\varphi^{\mu,R}( x)g(|x|)\left[1+\int dy \ 
\frac{ g( |y|)}{  | x- y|}\right].\label{C4}
\end{split}
\end{equation}
By the definition of the function $\varphi^{\mu,R}$ it is
\begin{equation}
\int \varphi^{\mu,R}( x)g(|x|)\,dx\leq \int_{|\mu-x|\leq 2R} g(|x|)\,dx.\label{fi}
\end{equation}
We estimate the integral on the right in both cases, $|\mu|\leq 3R$ and $|\mu|> 3R.$ Consider the case $|\mu|\leq 3R.$ By assumption (\ref{ass}) we have:
\begin{equation*}
\begin{split}
&\int_{|\mu-x|\leq 3R}g(|x|)\,dx \leq \int_{|x|\leq 5R}g(|x|)\,dx \leq \\& \int_{|x|\leq 1}g(|x|)\,dx+\sum_{\substack{i\in \mathbb{Z}^3\\1<|i|\leq 5 R}}\int_{ |i-x|\leq 1} g(|x|)\,dx\leq\\& C\left[1 + \sum_{\substack{i\in \mathbb{Z}^3\\1<|i|\leq 5 R}}\frac{1}{|i|^{2+\epsilon}}\right]\leq C  R^{1-\epsilon}.\end{split}
\end{equation*}
If on the contrary it is $|\mu|> 3R,$ then
\begin{equation*}
\begin{split}
\int_{|\mu-x|\leq 2R}g(|x|)\,dx \leq  \ &C\left[1+\sum_{\substack{i\in \mathbb{Z}^3\\1<|i|\leq2 R}}\int_{ |\mu+i-x|\leq1} g(|x|)\,dx\right]\leq \\& C\left[ 1+\sum_{\substack{i\in \mathbb{Z}^3\\1<|i|\leq2 R}}\frac{1}{|\mu+i|^{2+\epsilon}}\right]\end{split} \label{W11}
\end{equation*}
Since in this case it is $|\mu +i|\geq \frac{|i|}{2},$ we get 
\begin{equation}
\int_{|\mu-x|\leq2 R}g(|x|)\,dx \leq C\left[1+\sum_{\substack{i\in \mathbb{Z}^3\\1<|i|\leq2 R}}\frac{1}{|i|^{2+\epsilon}}\right]\leq C  R^{1-\epsilon}.
\end{equation}
Hence, in both cases we have 
\begin{equation*}
\int_{|\mu-x|\leq 2R}g(|x|)\,dx \leq C  R^{1-\epsilon}
\end{equation*}
which implies, by (\ref{fi}),
\begin{equation}
\int  \varphi^{\mu,R}( x)g(|x|)\,dx\leq C  R^{1-\epsilon}.\label{C2}
\end{equation}
Let us now estimate the potential energy of a single particle. 
\begin{equation}
\begin{split}
\int \frac{g(|y|)}{|x-y|}\,dy &\leq C\left[\int_{|x-y|\leq 1} \frac{g(|y|)}{|x-y|}\,dy+\sum_{i\in {\mathbb{Z}}^3:|i|\geq 1}\frac{1}{|i|}\int_{| x+i-y|\leq 1}g(|y|)\,dy\right]\\&\leq C\left[1+\sum_{\substack{i\in {\mathbb{Z}}^3:|i|>1\\|x+i|\geq 1}}\frac{1}{|i||x+i|^{2+\epsilon}}\right].
\end{split}\label{e3}
\end{equation}
Now, considering in the sum in (\ref{e3}) the two subsets of indices $\{i:|i|\leq |x+i|\}$ and $\{i:|i|> |x+i|\},$ we get 
$$
\sum_{\substack{i\in {\mathbb{Z}}^3:|i|>1\\|x+i|\geq 1}}\frac{1}{|i||x+i|^{2+\epsilon}}\leq \sum_{i\in {\mathbb{Z}}^3:|i|\geq 1}\frac{1}{|i|^{3+\epsilon}} + \sum_{i\in {\mathbb{Z}}^3:|x+i|\geq 1}\frac{1}{|x+i|^{3+\epsilon}}\leq C.
$$
Hence we have proved that
\begin{equation}
\int \frac{g(|y|)}{|x-y|}\,dy \leq C \label{C3}
\end{equation}
By (\ref{C4}), estimates (\ref{C2}) and (\ref{C3}) prove the thesis.
\end{proof}

We fix once for ever an arbitrary time $T>0,$ and all the estimates in the sequel are to be intended to hold on the interval $[0,T].$

 Let us introduce the following functions, for any $t\in [0,T]:$ 
\begin{equation}
{\mathcal{V}}^N(t)=\max
\left\{ \widetilde C,  \sup_{s\in [0,t]}\sup_{(x,v)\in \mathbb{R}^3 \times B(0,N)}|V^N(x,v,s)| \right\} \label{mv}
\end{equation}
\begin{equation}
R^N(t)= 1 +\int_0^t {\mathcal{V}}^N(s)ds
\label{RN}
\end{equation}
where $\widetilde{C}>1$ is a positive constant chosen suitably large for further technical
purposes (see for example  Lemma \ref{lemv3l} in the Appendix), 
and  $B(0,N)$ is the ball in $\mathbb{R}^3$ of center $0$ and radius $N.$

\medskip

The following statement is our main result on the local energy. Its proof has already been done in previous works, as for instance \cite{R3}, and we will not repeat it here.  We only stress that it is strongly based on the positivity of the local potential energy.
\begin{proposition}
For any $ t\in [0,T]$ it holds
\begin{equation*}
Q^N(R^N(t), t)\leq CQ^N(R^N(t),0).
\end{equation*}
\label{propo}\end{proposition}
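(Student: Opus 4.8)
The plan is to estimate $\frac{d}{dt}W^N(\mu,R^N(t),t)$ uniformly in $\mu$ and $N$ and then integrate. Writing $R=R^N(t)$ and recalling from (\ref{RN}) that $\dot R^N(t)=\mathcal V^N(t)$, the total derivative splits as
\[
\frac{d}{dt}W^N(\mu,R^N(t),t)=\big(\partial_t W^N\big)(\mu,R,t)+\mathcal V^N(t)\,\big(\partial_R W^N\big)(\mu,R,t).
\]
First I would differentiate the two pieces of (\ref{e}) separately. In the kinetic part I would substitute the Vlasov equation $\partial_t f^N=-v\cdot\nabla_x f^N-E^N\cdot\nabla_v f^N$ and integrate by parts: the transport term produces a term carrying $\nabla_x\varphi^{\mu,R}$, while the force term produces the work $\int\varphi^{\mu,R}E^N\cdot j^N$, with $j^N(x,t)=\int v\,f^N(x,v,t)\,dv$. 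In the potential part I would use the continuity equation $\partial_t\rho^N+\nabla_x\cdot j^N=0$, integrate by parts in both variables and use $\nabla_x|x-y|^{-1}=-(x-y)/|x-y|^3$ together with the definition of $E^N$. After relabelling $x\leftrightarrow y$ in one integral, the bulk work terms combine into
\[
\frac12\int\!\!\int\big[\varphi^{\mu,R}(x)-\varphi^{\mu,R}(y)\big]\,\frac{x-y}{|x-y|^3}\cdot j^N(x)\,\rho^N(y)\,dx\,dy,
\]
which vanishes whenever $x$ and $y$ both lie in the flat region of $\varphi$: this is the localized form of energy conservation, and it leaves only terms supported near the shell $R\le|x-\mu|\le 2R$.

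Next I would control the kinetic shell term. On the support of $f^N$ every velocity is bounded by $\mathcal V^N(t)$ from (\ref{mv}), and $|\nabla_x\varphi^{\mu,R}|\le C/R$ by (\ref{d}), so this term is at most $\frac{C}{R}\mathcal V^N(t)$ times the kinetic energy carried by the shell. The point of choosing the radius to grow exactly at the maximal speed is that the companion term $\mathcal V^N(t)\,\partial_R W^N$ is nonnegative — because $\varphi$ is nonincreasing and both energy densities in (\ref{e}) are nonnegative — and is of the same order $\frac1R\mathcal V^N(t)$ times the shell energy; morally, the ball $B(\mu,2R^N(t))$ expands faster than any characteristic can cross it, so the growing domain absorbs the kinetic outflow.

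The genuinely delicate step, which I expect to be the main obstacle, is the Coulomb boundary term above. Bounding $|\varphi^{\mu,R}(x)-\varphi^{\mu,R}(y)|$ crudely by $C|x-y|/R$ and $|j^N|\le\mathcal V^N\rho^N$ would reproduce the full, non-localized potential energy, which is infinite for our infinite-charge data; so the localization cannot be discarded. Here one must exploit the positivity of the local potential energy in (\ref{e}) — flagged below the proposition as the basis of the whole argument — to dominate this double integral by a constant multiple of $W^N$, together with the decay hypothesis (\ref{ass}) and the refined field estimate proved later in Section~4. This is precisely what controls the long-range coupling between the shell and the rest of the plasma.

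Finally, assembling these bounds I would obtain a differential inequality of the form $\frac{d}{dt}W^N(\mu,R^N(t),t)\le \frac{C\,\mathcal V^N(t)}{R^N(t)}\,Q^N(R^N(t),t)$, uniform in $\mu$. Taking the supremum over $\mu$ and using $\mathcal V^N/R^N=\dot R^N/R^N=\frac{d}{dt}\log R^N$, a Gronwall argument — keeping track, via Lemma~\ref{luc}, of the fact that the shell carries only an $O(R^{1-\eps})$ share of the energy — closes the estimate and yields $Q^N(R^N(t),t)\le C\,Q^N(R^N(t),0)$ with $C$ depending only on $\|f_0\|_{L^\infty}$ and $T$.
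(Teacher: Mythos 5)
First, a caveat: the paper does not actually prove Proposition \ref{propo} --- it explicitly defers to \cite{R3} (see also \cite{CCM2}) --- so the comparison is with the argument given there. Your skeleton (differentiate the local energy along a moving radius, substitute the Vlasov and continuity equations, integrate by parts, observe that the bulk work terms combine into a commutator with $\varphi^{\mu,R}$) is the right one, but the way you move the radius is backwards, and this is not cosmetic. With a \emph{growing} ball, $\dot R^N=\mathcal{V}^N$, the material derivative of $\varphi^{\mu,R^N(t)}(X(t))$ along any characteristic is $\varphi'\bigl[\tfrac{(x-\mu)\cdot v}{R|x-\mu|}-\mathcal{V}^N\tfrac{|x-\mu|}{R^2}\bigr]\geq 0$ on $\supp\varphi'$ (since $\varphi'\leq0$ and the bracket is $\leq(|v|-\mathcal{V}^N)/R\leq0$ there): the weight only increases along trajectories, so you get a \emph{lower} bound for free and must still estimate the inflow from above by $\tfrac{C\mathcal{V}^N}{R}Q^N(R,t)$. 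Your Gronwall then uses $\int_0^t\mathcal{V}^N/R^N=\log R^N(t)$ and delivers $Q^N(R^N(t),t)\leq C\,(R^N(t))^{C}$ with an exponent $C$ you do not control; this is weaker than the claim $CQ^N(R^N(t),0)\sim(R^N(t))^{1-\epsilon}$ unless $C\leq1-\epsilon$, and nothing in your sketch forces that. The proof in \cite{R3} runs the radius the other way: for the target time $t$ one takes $R(s)=R^N(t)+\int_s^t\mathcal{V}^N(\tau)\,d\tau$, decreasing in $s$ from $R(0)\leq2R^N(t)$ down to $R(t)=R^N(t)$. Then the combined kinetic transport term is \emph{nonpositive} (the shrinking support outruns every characteristic), and the total logarithmic variation of the radius is $\log(R(0)/R(t))\leq\log2$, so Gronwall produces a genuine constant; one concludes with the covering bound (\ref{R'/R}).

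The second gap is a circularity. To control the surviving Coulomb term you invoke ``the refined field estimate proved later in Section~4,'' i.e.\ Proposition \ref{propE}. But that proposition is proved using $Q\leq CP^{\beta}$ (eq.\ (\ref{qpiccolo})), which is deduced precisely from Proposition \ref{propo} together with Lemma \ref{luc}; likewise (\ref{Q^N}) in Corollary \ref{coro} rests on Proposition \ref{propo}. The Coulomb boundary term must therefore be handled self-containedly, using only $|\varphi^{\mu,R}(x)-\varphi^{\mu,R}(y)|\leq C|x-y|/R$, $|j^N|\leq\mathcal{V}^N\rho^N$, and the positivity of the local potential energy. Note also that this term is not ``supported near the shell'': it is supported on pairs with $\varphi^{\mu,R}(x)\neq\varphi^{\mu,R}(y)$, which includes $x$ deep inside the ball and $y$ far outside it; for such pairs the gain comes from $|x-y|\geq R$, not from localization to the annulus $R\leq|x-\mu|\leq2R$.
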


\begin{remark}
In the hypotheses of Theorems \ref{th_02} and \ref{th_03} the local energy, defined in (\ref{e}),
satisfies  at time $t$,  in the limit $N\to\infty$, the same properties which hold initially, that is it remains bounded  and it 
grows with $R$ as stated in  Lemma \ref{luc}.
\end{remark}

\subsection{Some preliminary estimates}

We give now some results on system (\ref{chN}).  Hereafter we assume that hypotheses of Theorem \ref{th_02} are satisfied, since the same results will follow from the hypotheses of Theorem \ref{th_03}, by an {\textit{a fortiori}} argument.

The use of the local energy and its properties give us the following estimate:
\begin{lemma}
For any $\mu\in \mathbb{R}^3$ and $R\geq 1$ it holds
\begin{equation*}
\int_{1\leq |\mu-x|\leq R}\frac{\rho^N(x,t)}{|\mu-x|^2}\,dx\leq CQ^N(R,t)^{\frac35}.
\end{equation*}
\label{L1}
\end{lemma}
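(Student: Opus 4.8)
The plan is to bound the density \emph{pointwise} by a power of the local kinetic energy density and then integrate against the singular kernel by Hölder's inequality. Let me abbreviate the kinetic energy density by $k^N(x,t)=\int dv\,|v|^2 f^N(x,v,t)$. Since $f^N$ is transported along the characteristics, $\|f^N(t)\|_{L^\infty}=\|f^N_0\|_{L^\infty}\leq\|f_0\|_{L^\infty}$, so splitting the velocity integral at a threshold $|v|=V$ gives
\begin{equation*}
\rho^N(x,t)=\int f^N\,dv\leq C\,\|f_0\|_{L^\infty}\,V^3+V^{-2}k^N(x,t).
\end{equation*}
Optimizing in $V$ (choosing $V^5\sim k^N(x,t)/\|f_0\|_{L^\infty}$) yields the classical interpolation estimate
\begin{equation*}
\rho^N(x,t)\leq C\,\|f_0\|_{L^\infty}^{2/5}\,k^N(x,t)^{3/5}.
\end{equation*}
This is exactly where the exponent $3/5$ of the statement comes from.

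Next I would insert this bound into the integral and apply Hölder's inequality with the conjugate exponents $p=5/3$ and $q=5/2$:
\begin{equation*}
\int_{1\leq|\mu-x|\leq R}\frac{\rho^N(x,t)}{|\mu-x|^2}\,dx\leq C\left(\int_{1\leq|\mu-x|\leq R}k^N(x,t)\,dx\right)^{3/5}\left(\int_{1\leq|\mu-x|\leq R}\frac{dx}{|\mu-x|^5}\right)^{2/5}.
\end{equation*}
The second factor is harmless precisely because of the inner cutoff $|\mu-x|\geq 1$: in spherical coordinates $\int_1^R r^{-5}r^2\,dr=\int_1^R r^{-3}\,dr\leq C$, uniformly in $R$. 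The lower radius is essential, since $|\mu-x|^{-5}$ fails to be locally integrable in $\R^3$, whereas the upper cutoff at $R$ plays no role in the convergence of this factor.

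It then remains to control $\int_{|\mu-x|\leq R}k^N(x,t)\,dx$ by the local energy. Here I would use that $\varphi^{\mu,R}(x)=1$ for $|x-\mu|\leq R$ by (\ref{b}), together with the nonnegativity of the Coulomb (potential) part of $W^N$, to write
\begin{equation*}
\frac12\int_{|\mu-x|\leq R}k^N(x,t)\,dx\leq\frac12\int\varphi^{\mu,R}(x)\,k^N(x,t)\,dx\leq W^N(\mu,R,t)\leq Q^N(R,t).
\end{equation*}
Combining the three displays proves the claim. I do not anticipate a genuine obstacle: the only real content is the interpolation inequality for $\rho^N$, after which the power $3/5$ propagates unchanged through Hölder; the integrability of the kernel and the reduction to the local energy are routine, relying respectively on the inner radius $1$ and on the positivity of the local potential energy already emphasized in the discussion preceding Proposition \ref{propo}.
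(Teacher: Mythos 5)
Your proof is correct and follows essentially the same route as the paper: the interpolation $\rho^N\lesssim (k^N)^{3/5}$ obtained by splitting the velocity integral at an optimized threshold, Hölder with exponents $5/3$ and $5/2$, the integrability of $|\mu-x|^{-5}$ outside the unit ball, and the control of the local kinetic energy by $W^N\leq Q^N$ via the positivity of the potential part. The only cosmetic difference is that the paper states the interpolation as $\int\rho^{N\,5/3}\leq CQ^N$ before applying Hölder to $\rho^N|\mu-x|^{-2}$, whereas you apply the pointwise bound first; the computations are identical.
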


\begin{proof}

Notice that it is, for any $a>0$ 
\begin{equation*}
\begin{split}
\rho^N(x,t)=&\int  f^N(x,v,t)\,dv\leq \int_{|v|\leq a}  f^N(x,v,t)\,dv\,+\\&\frac{1}{a^2}\int_{|v|>a}\ |v|^2 f^N(x,v,t)\,dv
\end{split}
\end{equation*}
so that, by (\ref{f0}), we have 
\begin{equation}
\rho^N(x,t)\leq C a^3+\frac{1}{a^2}k(x,t)
\end{equation}
where
 $$
 k(x,t)= \int  |v|^2 f^N(x,v,t)\,dv.
 $$
 By minimizing over $a$ and  taking the power $\frac53$ of both members we get 
\begin{equation*}
\rho^N(x,t)^{\frac53} \leq Ck(x,t)
\end{equation*}
which implies,  by integrating over the set $\{x: |\mu-x|\leq R\}$ 
\begin{equation}
\int_{|\mu-x|\leq R}\rho^N(x,t)^{\frac53} \,dx\leq CW^N(\mu, R, t)\leq CQ^N(R,t).\label{ss}
\end{equation}
Notice that we have bounded the local kinetic energy by $W^N,$ which is allowed by the fact that the local potential energy is positive. 
Now we have
\begin{equation}
 \begin{split}
&\int_{1\leq |\mu-x|\leq R}\frac{\rho^N(x,t)}{|\mu-x|^2}\,dx\leq\\& \left(\int_{|\mu-x|\leq R}  \  \rho^N(x,t)^{\frac53}\,dx\right)^{\frac35}\left(\int_{ 1\leq |\mu-x|} \frac{1}{|\mu-x|^{5}}\,dx\right)^{\frac25}
\end{split}
\end{equation} 
and (\ref{ss}) implies
\begin{equation}
 \begin{split}
&\int_{1\leq |\mu-x|\leq R}\frac{\rho^N(x,t)}{|\mu-x|^2}\,dx\leq   CQ^N(R,t)^{\frac35}.\end{split}\label{s'}
\end{equation} 
\end{proof}
\begin{lemma}
Let $R^N(t)$ be defined in (\ref{RN}). Then for any $\mu\in \mathbb{R}^3$ it holds
  \begin{equation}
\int_{3R^N(t)\leq |\mu-x| }\frac{\rho^N(x,t)}{|\mu-x|^2}\,dx\leq C.
\end{equation}\label{L1'}
\end{lemma}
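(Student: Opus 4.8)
The plan is to push the far-field integral back to the initial time via the characteristic flow, where the decay hypotheses (\ref{G})--(\ref{ass}) are available. The geometric heart of the argument is that a particle travels only a bounded distance in $[0,t]$: for any $(x_0,v_0)$ with $v_0\in B(0,N)$ one has, from (\ref{chN}) and the definitions (\ref{mv})--(\ref{RN}),
\[
|X^N(x_0,v_0,t)-x_0|\le \int_0^t |V^N(x_0,v_0,s)|\,ds\le \int_0^t {\mathcal V}^N(s)\,ds = R^N(t)-1 < R^N(t).
\]
Consequently, if the position $x=X^N(x_0,v_0,t)$ at time $t$ satisfies $|\mu-x|\ge 3R^N(t)$, then $|\mu-x_0|\ge |\mu-x|-R^N(t)\ge 2R^N(t)$, and moreover $|\mu-x|\ge |\mu-x_0|-R^N(t)\ge \tfrac12|\mu-x_0|$, so the two kernels are comparable: $|\mu-x|^{-2}\le 4|\mu-x_0|^{-2}$.

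First I would change variables from the time-$t$ phase point $(x,v)$ to the initial point $(x_0,v_0)$. Since the flow (\ref{chN}) is divergence-free in $(x,v)$ (the field is $(v,E^N(x,t))$), it preserves Lebesgue measure, and $f^N(X^N(t),V^N(t),t)=f_0^N(x_0,v_0)$. Writing $\rho^N(x,t)=\int f^N(x,v,t)\,dv$ and using $\supp f_0^N\subset{\mathbb R}^3\times B(0,N)$ from (\ref{ic}), the quantity in the statement becomes
\[
\int_{|\mu-x|\ge 3R^N(t)}\frac{\rho^N(x,t)}{|\mu-x|^2}\,dx
=\int_{\{|\mu-X^N(x_0,v_0,t)|\ge 3R^N(t)\}}\frac{f_0^N(x_0,v_0)}{|\mu-X^N(x_0,v_0,t)|^2}\,dx_0\,dv_0 .
\]
By the geometric observation the domain is contained in $\{|\mu-x_0|\ge 2R^N(t)\}$ and the kernel is bounded by $4|\mu-x_0|^{-2}$; integrating out $v_0$ against the Gaussian factor in (\ref{G}) gives $\int f_0^N(x_0,v_0)\,dv_0\le C\,g(|x_0|)$. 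Hence the whole expression is at most $C\int_{|\mu-x_0|\ge 2R^N(t)} g(|x_0|)\,|\mu-x_0|^{-2}\,dx_0$.

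It then remains to prove $\int_{|\mu-x_0|\ge 2R^N(t)} g(|x_0|)|\mu-x_0|^{-2}\,dx_0\le C$ uniformly in $\mu$ and $N$. Here I would reuse verbatim the lattice decomposition from the proof of Lemma \ref{luc}: cover the domain by unit balls centred at $\mu+i$, $i\in{\mathbb Z}^3$, so that $|\mu-x_0|$ is comparable to $|i|$ on each ball, and, by (\ref{ass}) (after covering each such ball by a bounded number of integer-centred unit balls), $\int_{|\mu+i-x_0|\le 1}g(|x_0|)\,dx_0\le C\min\{1,|\mu+i|^{-2-\epsilon}\}$. This reduces the estimate to $\sum_i |i|^{-2}|\mu+i|^{-2-\epsilon}$, which I split over $\{|i|\le|\mu+i|\}$ and $\{|i|>|\mu+i|\}$ into $\sum_i|i|^{-(4+\epsilon)}$ and $\sum_i|\mu+i|^{-(4+\epsilon)}$; both converge to a constant since $4+\epsilon>3$ (the finitely many terms with $|\mu+i|\le 2$ being controlled by the boundedness of $g$ together with $|i|\ge 1$).

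The step I expect to require the most care is the geometric transfer through the flow: one must check the displacement bound $|X^N-x_0|\le R^N(t)-1$ genuinely holds on the support of $f_0^N$, which is exactly why the velocity supremum in (\ref{mv}) is taken over $(x,v)\in{\mathbb R}^3\times B(0,N)$. This is what makes the factor $3$ in $3R^N(t)$ sufficient to land in the clean region $|\mu-x_0|\ge 2\ge 1$, where the singular kernel is harmless and (\ref{ass}) applies; note $R^N(t)\ge 1$ by (\ref{RN}). The lattice sum, though technical, is routine once Lemma \ref{luc} is available, the only subtlety being the uniformity in $\mu$, which is handled by the splitting into the two index sets.
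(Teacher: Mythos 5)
Your proposal is correct and follows essentially the same route as the paper: pull the integral back to $t=0$ along the measure-preserving characteristic flow, use the displacement bound $|X^N(t)-x_0|\le R^N(t)-1$ to enlarge the domain to $\{|\mu-x_0|\ge 2R^N(t)\}$ and compare the kernels, integrate out the Gaussian in $v_0$ to reduce to $\int g(|x_0|)|\mu-x_0|^{-2}$, and finish with the same lattice covering and the split over $\{|i|\le|\mu+i|\}$ versus $\{|i|>|\mu+i|\}$. The only (immaterial) differences are bookkeeping: your comparison constant is $4$ where the paper uses $9/4$, and you centre the unit balls at $\mu+i$ rather than at the integer lattice directly.
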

\begin{proof}

We notice that if $|\mu-x|\geq 3R^N(t)$ then
\begin{equation}
\left|\mu-X^N(t)\right|\geq |\mu-x|-R^N(t)\geq 2R^N(t)
\end{equation}
and also
\begin{equation}
\left|\mu-X^N(t)\right|\geq |\mu-x|-\frac{|\mu-x|}{3}= \frac{2}{3}|\mu-x|.\label{var}
\end{equation}
 By the invariance of $f$ along the characteristics and by the change of variables $ \left(X^N(x,v,t), V^N(x,v,t)\right)\to (x,v)$ we get
\begin{equation}
\begin{split}
\int_{3R^N(t)\leq |\mu-x|} \frac{\rho^N(x,t)}{|\mu-x|^2}\,dy&=\int_{3R^N(t)\leq |\mu-x|} \frac{f^N(x,v,t)}{|\mu-x|^2}\,dxdv\leq \\& \int_{2R^N(t)\leq |\mu-x|} \frac{f_0^N({x},{v})}{|\mu-X^N(t)|^2}\,dxdv \end{split}\label{kk}
\end{equation}
so that, by (\ref{var}) and using the assumption (\ref{G}), we have 
\begin{equation}
\begin{split}
\int_{3R^N(t)\leq |\mu-x|} \frac{\rho^N(x,t)}{|\mu-x|^2}\,dx&\leq \frac94\int_{2R^N(t)\leq |\mu-x|} \frac{f_0^N(x,v)}{|\mu-x|^2}\,dxdv\leq\\& C\int_{2R^N(t)\leq |\mu-x|} \frac{g(|x|)}{|\mu-x|^2}\,dx.
\end{split}
\end{equation}

Now it is
\begin{equation*}
\begin{split}
&\int_{ 2R^N(t)\leq |\mu-x|} \frac{g(|x|)}{ |\mu-x|^2}\,dx
\leq \sum_{\substack{i\in \mathbb{Z}^3\setminus \{0\}\\ 2R^N(t)\leq |\mu-i|}}\int_{ |i-x|\leq 1} \frac{ g(|x|)}{ |\mu-x|^2}\,dx.
\end{split}
\end{equation*}
Since 
$$
|\mu-x|\geq |\mu-i| - 1\geq |\mu-i|- R^N(t)\geq\frac{|\mu-i|}{2}
$$
from hypothesis (\ref{ass}) it follows
\begin{equation}
\begin{split}
&\int_{ 2R^N(t)\leq |\mu-x|} \frac{g(|x|)}{ |\mu-x|^2}\,dx\leq\\&
\sum_{\substack{i\in \mathbb{Z}^3\setminus \{0\}\\ |\mu-i|\geq 1}}\frac{4}{|\mu-i|^2}\int_{ |i-x|\leq 1}  g(|x|)\,dx\leq 4C_1\sum_{\substack{i\in \mathbb{Z}^3\setminus \{0\}\\ |\mu-i|\geq 1}}\frac{1}{|\mu-i|^2|i|^{2+\epsilon}}.
\end{split}\label{g2}
\end{equation}
By considering in the sum the two subsets of indices $\{i: |\mu-i|\geq |i|\}$ and $\{i:|\mu-i|< |i|\},$ we get:
\begin{equation}
\begin{split}
&\sum_{\substack{i\in \mathbb{Z}^3:|i|\geq1\\ |\mu-i|\geq 1}}\frac{1}{|\mu-i|^2|i|^{2+\epsilon}}\leq \sum_{\substack{i\in \mathbb{Z}^3:\\|i|\geq1}}\frac{1}{|i|^{4+\epsilon}}+ \sum_{\substack{i\in \mathbb{Z}^3:\\  |\mu-i|\geq 1}}\frac{1}{|\mu-i|^{4+\epsilon}}\leq C
\end{split}\label{s''}
\end{equation}
which proves the thesis.
\end{proof}

Now we give a first bound on the electric field $E.$  
\begin{proposition}

\begin{equation}
\|E^N(t)\|_{L^\infty}\leq C_2{\mathcal{V}}^N(t)^{\frac43}Q^N(R^N(t),t)^{\frac13}.
\label{C1}
\end{equation}
\label{prop2}\end{proposition}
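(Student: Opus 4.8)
The plan is to start from the pointwise bound $|E^N(x,t)|\le\int \rho^N(y,t)\,|x-y|^{-2}\,dy$ and to tame this singular integral by cutting $y$-space into three shells centred at $x$: a near shell $\{|x-y|\le\delta\}$, an intermediate shell $\{\delta\le|x-y|\le 3R^N(t)\}$, and a far region $\{|x-y|\ge 3R^N(t)\}$, with the inner radius $\delta>0$ kept free and chosen optimally at the very end. Two complementary bounds on $\rho^N$ drive the estimate. The first is a crude pointwise bound: since $f_0^N$ is supported in $\{|v|\le N\}$, every characteristic keeps speed at most ${\mathcal V}^N(t)$ by (\ref{mv}), so $f^N(\cdot,\cdot,t)$ is supported in velocity inside $B(0,{\mathcal V}^N(t))$ and, by (\ref{f0}), $\rho^N(y,t)\le\frac43\pi\|f_0\|_{L^\infty}{\mathcal V}^N(t)^3\le C\,{\mathcal V}^N(t)^3$. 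The second is the local $L^{5/3}$ bound (\ref{ss}), namely $\int_{|x-y|\le R}\rho^N(y,t)^{5/3}\,dy\le C\,Q^N(R,t)$, which encodes the energy control.

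Then I estimate each shell. On the near shell I insert the pointwise bound and use $\int_{|x-y|\le\delta}|x-y|^{-2}\,dy=4\pi\delta$, obtaining a contribution $\le C\,{\mathcal V}^N(t)^3\,\delta$. On the intermediate shell I apply H\"older with conjugate exponents $5/3$ and $5/2$: the density factor is $\bigl(\int_{|x-y|\le 3R^N(t)}\rho^N(y,t)^{5/3}\,dy\bigr)^{3/5}\le C\,Q^N(3R^N(t),t)^{3/5}$ by (\ref{ss}), while the kernel factor is $\bigl(\int_{|x-y|\ge\delta}|x-y|^{-5}\,dy\bigr)^{2/5}=C\,\delta^{-4/5}$, so the shell contributes $\le C\,Q^N(3R^N(t),t)^{3/5}\,\delta^{-4/5}$. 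The far region is precisely the one controlled by Lemma \ref{L1'}, contributing $\le C$.

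Collecting the three pieces gives $|E^N(x,t)|\le C\,{\mathcal V}^N(t)^3\,\delta+C\,Q^N(3R^N(t),t)^{3/5}\,\delta^{-4/5}+C$, and I minimise the right-hand side over $\delta>0$. Balancing the first two terms at $\delta\sim\bigl(Q^N(3R^N(t),t)^{3/5}/{\mathcal V}^N(t)^3\bigr)^{5/9}$ produces $C\,{\mathcal V}^N(t)^{4/3}\,Q^N(3R^N(t),t)^{1/3}$; the exponents $4/3$ and $1/3$ in (\ref{C1}) are exactly what this optimisation forces. Two bookkeeping points then close the argument: I replace $Q^N(3R^N(t),t)$ by $Q^N(R^N(t),t)$ at the cost of a constant, using that the local energy grows only like $R^{1-\epsilon}$ (Lemma \ref{luc} together with Proposition \ref{propo}), so that tripling the radius is harmless; and I absorb the leftover additive constant from the far region into the main term, using ${\mathcal V}^N(t)\ge\widetilde C>1$. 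I expect the main obstacle to be not any individual shell estimate --- each is supplied by tools already in hand --- but the optimisation itself: one must set up the two density bounds so that the crude velocity-based bound near the singularity and the energy-based $L^{5/3}$ bound away from it balance to give precisely the homogeneities ${\mathcal V}^{4/3}$ and $Q^{1/3}$, and verify that the optimal $\delta$ stays within the admissible range $0<\delta\le 3R^N(t)$.
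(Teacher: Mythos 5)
Your proposal is correct and follows essentially the same route as the paper: the same three-region decomposition with a free inner radius, the pointwise bound $\rho^N\le C\,{\mathcal V}^N(t)^3$ near the singularity, H\"older with exponents $5/3,\,5/2$ against the local energy bound (\ref{ss}) on the intermediate shell, Lemma \ref{L1'} for the far field, and optimisation over the cut radius, which indeed yields $a\sim {\mathcal V}^N(t)^{-5/3}Q^{1/3}$ and the exponents $4/3$ and $1/3$. The only small divergence is in the bookkeeping step $Q^N(3R^N(t),t)\le CQ^N(R^N(t),t)$: the paper gets this directly from the covering inequality (\ref{R'/R}) (positivity of both terms of $W^N$ plus a partition of $\varphi^{\mu,3R}$ into translates of $\varphi^{\cdot,R}$), which is cleaner than invoking Lemma \ref{luc} and Proposition \ref{propo}, since the latter only bounds $Q^N$ at the specific radius $R^N(t)$ and does not by itself compare the two radii at time $t$.
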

\begin{proof}

We have: 
\begin{equation}
|E^N(x,t)|\leq \sum_{k=1}^{3}{\mathcal{J}}_k\label{E3}
\end{equation}
where
$$
{\mathcal{J}}_1=\int_{0< |x-y|\leq a}  \frac{ \rho^N(y,t)}{|x-y|^2}\,dy
$$
with $a<1$ to be suitably chosen in what follows,  
$$
{\mathcal{J}}_2=\int_{a< |x-y|\leq 3R^N(t)}  \frac{ \rho^N(y,t)}{|x-y|^2}\,dy
$$
and
$$
{\mathcal{J}}_3=\int_{|x-y|> 3R^N(t)}  \frac{ \rho^N(y,t)}{|x-y|^2}\,dy.
$$
Let us estimate ${\mathcal{J}}_1$. We have:

\begin{equation*} 
{\mathcal{J}}_1\leq 4\pi \|\rho^N(t)\|_{L^{\infty}}a.
\end{equation*}
Since 
\begin{equation}
\rho^N(x,t)=\int dv\ f^N(x,v,t)\leq C {\mathcal{V}}^N(t)^{3}\label{dens}
\end{equation}
we get
\begin{equation}
{\mathcal{J}}_1\leq  C{\mathcal{V}}^N(t)^{3}a\label{A1}.
\end{equation}

By estimate (\ref{ss}) we get:
\begin{equation}
\begin{split}
{\mathcal{J}}_2(x,t)\leq C&\left(\int_{|x-y|\leq 3R^N(t)}  \  \rho^N(y,t)^{\frac53}\,dy\right)^{\frac35}\left(\int_{ a< |x-y|\leq 3R^N(t)} \frac{1}{|x-y|^5}\,dy\right)^{\frac25} \\&\leq CQ^N(3R^N(t),t)^{\frac35}\left[ a^{-\frac45}+R^N(t)^{-\frac45}\right].  \end{split}\label{i2}
\end{equation}
Now, it is easy to see that
\begin{equation}
Q^N(3R^N(t),t)\leq C Q^N(R^N(t),t).\label{e'}
\end{equation}
Indeed, recalling the definition of the function ${\mathcal{\varphi}}^{\mu,R}$ in the local energy (\ref{e}), for any positive number $R'>R$ 
and for any $\mu \in \mathbb{R}^3$ it is
$$
{\mathcal{\varphi}}^{\mu,R'}(x)={\mathcal{\varphi}}\left(\frac{|x-\mu|}{R'}\right)\leq \sum_{i\in {\mathbb{Z}^3}:|i|\leq \frac{R'}{R}} {\mathcal{\varphi}}\left(\frac{|x-(\mu+iR)|}{R}\right).
$$
Hence, since both terms in the function $W^N$ are positive, by monotony we have:
\begin{equation}
W^N(\mu, R',t)\leq  \sum_{i\in {\mathbb{Z}^3}:|i|\leq  \frac{R'}{R}} W^N(\mu+iR,R,t)\leq  \left(\frac{R'}{R}\right)^3Q^N(R,t).
\label{R'/R}
\end{equation}
This implies (\ref{e'}) which, used in (\ref{i2}), gives
\begin{equation}
{\mathcal{J}}_2(x,t)\leq CQ^N(R^N(t),t)^{\frac35} a^{-\frac45}.
\end{equation}

 The minimum value of ${\mathcal{J}}_1(x,t)+{\mathcal{J}}_2(x,t)$ is attained at 
$$
a=C {\mathcal{V}}^N(t)^{-\frac53}Q^N(R^N(t),t)^{\frac13}
$$
so that we get
\begin{equation}
{\mathcal{J}}_1(x,t)+{\mathcal{J}}_2(x,t)\leq  C {\mathcal{V}}^N(t)^{\frac43}Q^N(R^N(t),t)^{\frac13}. \label{E1}
\end{equation}
Finally by Lemma \ref{L1'} we have 
\begin{equation}
{\mathcal{J}}_3(x,t)\leq C.\label{E4}
\end{equation}

By (\ref{E3}), formulas  (\ref{E1}) and (\ref{E4}) imply the thesis.

\end{proof}

The main estimate on $E^N$ is stated in the following proposition. Its proof, rather lengthy, is given in Section 4. 
\begin{proposition} There exists a positive constant $C_3$ such that, for any $t\in [0,T]:$
$$
\int_0^t|E^N(X^N(s),s)|ds\leq C_3 \left[{\mathcal{V}}^N(t)\right]^\alpha\quad \quad
\forall \,\, \alpha \in  \left(\frac{5-\epsilon}{9}, \frac23\right)
$$
being $\epsilon$ the one in (\ref{ass}).
\label{propE}
\end{proposition}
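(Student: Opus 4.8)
The plan is to bound the time integral along the fixed characteristic by splitting $E^N(X^N(s),s)$ into the three contributions already isolated in the proof of Proposition \ref{prop2}, but now integrating in $s$ \emph{before} optimizing the cut-off radius. Write $E^N(X^N(s),s)=\mathcal{J}_1+\mathcal{J}_2+\mathcal{J}_3$ with inner radius $a$ and outer radius $3R^N(s)$. The far piece is immediate: by Lemma \ref{L1'} one has $\mathcal{J}_3\le C$ uniformly in $s$, hence $\int_0^t\mathcal{J}_3\,ds\le CT\le C$. For the middle piece I would reuse the H\"older bound (\ref{ss}) exactly as in (\ref{i2})--(\ref{e'}), getting $\mathcal{J}_2\le CQ^N(R^N(s),s)^{3/5}a^{-4/5}$; combining Proposition \ref{propo} with Lemma \ref{luc} and the elementary inequality $R^N(s)\le(1+T)\mathcal{V}^N(t)$ gives $Q^N(R^N(s),s)\le C(\mathcal{V}^N(t))^{1-\epsilon}$, so that $\int_0^t\mathcal{J}_2\,ds\le C(\mathcal{V}^N(t))^{3(1-\epsilon)/5}a^{-4/5}$.

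The whole difficulty is concentrated in the near piece $\int_0^t\mathcal{J}_1\,ds=\int_0^t ds\int_{|X^N(s)-y|\le a}|X^N(s)-y|^{-2}\rho^N(y,s)\,dy$, because the crude bound $\|\rho^N\|_\infty\le C(\mathcal{V}^N)^3$ used in (\ref{A1}) makes this term scale like $(\mathcal{V}^N)^3a$, which after balancing against $\mathcal{J}_2$ merely reproduces the pointwise exponent $(5-\epsilon)/3$ and is therefore useless. Here one must genuinely exploit the time integration. The idea is to pass to the relative motion: writing $\rho^N(y,s)\,dy$ through the measure--preserving flow and setting $D(s)=X^N(s)-X^N(\xi,\eta,s)$, Fubini turns the near piece into $\int f_0^N(\xi,\eta)\,\big(\int_0^t\chi(|D(s)|\le a)|D(s)|^{-2}ds\big)\,d\xi\,d\eta$. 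The relative trajectory has bounded relative acceleration, $|\ddot D(s)|\le 2\|E^N(s)\|_\infty$, which by Proposition \ref{prop2} is itself controlled by $\mathcal{V}^N$; convexity of $s\mapsto|D(s)|^2$ near each closest approach then bounds the dwell time inside the ball of radius $a$, so that a fast encounter contributes a dispersive factor rather than the trivial $t\,a^{-2}$.

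The main obstacle is that this convexity is effective only when the relative speed at closest approach is not too small: slow, essentially co--moving particles linger near $X^N(s)$ and are not dispersed. These are controlled not dynamically but by measure, using the strong Gaussian decay $e^{-\lambda|v|^2}$ of $f_0$ in (\ref{G}) together with the spatial decay (\ref{ass}) of $g$, so that the phase--space weight of the particles that are simultaneously close in space and slow in relative velocity offsets their large individual dwell times. Splitting the $(\xi,\eta)$ integration into a fast and a slow regime, with a threshold dictated by the acceleration bound, and summing the contributions over the integer lattice as in Lemmas \ref{luc} and \ref{L1'} (whose convergence is governed by the size of $\epsilon$), produces an estimate of the form $\int_0^t\mathcal{J}_1\,ds\le C(\mathcal{V}^N)^{b}a^{c}$ with $c>0$ and $b$ strictly below the crude value $3$. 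This estimate, whose careful bookkeeping is precisely the length of Section 4, is the refinement of \cite{R3} announced in the Introduction and the genuine heart of the proposition.

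Finally I would optimize. Choosing $a=(\mathcal{V}^N)^{-\theta}$ and equating the exponents of the near and middle contributions balances the two bounds; carrying out the arithmetic with the exponents of $\mathcal{J}_1$ and $\mathcal{J}_2$ yields the optimal rate $(\mathcal{V}^N)^{(5-\epsilon)/9}$, a borderline (logarithmic) loss in the count of close encounters accounting for the open lower endpoint, so that one obtains $\int_0^t|E^N(X^N(s),s)|ds\le C(\mathcal{V}^N(t))^{\alpha_0}$ for any fixed $\alpha_0>(5-\epsilon)/9$. Since $\mathcal{V}^N(t)\ge\widetilde C>1$, every larger exponent $\alpha$ gives a weaker, hence still valid, bound; this delivers the whole interval up to the value $\tfrac{2}{3}$ needed in the subsequent bootstrap, with the constant $C_3=C_3(\alpha)$ degenerating as $\alpha\downarrow(5-\epsilon)/9$.
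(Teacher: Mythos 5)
You have correctly identified the general mechanism --- integrate in time along the characteristic and exploit the dispersion of relative trajectories to beat the pointwise bound of Proposition \ref{prop2}, treating slowly co-moving particles separately --- and this is indeed the skeleton of the paper's Section 4. But the proposal has genuine gaps, and the decisive estimate is asserted rather than derived. First, the dispersive dwell-time argument cannot be run on all of $[0,t]$ at once as you propose: the relative acceleration is only bounded by $2\|E^N\|_{L^\infty}\le 2C_2P^{4/3}Q^{1/3}$ (with $P=\mathcal{V}^N(T)$), so over a time of order one the relative velocity can change by far more than any useful threshold, and the convexity of $|D(s)|^2$ yields a dwell-time bound only while the relative speed stays comparable to its value at the start of the encounter. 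This is precisely why the paper subdivides $[0,T]$ into intervals of length $\Delta$ as in (\ref{Delta}) and proves Lemmas \ref{lemv3} and \ref{lemrect} before any dispersion estimate is used; your sketch omits this subdivision entirely. Second, your mechanism for the slow set is misattributed: the Gaussian decay of $f_0$ in $v$ gives nothing when $V^N(s)$ is of order one, since the co-movers then have velocities of order one and phase-space density $O(1)$. What actually controls this set is the volume $\sim P^{3\gamma}$ of the relative-velocity ball combined with the $L^{5/3}$ density bound coming from the local energy (estimate (\ref{ss})), yielding $\mathcal{I}_1\le CP^{4\gamma/3}Q^{1/3}$; the Gaussian decay is used elsewhere (Corollary \ref{coro}), not here.

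Third, and most importantly, all of the quantitative content --- the dyadic decomposition of the fast set in relative velocity and distance, the bound $C/(\alpha_k l_k)$ on the time integral over each shell, the summation of the shells against the local kinetic energy, and the parameter windows (\ref{gg}) and (\ref{eta}) --- is compressed into the phrase ``careful bookkeeping,'' so the claimed exponents $b<3$ and $c>0$ are never produced. The asserted ``optimal rate'' $(5-\epsilon)/9$ is not the output of any optimization here: in the paper that number is the constraint imposed later in (\ref{b1}) so that $N^{3\alpha}$ dominates the pointwise field bound $N^{(5-\epsilon)/3}$, while the binding requirement in Proposition \ref{propE} is the upper endpoint $2/3$. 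Finally, even with the missing estimates supplied, a single pass of this argument closes only for $\epsilon>13/19$; for $1/15<\epsilon\le 13/19$ the paper must bootstrap the time-averaged field estimate over geometrically growing intervals $\Delta_\ell$ (Proposition \ref{5}, Lemmas \ref{lemv3l} and \ref{lemrectl}), a step for which your proposal has no counterpart. As it stands, the proposal is a plausible outline of the strategy but not a proof.
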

This bound is finer than the analogous one in \cite{R3}, in which it was $\alpha <1.$  The presence of plasma particles having infinite velocities forces us to get a better estimate on the electric field,
hence the upper bound $2/3$ for the parameter $\alpha$ is a key tool in the proof of the convergence of the
iterative scheme in Section 3 (see (\ref{fin})), while the lower bound $(5-\epsilon)/9$ is used in eq. (\ref{b1}). 

% as it will be clear in the estimate (\ref{fin}) in the next section. Notice also that in our proof  $\alpha$ cannot be too small, as it %will be seen in the following formulas (\ref{b0}) and (\ref{b1}). 

As a consequence of this result we have the following
\begin{corollary}
\begin{equation}
{\mathcal{V}}^N(T) \leq C_4 N
\label{V^N}
\end{equation}
\begin{equation}
\rho^N(x, t) \leq C N^{3\alpha}
\label{rho_t}
\end{equation}
\begin{equation}
Q^N(R^N(t), t)\leq CN^{1-\epsilon}. 
\label{Q^N}
\end{equation}
\label{coro}
\end{corollary}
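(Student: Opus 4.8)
The plan is to derive all three bounds from Proposition \ref{propE}, treating (\ref{V^N}) as the master estimate from which (\ref{rho_t}) and (\ref{Q^N}) then follow. First I would integrate the second line of (\ref{chN}) to write $V^N(x,v,t)=v+\int_0^t E^N(X^N(s),s)\,ds$, so that for every $(x,v)$ with $|v|\le N$ and every $s\in[0,t]$,
\begin{equation*}
|V^N(x,v,s)|\le N+\int_0^s|E^N(X^N(s'),s')|\,ds'\le N+C_3[\mathcal V^N(s)]^\alpha\le N+C_3[\mathcal V^N(t)]^\alpha,
\end{equation*}
using Proposition \ref{propE} and the monotonicity of $\mathcal V^N$ in $t$ (the supremum in (\ref{mv}) runs over a growing set). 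Taking the supremum over $s\le t$ and over $|v|\le N$, and recalling the constant $\widetilde C$ in (\ref{mv}), gives the implicit inequality $\mathcal V^N(t)\le\widetilde C+N+C_3[\mathcal V^N(t)]^\alpha$. Since $\alpha<2/3<1$ and $\mathcal V^N(t)$ is finite a priori (the cutoff solution exists on $[0,T]$ by \cite{R3}), I absorb the sublinear term by Young's inequality, $C_3[\mathcal V^N(t)]^\alpha\le\tfrac12\mathcal V^N(t)+C$, and conclude $\mathcal V^N(t)\le 2(\widetilde C+N+C)\le C_4N$ for $N\ge1$, which is (\ref{V^N}), at $t=T$ in particular.

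Given (\ref{V^N}), estimate (\ref{Q^N}) is immediate. From (\ref{RN}) and the monotonicity of $\mathcal V^N$ one has $R^N(t)=1+\int_0^t\mathcal V^N(s)\,ds\le 1+T\,\mathcal V^N(T)\le CN$, and since $R^N(t)\ge1$ I apply Proposition \ref{propo} followed by Lemma \ref{luc} to obtain $Q^N(R^N(t),t)\le CQ^N(R^N(t),0)\le C[R^N(t)]^{1-\epsilon}\le CN^{1-\epsilon}$.

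The improved density bound (\ref{rho_t}) is where the velocity decay is essential: the crude bound (\ref{dens}) only gives $\rho^N\le C[\mathcal V^N]^3\le CN^3$. The refinement comes from the fact that the velocity displacement is now controlled, namely by Proposition \ref{propE} and (\ref{V^N}), $\Delta:=\int_0^t|E^N(X^N(s),s)|\,ds\le C_3[\mathcal V^N(t)]^\alpha\le CN^\alpha$. Changing variables to the time-$t$ velocity $w=V^N(x_0,v_0,t)$ and using conservation of $f^N$ along the characteristics together with hypothesis (\ref{G}) and the boundedness of $g$, the initial velocity satisfies $|v_0|\ge|w|-\Delta$, whence $f^N(x,w,t)\le Ce^{-\lambda(|w|-\Delta)_+^2}$. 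Integrating in $w$, the ball $|w|\le\Delta$ contributes $\le C\Delta^3\le CN^{3\alpha}$, while the Gaussian tail $|w|>\Delta$ contributes $\le C(\Delta^2+1)\le CN^{2\alpha}$; since $N\ge1$ the first term dominates and yields (\ref{rho_t}).

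The main obstacle is establishing (\ref{V^N}) honestly: one must know beforehand that $\mathcal V^N(t)<\infty$ for each fixed $N$, otherwise the implicit inequality is vacuous, and the closure of the bootstrap rests entirely on $\alpha<1$, so the whole scheme depends on the strength of Proposition \ref{propE}. The only delicate point in (\ref{rho_t}) is bookkeeping of the exponent: it is the \emph{product} of the field-estimate exponent $\alpha$ with the velocity dimension $3$ that produces $N^{3\alpha}$, and the constraint $\alpha<2/3$ (hence $3\alpha<2$) is precisely what will later make this density growth compatible with the local-energy growth $N^{1-\epsilon}$ in the convergence argument.
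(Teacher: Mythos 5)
Your proposal is correct and follows essentially the same route as the paper: (\ref{V^N}) via the implicit inequality $\mathcal V^N(t)\le N+C_3[\mathcal V^N(t)]^\alpha$ closed by $\alpha<1$, (\ref{rho_t}) via conservation along characteristics and a split of velocity space into a ball of radius $\sim N^\alpha$ plus a Gaussian tail shifted by the velocity displacement, and (\ref{Q^N}) by combining Proposition \ref{propo}, Lemma \ref{luc} and $R^N(t)\le CN$. The only difference is that you make explicit (absorption by Young's inequality, the a priori finiteness of $\mathcal V^N$ for the cutoff dynamics) what the paper dismisses as obvious.
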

\begin{proof}
The bound on ${\mathcal{V}}^N(T)$ is obvious, and comes from Proposition \ref{propE}, being
\begin{equation}
|V^N(t)|\leq |v|+\int_0^t|E^N(X^N(s),s)|\ ds \label{vel}
\end{equation}
and recalling that $|v|\leq N$ and $\alpha <1.$

Now we prove (\ref{rho_t}). Setting
 $$
 (\bar{x}, \bar{v})=\left(X^N(x,v,t),V^N(x,v,t)\right)$$
by the invariance of the density  $f^N$ along the characteristics we have 
\begin{equation*}
\rho^N(\bar{x}, t) =\int f^N(\bar{x}, \bar{v},t)\ d\bar{v}=\int f_0^N(x,v)\ d\bar{v}.
\end{equation*}
We consider the set $A_1=\{\bar{v}:|\bar{v}|\leq 2C_3(C_4 N)^\alpha\}$ and its complementary set $A_1^c.$ By assumption (\ref{G}) we have:
\begin{equation}
\begin{split}
\rho^N(\bar{x}, t)&\leq\int_{A_1} f_0^N(x,v)\ d\bar{v}+C\int_{A_1^c} e^{-\lambda |v|^2}\ d\bar{v}\\& \leq   CN^{3\alpha}+  C\int_{A_1^c} e^{-\lambda |v|^2}\ d\bar{v}.   \label{f_0}
\end{split}
\end{equation}
 By (\ref{vel}), Proposition \ref{propE} and (\ref{V^N})  it is
$$
|v|\geq |\bar{v}|- C_3(C_4N)^\alpha
$$
and hence for any $\bar{v}\in A_1^c$ it is $|v|\geq \frac{|\bar{v}|}{2}.$
\noindent Hence from (\ref{f_0}) it follows
\begin{equation}
\begin{split}
&\rho^N(\bar{x}, t)\leq CN^{3\alpha}+C \int e^{-\frac{\lambda}{4}|\bar{v}|^2}\ d\bar{v} \leq CN^{3\alpha}.\end{split}
\end{equation}
Finally (\ref{Q^N}) comes from Proposition \ref{propo}, Lemma \ref{luc}, definition (\ref{RN}) and (\ref{V^N}).
 \end{proof}

\subsection{Estimate of the term $|X^N(t)-X^{N+1}(t)|$}

 Let $f_0$ satisfy the assumptions of Theorem \ref{th_02} and let us fix an initial condition $(x,v)$ in the support of $f_0^N.$ We consider the time evolved of the point in the phase space $(x,v)$ in the $N$-th and in the $(N+1)$-th dynamics, that is we consider $\left(X^N(t), V^N(t)\right)$ and $\left(X^{N+1}(t), V^{N+1}(t)\right),$ the solutions to eq. (\ref{chN}) with initial condition $f_0^N$ and $f_0^{N+1}$ respectively. We set
 $$
\delta^N(t)\, = \sup_{(x,v)\in \mathbb{R}^3 \times B(0,N)} \delta^N(x,v,t),
$$
$$
\delta^N(x,v,t)=  |X^N(t)-X^{N+1}(t)|.
$$

\noindent We prove the following result:
\begin{proposition}
For any $t\in [0,T]$ it holds
\begin{equation}
\delta^N(t)\leq  C\int_0^tdt_1\int_o^{t_1}dt_2 \left[ N^{3\alpha}\delta^N(t_2)\left(\left| \log \delta^N(t_2) \right|+1\right)+e^{-\frac{\lambda}{2} N^2}\right].\label{estN}
\end{equation}\label{p4}
\end{proposition}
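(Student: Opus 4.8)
The plan is to integrate the characteristic system (\ref{chN}) twice in time and then bound the resulting difference of fields. Since both $(X^N,V^N)$ and $(X^{N+1},V^{N+1})$ issue from the same datum $(x,v)$ with $|v|\le N$, integrating $\dot V=E$ and then $\dot X=V$ yields
\[
X^N(t)-X^{N+1}(t)=\int_0^t dt_1\int_0^{t_1}dt_2\,\big[E^N(X^N(t_2),t_2)-E^{N+1}(X^{N+1}(t_2),t_2)\big].
\]
Taking absolute values and then the supremum over $(x,v)\in\R^3\times B(0,N)$ on the left already reproduces the double time integral of (\ref{estN}); everything thus reduces to the pointwise estimate $|E^N(X^N(t_2),t_2)-E^{N+1}(X^{N+1}(t_2),t_2)|\le C N^{3\alpha}\delta^N(t_2)(|\log\delta^N(t_2)|+1)+C e^{-\frac{\lambda}{2} N^2}$.

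To obtain this pointwise bound I would express both fields through their flows. Writing $X^N=X^N(x,v,t_2)$ for the tracked trajectory and $Y^N=X^N(x',v',t_2)$ for the integration trajectory, the volume- and $f_0$-preserving change of variables $y=Y^N$ gives $E^N(X^N,t_2)=\int \frac{X^N-Y^N}{|X^N-Y^N|^3}f_0^N(x',v')\,dx'\,dv'$, and analogously for $N+1$. Writing $f_0^{N+1}=f_0^N+f_0^{\mathrm{shell}}$ with $f_0^{\mathrm{shell}}=f_0\,\chi(N<|v'|\le N+1)$ splits the field difference into a \emph{trajectory} term $I_1$ (weight $f_0^N$, difference of the two mutually close Newtonian kernels) and a \emph{shell} term $I_2$ (weight $f_0^{\mathrm{shell}}$, single kernel). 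Every kernel argument is displaced by at most $2\delta^N(t_2)=:2\delta$, since both $X^N,X^{N+1}$ and $Y^N,Y^{N+1}$ differ by at most $\delta^N(t_2)$.

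For $I_1$ I run the classical log-Lipschitz splitting in the distance $d=|X^N-Y^N|$. On $\{d\le 2\delta\}$ I bound the two kernels separately by $d^{-2}$ and, returning to the density, use $\int_{|X^N-y|\le 2\delta}\rho^N(y)|X^N-y|^{-2}dy\le C\|\rho^N\|_{L^\infty}\delta$ together with $\|\rho^N\|_{L^\infty}\le C N^{3\alpha}$ from (\ref{rho_t}); on $\{2\delta<d\le1\}$ the mean-value bound for $w\mapsto w/|w|^3$ produces $\int_{2\delta}^1 r^{-3}r^2\,dr=|\log(2\delta)|+O(1)$, hence the factor $N^{3\alpha}\delta(|\log\delta|+1)$; on $\{d>1\}$ the tail is bounded uniformly in $N$ by Lemmas \ref{L1} and \ref{L1'} and the local-energy bound $Q^N(R^N(t),t)\le CN^{1-\epsilon}$ of (\ref{Q^N}), the resulting power of $N$ being dominated by $N^{3\alpha}$ for $\alpha>(5-\epsilon)/9$. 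For $I_2$ the gain is purely exponential: arguing as for (\ref{rho_t}), the shell velocities stay of order $N$ at time $t_2$, so $\|\rho^{\mathrm{shell}}\|_{L^\infty}\le C N^3 e^{-\lambda N^2}$, while the spatial decay $g$ is inherited; a near/far splitting exactly as in Proposition \ref{prop2} then gives $|I_2|\le C N^{c} e^{-\lambda N^2}\le Ce^{-\frac{\lambda}{2}N^2}$, the polynomial prefactor being absorbed by halving the exponent.

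The main obstacle is $I_1$: one must control the singular kernel difference near the diagonal and, above all, make the far-field tail bounded \emph{uniformly in $N$}, which is precisely where the spatial decay hypothesis (\ref{ass}) and the local-energy bound (\ref{Q^N}) enter; the shell term $I_2$ is comparatively routine, since $e^{-\lambda N^2}$ beats every polynomial loss. Combining $|I_1|+|I_2|$, inserting it into the double time integral, and using that $s\mapsto s(|\log s|+1)$ is increasing for small $s$ to replace $\delta^N(x,v,t_2)$ by its supremum $\delta^N(t_2)$ before taking the supremum over $(x,v)$ on the left, yields (\ref{estN}).
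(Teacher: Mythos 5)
Your proposal is correct in substance and rests on the same three pillars as the paper's proof --- the density bound $\rho^N\le CN^{3\alpha}$ of (\ref{rho_t}), the log-Lipschitz splitting of the Newtonian kernel near the diagonal, and the local-energy/spatial-decay control of the far field --- but it organizes the decomposition differently. The paper inserts the intermediate quantity $E^N(X^{N+1}(t_2),t_2)$ and treats separately a quasi-Lipschitz estimate for one field at two points ($\mathcal{F}_1$, via the sets $S_1,S_2,S_3$ around the midpoint) and a comparison of the two fields at the same point ($\mathcal{F}_2$, which after the Lagrangian change of variables produces the three terms $\mathcal{I}_1,\mathcal{I}_2,\mathcal{I}_3$, the last being the domain mismatch $S^N\setminus S^{N+1}$ created by the Eulerian cutoff $|\bar X-y|\ge 2\delta^N$). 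You pass to Lagrangian variables once and for all and split only into a trajectory term and a shell term; this merges $\mathcal{F}_1$, $\mathcal{I}_1$, $\mathcal{I}_3$ and $\mathcal{F}_2'$ into a single kernel-difference integral and eliminates the domain-mismatch bookkeeping, a small genuine simplification (the paper's route, in exchange, isolates a self-contained quasi-Lipschitz property of $E^N$ that it reuses in Section 3 for $|E^N-E|$). Two points in your write-up need tightening. First, the kernel arguments are displaced by up to $2\delta$, so the annulus on which you apply the mean value theorem should start at $d\ge 4\delta$ rather than $2\delta$ to keep the intermediate point at distance $\ge d/2$; this only changes constants. Second, and more substantively, the mean-value machinery degenerates when $\delta^N(t_2)$ is not small; the paper covers this case with the separate term $\mathcal{F}_1'$, bounded in (\ref{b0})--(\ref{b1}) by $2\|E^N\|_{L^\infty}\le CN^{(5-\epsilon)/3}\le CN^{3\alpha}\le CN^{3\alpha}\delta^N$, which is exactly where the lower bound $\alpha>(5-\epsilon)/9$ is consumed. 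You quote that threshold but attach it to the far tail of $I_1$, where the relevant power is only $Q^{3/5}\sim N^{3(1-\epsilon)/5}$ and a weaker condition on $\alpha$ would suffice; you should make the dichotomy $\delta^N(t_2)\ge 1$ versus $\delta^N(t_2)<1$ explicit, since (\ref{estN}) is asserted for all $t$.
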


\begin{proof}
We have
\begin{equation}
\begin{split}
&\delta^N(x,v,t) =\\&  \, \left| \int_0^t dt_1 \int_0^{t_1} dt_2 \, \Big[E^N\left(X^N(t_2), t_2\right)
 -  E^{N+1}\left(X^{N+1}(t_2), t_2\right)\Big] \right|\leq   \\
&\int_0^t dt_1 \int_0^{t_1} dt_2 \, \left[ \mathcal{F}_1(X^N(t_2), X^{N+1}(t_2),t_2) + \mathcal{F}_2(X^N(t_2), X^{N+1}(t_2),t_2) \right],
\end{split}
\label{iter_}
\end{equation}
where
\begin{equation}
\mathcal{F}_1(X^N(t), X^{N+1}(t),t) = \left|E^N\left(X^N(t), t\right)
-E^N\left(X^{N+1}(t), t\right)\right| 
\end{equation}
and
\begin{equation}
\mathcal{F}_2(X^{N+1}(t),t) = \left|E^N\left(X^{N+1}(t), t\right)
-E^{N+1}\left(X^{N+1}(t), t\right)\right|.
\end{equation}

\bigskip

\noindent
{\textbf{Estimate of the term}}  $\mathcal{F}_1$ 

\medskip

We prove a quasi-Lipschitz property for $E^N.$ We consider the difference $|E^N(x,t)-E^N(y,t)|$ at two generic points $x$ and $y,$ and set: 
\medskip

\begin{equation*}
\mathcal{F}_1'(x,y,t)= |E^N(x,t)-E^N(y,t)|\chi (|x-y|\geq 1)
\end{equation*}
and
\begin{equation*}
\mathcal{F}_1''(x,y,t)= |E^N(x,t)-E^N(y,t)|\chi (|x-y|< 1)
\end{equation*}
with $\chi$ the characteristic function. Hence it is
\begin{equation}
\mathcal{F}_1(X^N(t), X^{N+1}(t),t)=\mathcal{F}_1'(X^N(t), X^{N+1}(t),t)+\mathcal{F}_1''(X^N(t), X^{N+1}(t),t). \label{F}\end{equation}
\medskip
By Proposition \ref{prop2} and Corollary \ref{coro} we have
\begin{equation}
\begin{split}
\mathcal{F}_1'(x,y,t)\leq&\ |E^N(x,t)|+|E^N(y,t)|\leq C N^{\frac{5-\epsilon}{3}}
\end{split}\label{b0}
\end{equation}
By the range of the parameter $\alpha$ it is  $\frac{5-\epsilon}{3}<3\alpha,$ so that we get
\begin{equation}
\begin{split}
\mathcal{F}_1'(x,y,t)\leq CN^{3\alpha}\leq CN^{3\alpha}|x-y|. \label{b1}
\end{split}
\end{equation}

\medskip
Let us now estimate the term $\mathcal{F}_1''(x,y,t).$ 

We put $\zeta=\frac{x+y}{2}$ and consider the sets: 

\begin{equation}
\begin{split}
&S_1=\{z:|\zeta-z|\leq 2|x-y|\}, \\& S_2=\Big\{z:2|x-y|\leq |\zeta-z|\leq\frac{4}{|x-y|}\Big\}\\& S_3=\Big\{z:|\zeta-z|\geq \frac{4}{|x-y|}\Big\}.\label{S'}
\end{split}
\end{equation}
We have:
\begin{equation}
\begin{split}
&\mathcal{F}_1''(x,y,t)\leq  \int_{S_1\cup S_2\cup S_3} \left|\frac{1}{|x-z|^2}-\frac{1}{|y-z|^2}\right| \rho^N(z,t) \,dz.
\end{split} \label{s}
\end{equation}

By (\ref{rho_t}) and the definition of $\zeta$ we have
\begin{equation}
\begin{split}
&\int_{S_1} \left|\frac{1}{|x-z|^2}-\frac{1}{|y-z|^2}\right| \rho^N(z,t) \,dz\leq\\& CN^{3\alpha}\int_{S_1} \left(\frac{1}{|x-z|^2}+\frac{1}{|y-z|^2}  \right)\,dz\leq C N^{3\alpha} |x-y|.
\end{split}
\label{Ai1}\end{equation}
Let us pass to the integral over the set $S_2.$ By the Lagrange theorem, there exists $\xi_z$ such that $\xi_z=\theta x +(1-\theta)y$ and $\theta\in [0,1]$ (depending on $z$), for which
\begin{equation} 
\begin{split}
\int_{S_2} \left|\frac{1}{|x-z|^2}-\frac{1}{|y-z|^2}\right|& \rho^N(z,t) \, dz\leq C |x-y|\int_{S_2} \frac{\rho^N(z,t)}{|\xi_z-z|^3}\,dz\\&\leq  CN^{3\alpha}|x-y|\int_{S_2} \frac{1}{|\xi_z-z|^3}\,dz
\end{split}\label{D'}
\end{equation}
Since in $S_2$ it results $|\xi_z-z|\geq \frac{|\zeta -z|}{2}$, we have
\begin{equation}
\begin{split}
&\int_{S_2} \frac{1}{|\xi_z-z|^3} \,dz\leq 8 \int_{S_2} \frac{1}{|\zeta-z|^3} \,dz \leq C  (|\log |x-y||+1) 
\end{split}
\end{equation}
and 
\begin{equation}
\int_{S_2} \left|\frac{1}{|x-z|^2}-\frac{1}{|y-z|^2}\right|  \rho^N(z,t) \, dz \leq C N^{3\alpha} |x-y| \, (|\log |x-y||+1).
\label{D}
\end{equation}

For the last integral over $S_3,$ again by the Lagrange theorem, we have for some $\xi_z=\theta x +(1-\theta)y$ and $\theta\in [0,1],$  
\begin{equation}
 \int_{ S_3} \left|\frac{1}{|x-z|^2}-\frac{1}{|y-z|^2}\right| \rho^N(z,t)\, dz\leq C |x-y| \int_{ S_3} \frac{\rho^N(z,t)}{|\xi_z-z|^3}\,dz.\label{S}
 \end{equation}
Notice that, since $z\in S_3$ and $|x-y|<1$ by definition of ${\mathcal{F}}_1''$,  it is 
$$
|\xi_z - z|\geq  |\zeta - z| - |x-y|,
$$
and
$$
\frac{1}{|\xi_z - z|} \leq \frac{1}{|\zeta - z| - |x-y|} = \frac{1}{\frac34|\zeta - z| +\frac14 |\zeta-z|- |x-y|}
\leq \frac43 \frac{1}{|\zeta-z|},
$$
 then we have
\begin{equation}
\begin{split}
\int_{ S_3} \frac{\rho^N(z,t)}{|\xi_z-z|^3}\,dz \leq C\int_{|\zeta-z|\geq 4 } \frac{\rho^N(z,t)}{|\zeta-z|^3}\,dz.
\end{split}
\end{equation}
Lemmas  \ref{L1}, \ref{L1'} and (\ref{e'}) imply
\begin{equation}
\int_{ S_3} \frac{\rho^N(z,t)}{|\xi_z-z|^3}\,dz\leq C\int_{|\zeta-z|\geq 4 } \frac{\rho^N(z,t)}{|\zeta-z|^3}\,dz \leq CQ^N(R^N(t),t)^{\frac35} + C
\end{equation}
and by (\ref{Q^N}) we get
\begin{equation}
\int_{S_3} \frac{\rho^N(z,t)}{|\xi_z-z|^3}\,dz\leq CN^{\frac35(1-\epsilon)}.\label{S2}
\end{equation}
Using this estimate in (\ref{S}) and going back to (\ref{s}), by (\ref{Ai1}) and (\ref{D}) we have proved that
\begin{equation}
\mathcal{F}_1''(x,y,t)\leq CN^{3\alpha}|x-y|\, (|\log |x-y||+1). \label{f1}
\end{equation}

\bigskip
 In conclusion, recalling (\ref{F}), estimates (\ref{b1}) and (\ref{f1}) show that
\begin{equation}
\mathcal{F}_1(X^N(t), X^{N+1}(t),t)\leq  CN^{3\alpha}\delta^N(t)\left(1+|\log \delta^N(t)|\right). \label{f1'}
\end{equation}

\bigskip

\noindent
{\textbf{Estimate of the term}}  $\mathcal{F}_2$ 

\medskip

Putting $\bar{X}=X^{N+1}(t),$  we have:
\begin{equation}
\mathcal{F}_2(\bar X,t)\leq \mathcal{F}_2'(\bar X,t)+\mathcal{F}_2''(\bar X,t),
\label{I_2}
\end{equation}
where
\begin{equation}
\mathcal{F}_2'(\bar X, t)=\left|\int_{|\bar X - y|\leq 2\delta^N(t)} \frac{  \rho^N(y,t)-\rho^{N+1}(y,t)}{|\bar X-y|^2} \, dy \right|
\end{equation}
\begin{equation}
\mathcal{F}_2''(\bar X, t)=\left|\int_{2\delta^N(t)\leq |\bar X - y| } \frac{  \rho^N(y,t)-\rho^{N+1}(y,t)}{|\bar X-y|^2} \, dy \right|
\end{equation}
By estimate (\ref{rho_t}) we get
\begin{equation}
\mathcal{F}_2'(\bar X, t)\leq \int_{|\bar X - y|\leq 2\delta^N(t)} \frac{  \rho^N(y,t)+\rho^{N+1}(y,t)}{|\bar X-y|^2} \, dy \leq CN^{3\alpha}\delta^N(t).\label{F2'}
\end{equation}
Now we estimate the term $\mathcal{F}_2''.$ By the invariance of $f^N(t)$ along the characteristics, making a change of variables, we decompose the integral as follows:
\begin{equation}
\begin{split}
\mathcal{F}_2''(\bar X, t)=  \left| \int_{S^N(t)} d y \, dw\,\frac{f_0^N( y, w)}{|\bar X-  Y^N(t)|^2}\ -\int_{S^{N+1}(t)}dy \, dw\,\frac{f_0^{N+1}(y,w)}{|\bar X-Y^{N+1}(t)|^{2}} \right|
  \end{split}
\label{I_2''}
\end{equation}
where we put, for $i=N, N+1,$ $$(Y^i(t), W^i(t))= (X^i(y,w,t), V^i(y,w,t))$$ 
$$
S^{i}(t)=\{( y, w):2\delta^N(t)\leq|{\bar X}- Y^i(t)| \}.
$$
We notice that it is
\begin{equation} 
\mathcal{F}_2''(\bar X, t)\leq \mathcal{I}_1(\bar X, t)+\mathcal{I}_2(\bar X, t)+\mathcal{I}_3(\bar X, t)\label{l_i}
\end{equation}
where
\begin{equation}
\mathcal{I}_1(\bar X, t)= \int_{S^N(t)} d y  
 \int d w \left| \frac{1}{|\bar X -  Y^N(t)|^{2}}
-\frac{1}{|\bar X -  Y^{N+1}(t)|^{2}} \right| f_0^N( y,w),
\end{equation}
\begin{equation}
\mathcal{I}_2(\bar X, t)= \int_{S^{N+1}(t)} d y   
  \int d w \, \frac{ \left| f_0^N( y,w) - f_0^{N+1}( y, w) \right|}{|\bar X -  Y^{N+1}(t)|^{2}}  \, 
  \end{equation}
 \begin{equation}
  \mathcal{I}_3(\bar X, t)=
  \int_{S^N(t)\setminus S^{N+1}(t)}dy\int dw\,\frac{f_0^N(y,w)}{\left|\bar{X}-Y^{N+1}(t)\right|^2}.
  \end{equation}
We start by estimating $\mathcal{I}_1(\bar X, t).$ 
  By the Lagrange theorem 
 \begin{eqnarray}
&&\mathcal{I}_1(\bar X, t)\leq C |Y^{N}(t)- Y^{N+1}(t)| \int_{S^N(t)} dy\int dw \, \frac{ f_0^N(y,w)}{|\bar X - \xi^N(t) |^{3}}  \,
\end{eqnarray}
where $\xi^N(t)= \theta Y^{N}(t)+ (1-\theta) Y^{N+1}(t)$  for $\theta\in [0,1]$.
 By putting 
\begin{equation*}
\left(\bar{y}, \bar{w}\right)= \left(Y^N(t), W^N(t)\right)
\end{equation*}
and
$$
\bar S^N(t) =\{(\bar y, \bar w):  (y, w)\in S^N(t)  \},
$$
we get
\begin{equation}
\begin{split}
\mathcal{I}_1(\bar X, t)\leq & \,C \delta^N(t)\int_{ \bar S^N(t)}d\bar{y}\int d\bar{w} \, \frac{ f^N(\bar{y},\bar{w}, t)}{|\bar X - \xi^N(t) |^{3}}.\end{split}\label{bar}
\end{equation}

If $(y,w)\in S^N(t)$ then\begin{equation*}
 |\bar X -  \xi^N(t)| > |\bar X - \bar{y}| - |\bar{y} -Y^{N+1}(t)|\geq|\bar X - \bar{y}|-\delta^N(t)\geq \frac{|\bar X - \bar{y}|}{2}\label{>>}
\end{equation*}
which, by (\ref{bar}), implies
\begin{equation}
\begin{split}
\mathcal{I}_1(\bar X, t) \leq \ &C\delta^N(t)\int_{ \bar S^N(t)}d\bar{y}\int d\bar{w} \, \frac{ f^N(\bar{y},\bar{w}, t)}{|\bar X - \bar{y} |^{3}}\,=C\delta^N(t)\int_{ \bar S^N(t)}d\bar{y}\, \frac{ \rho^N(\bar{y}, t)}{|\bar X - \bar{y} |^{3}}
\end{split}.
\end{equation}
Now we consider the sets $$
A_1= \left\{(\bar y,\bar w):2 \delta^N(t)\leq\left|\bar{X}-\bar{y}\right|\leq  \frac{4}{\delta^N(t)}\right\}
$$
$$
A_2= \left\{(\bar y,\bar w): 1\leq\left|\bar{X}-\bar{y}\right|\leq 3R^N(t)\right\} $$
$$
 A_3=  \left\{(\bar y, \bar w):3R^N(t)\leq  |\bar{X}-\bar{y}|\right\}.
$$
Then it is $\bar S^N(t)\subset \bigcup_{i=1,2,3} A_i$ 
and
\begin{equation}
\mathcal{I}_1(\bar X, t) \leq C \delta^N(t)\sum_{i=1}^3\int_{  A_i}d\bar{y}\frac{ \rho^N(\bar{y}, t)}{|\bar X - \bar{y} |^{3}}.
\end{equation}
We estimate the integral over $A_1$ as we did in (\ref{D'}), the one over $A_2$ by means of Lemma \ref{L1} and the last one by means of Lemma \ref{L1'}, yielding
\begin{equation}
\begin{split}
\mathcal{I}_1(\bar X, t) \leq C\delta^N(t)\left[N^{3\alpha} |\log \delta^N(t)|+Q^N\left(3R^N(t),t\right)^{\frac35}+1\right].
\end{split}\label{i1}
\end{equation}
Equation (\ref{e'}) implies that
\begin{equation}
\mathcal{I}_1(\bar X, t) \leq C \delta^N(t)\left[N^{3\alpha}|\log \delta^N(t)|+Q^N\left(R^N(t),t\right)^{\frac35}+1\right]
\end{equation}
and by (\ref{Q^N}) in Corollary \ref{coro} we get
\begin{equation}
\begin{split}
\mathcal{I}_1(\bar X, t) \leq CN^{3\alpha} \delta^N(t)\left[|\log \delta^N(t)|+1\right]\end{split}.\label{i1'}
\end{equation}

We estimate now the term $\mathcal{I}_2(\bar X, t).$ By the definition of $f_0^i,$ for $i=N, N+1,$ and by (\ref{G}) it follows 
\begin{equation}
\begin{split}
&\mathcal{I}_2(\bar X, t)=
 \int_{S^{N+1}(t)} d y   
  \int_{} d w\,  \frac{  f_0^{N+1}( y, w) }{|\bar X -  Y^{N+1}(t)|^{2}} \chi\left(N\leq |w|\leq N+1\right) \leq\\& C_0\,e^{-\lambda N^2}\int_{S^{N+1}(t)} dy \int_{} dw\,\, \frac{g(|y|) }{|\bar X - Y^{N+1}(t)|^{2}} \chi\left(|w|\leq N+1\right).
\end{split} \label{AA} \end{equation}
We evaluate the integral over $S^{N+1}(t)$ by considering the sets  $$
B_1=   \left\{(y,w): |\bar{X}-Y^{N+1}(t)|\leq4R^{N+1}(t)\right\} $$

$$
 B_2=  \left\{(y,w):4R^{N+1}(t)\leq  |\bar{X}-Y^{N+1}(t)|\right\},
$$
so that 
\begin{equation}
S^{N+1}(t)\subset \bigcup_{i=1,2}B_i. \label{si}
\end{equation}
By putting 
\begin{equation*}
(\bar{y}, \bar{w})=\left(Y^{N+1}(t), W^{N+1}(t)\right),
\end{equation*}
 by (\ref{V^N}) it is $|\bar{w}|\leq CN,$ so that we have
\begin{equation}
\begin{split}
&\int_{B_{1}} dy \int_{} dw\, \frac{g(|y|) }{|\bar X - Y^{N+1}(t)|^{2}}\chi\left(|w|\leq N+1\right)\leq\\&C \int_{|\bar{w}|\leq C N} d\bar{w}\int_{ |\bar{X}-\bar{y}|\leq 4R^{N+1}(t)} d\bar{y}\, \frac{1 }{|\bar X - \bar{y}|^{2}}\leq CN^3R^{N+1}(t)\leq CN^4 .\label{g5}
\end{split}\end{equation}

For the integral over the set $B_2,$ we observe that, if $|\bar{X}-Y^{N+1}(t)|\geq 4R^{N+1}(t),$ then $|\bar{X}-y|\geq 3R^{N+1}(t)$ and  
\begin{equation*}
|\bar{X}-Y^{N+1}(t)|\geq |\bar{X}-y|-R^{N+1}(t)\geq \frac23|\bar{X}-y|.
\end{equation*} Hence
\begin{equation}
\begin{split}
&\int_{B_2 } dy \int_{} dw\, \frac{g(|y|) }{|\bar X - Y^{N+1}(t)|^{2}}\chi\left(|w|\leq N+1\right)\leq \\&\frac94\int_{ |\bar{X}-y|\geq 3R^{N+1}(t)} dy \int_{|w|\leq N+1} dw\, \frac{g(|y|) }{|\bar X -y|^{2}}\leq\\& CN^3\int_{|\bar{X}-y|\geq 3R^{N+1}(t)}  dy\, \frac{g(|y|) }{|\bar X -y|^{2}}.
\end{split}\end{equation}
The last integral is smaller than the analogous one already estimated in (\ref{g2}) and (\ref{s''}),  and is bounded by a constant. Hence we can conclude that
\begin{equation}
\int_{B_2 } dy \int_{} dw\, \frac{g(|y|) }{|\bar X - Y^{N+1}(t)|^{2}}\chi\left(|w|\leq N+1\right)\leq CN^3.\label{g6}
\end{equation}

Going back to (\ref{AA}), by (\ref{si}), (\ref{g5}) and (\ref{g6}) we have
\begin{equation}
\mathcal{I}_2(\bar X, t)\leq Ce^{-\lambda N^2}\left(N^4+N^3\right)\leq  Ce^{-\frac{\lambda}{2} N^2}.
\label{I}
\end{equation}
Finally, we estimate the term $\mathcal{I}_3(\bar X, t).$ If $(y,w)\in S^N(t),$ then
$$
\left|\bar{X}-Y^{N+1}(t)\right|\geq \left|\bar{X}-Y^{N}(t)\right| -\delta^N(t)\geq \delta^N(t),
$$
so that
 \begin{equation}
\begin{split}
 & \mathcal{I}_3(\bar X, t)=
  \int_{S^N(t)\setminus S^{N+1}(t)}dy\int dw\,\frac{f_0^N(y,w)}{\left|\bar{X}-Y^{N+1}(t)\right|^2}\leq\\&
\frac{1}{[\delta^N(t)]^2}  \int_{\left(S^{N+1}(t)\right)^c}dy\int dw\,f_0^N(y,w).
 \end{split} \end{equation}
 If $(y,w)\in \left(S^{N+1}(t)\right)^c,$ then $$\left|\bar{X}-Y^{N}(t)\right|\leq \left|\bar{X}-Y^{N+1}(t)\right|+ \delta^N(t)\leq 3\delta^N(t).$$ 
 Hence, by putting 
 \begin{equation*}
 (\bar{y}, \bar{w}) = \left(Y^N(t), W^N(t)\right)
 \end{equation*}
  we get
 \begin{equation}
\begin{split}
  \mathcal{I}_3(\bar X, t)\leq &
\,\frac{1}{[\delta^N(t)]^2}  \int_{\left|\bar{X}-\bar{y}\right|\leq 3\delta^N(t)} d\bar{y}\int d\bar{w}f^N(\bar{y}, \bar{w}, t)\leq\\&\frac{1}{[\delta^N(t)]^2}  \int_{ \left|\bar{X}-\bar{y}\right|\leq 3\delta^N(t)}d\bar{y}\,\rho(\bar{y},t)\leq \\& C\frac{N^{3\alpha}}{[\delta^N(t)]^2}  \int_{ \left|\bar{X}-\bar{y}\right|\leq 3\delta^N(t)}d\bar{y}\,\leq CN^{3\alpha}\delta^N(t).\end{split}\label{I3}
  \end{equation}
Going back to (\ref{l_i}), by (\ref{i1'}), (\ref{I}) and (\ref{I3}) we have
\begin{equation}
\mathcal{F}_2''(\bar X, t)\leq C\left[N^{3\alpha} \delta^N(t)\left(|\log \delta^N(t)|+1\right)+e^{-\frac{\lambda}{2} N^2}\right], \label{F2''}
\end{equation}
so that, by (\ref{I_2}), (\ref{F2'}) and (\ref{F2''}), we get
\begin{equation}
\mathcal{F}_2(\bar X, t)\leq C\left[N^{3\alpha} \delta^N(t)\left(|\log \delta^N(t)|+1\right)+e^{-\frac{\lambda}{2} N^2}\right].\label{f}
\end{equation}
Finally, (\ref{iter_}), (\ref{f1'}) and (\ref{f}) conclude the proof of the proposition.

\end{proof}

\bigskip

\section{Proofs of Theorems \ref{th_02} and \ref{th_03}}

\bigskip

\subsection{Proof of Theorem \ref{th_02}}

We put $V^i(t)=V^i(x,v,t)$ for $i=N, N+1$ and define
\begin{equation*}
\eta^N(t)= \sup_{(x,v)\in \mathbb{R}^3 \times B(0,N)} |V^N(t)-V^{N+1}(t)|
\end{equation*}
and
\begin{equation*}
\sigma^N(t)= \max\{\delta^N(t),\eta^N(t)\}.
\end{equation*}
Proposition \ref{p4} enables us to state the following result:
 \begin{proposition}
There exists a constant $C>1$ such that
\begin{equation}
\sup_{t\in [0,T]}\sigma^N(t)\leq C^{-CN}.
\end{equation}
\label{p5}
\end{proposition}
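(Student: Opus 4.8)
The plan is to convert the double–time–integral bound of Proposition \ref{p4} into a \emph{linear} comparison problem, exploiting two facts: the offending factor $h(r):=r\bigl(1+|\log r|\bigr)$ is an increasing modulus on $(0,1)$, and $\delta^N$ is controlled by a \emph{second} integration in time. First I would record the companion single–integral bound for the velocity difference. Integrating the field difference $|E^N(X^N(s),s)-E^{N+1}(X^{N+1}(s),s)|\le \mathcal F_1+\mathcal F_2$ once, using the estimates (\ref{f1'}) and (\ref{f}) already proved, gives
\begin{equation*}
\eta^N(t)\le C\int_0^t\Big[N^{3\alpha}\,\delta^N(s)\big(1+|\log\delta^N(s)|\big)+e^{-\frac\lambda2 N^2}\Big]\,ds ,
\end{equation*}
while $\delta^N(t)\le\int_0^t\eta^N(s)\,ds$ by definition; together these reproduce Proposition \ref{p4}. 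The structural point is that $\delta^N$ therefore obeys an inequality that is \emph{second order} in time, so the natural comparison object grows exponentially at a rate equal to the \emph{square root} of the coefficient, i.e. $\sqrt{N^{3\alpha}\cdot N^2}=N^{(3\alpha+2)/2}$, and since $\alpha<2/3$ this exponent is strictly below $2$.

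Next I would build an explicit barrier. Set $m=\tfrac\lambda2 N^2$ and
\begin{equation*}
\Psi(t)=K\,e^{-m+\beta t},\qquad \beta=\sqrt{2C\,N^{3\alpha}(1+m)}\ \sim\ N^{(3\alpha+2)/2},
\end{equation*}
with $C$ the constant of Proposition \ref{p4} and $K$ a fixed constant. Since $\beta T\sim N^{(3\alpha+2)/2}=o(N^2)$, for $N$ large $\Psi$ is increasing and stays below $1$ on $[0,T]$, whence $1+|\log\Psi(t)|=1+m-\beta t-\log K\le 1+m$. Plugging $\Psi$ into the right–hand side of Proposition \ref{p4}, using this bound, the monotonicity of $h$, and $\int_0^t dt_1\int_0^{t_1}e^{-m+\beta t_2}\,dt_2\le \beta^{-2}e^{-m+\beta t}$, the nonlinear term contributes at most $\tfrac{CN^{3\alpha}(1+m)}{\beta^2}\Psi(t)=\tfrac12\Psi(t)$ by the choice of $\beta$, while the forcing contributes at most $C'e^{-m}\le \tfrac12 K e^{-m}\le\tfrac12\Psi(t)$ once $K\ge 2C'$. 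Hence $\Psi$ is a supersolution of the integral inequality, strict for $t>0$.

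Since $\delta^N(0)=0<\Psi(0)$ and $\delta^N$ is continuous, a first–crossing argument closes the loop: at a putative first time $t^*$ with $\delta^N(t^*)=\Psi(t^*)$ one has $\delta^N\le\Psi<1$ on $[0,t^*]$, so monotonicity of $h$ gives $\delta^N(t^*)\le C\,(\text{double integral of }N^{3\alpha}h(\delta^N))+\text{forcing}\le C\,(\text{double integral of }N^{3\alpha}h(\Psi))+\text{forcing}<\Psi(t^*)$, a contradiction. Therefore $\delta^N(t)\le\Psi(t)\le K e^{-m+\beta T}$ on $[0,T]$, and the exponent $-m+\beta T\sim-\tfrac\lambda2 N^2$ yields $\delta^N\le e^{-cN^2}$ for $N$ large. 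Feeding this back into the single–integral bound for $\eta^N$ produces only a polynomial prefactor $\sim N^{(3\alpha+2)/2}$ times $\Psi$, so $\eta^N\le e^{-c'N^2}$ as well; hence $\sigma^N=\max\{\delta^N,\eta^N\}\le e^{-cN^2}\le C^{-CN}$ after choosing $C$, the finitely many small $N$ being absorbed into the constant.

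The hard part is exactly the interplay exposed above. A crude first–order Gronwall treatment would replace $|\log\delta^N|$ by its worst value $\sim m\sim N^2$ and produce an amplification rate $N^{3\alpha}\cdot N^2=N^{3\alpha+2}$, an exponent $>2$ that would \emph{overwhelm} the super–exponentially small forcing $e^{-\frac\lambda2 N^2}$; likewise a global Osgood estimate degrades to an $O(1)$ bound over the fixed time $T$ because the rate $N^{3\alpha}$ is so large. It is the double integration — equivalently, reading the bound as second order and taking the square root of the rate — that lowers the effective exponent to $(3\alpha+2)/2$, which is below $2$ \emph{precisely} when $\alpha<2/3$. Getting this bookkeeping right, and verifying that the barrier remains below $1$ so that the log–modulus stays monotone, is the crux; the rest is the routine supersolution verification.
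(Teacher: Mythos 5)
Your argument is correct, and it rests on exactly the same quantitative mechanism as the paper's proof: the double time integration halves the effective amplification exponent from $3\alpha+2$ to $\tfrac32\alpha+1<2$, which is what allows the forcing $e^{-\frac{\lambda}{2}N^2}$ to survive, and this is precisely where the upper bound $\alpha<\tfrac23$ from Proposition \ref{propE} enters. The packaging, however, is genuinely different. The paper first \emph{linearizes} the non-Lipschitz modulus via the convexity inequality $r(|\log r|+1)\leq r|\log a|+a$ with the choice $a=e^{-\frac{\lambda}{2}N^2}$ (treating the case $\delta^N\geq 1$ separately through the crude bound $\delta^N\leq CN$), and then iterates the resulting linear double-integral inequality $k$ times, obtaining the series $\sum_i C^iN^{(3\alpha+2)i}T^{2i}/(2i)!\leq e^{\sqrt{C}TN^{\frac32\alpha+1}}$ plus a remainder that is killed by taking $k>N^{3\alpha+2}$. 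Your barrier $\Psi(t)=Ke^{-m+\beta t}$ with $\beta\sim N^{(3\alpha+2)/2}$ is essentially the closed form of that series, and your first-crossing argument replaces the iteration; keeping $h(r)=r(1+|\log r|)$ intact and using its monotonicity together with $|\log\Psi|\leq 1+m$ on the barrier substitutes for the convexity linearization. What your route buys is a one-shot verification with no remainder term and the automatic confinement $\delta^N\leq\Psi<1$, which disposes of the $\delta^N\geq 1$ case for free; the small price is that the first-crossing argument needs $t\mapsto\delta^N(t)$ to be continuous, which holds because the family $\delta^N(x,v,\cdot)$ is equi-Lipschitz (velocities are bounded by $C_4N$), a point worth stating explicitly. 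Your treatment of $\eta^N$ by feeding the bound on $\delta^N$ back into the single time integral matches the paper's, and the resulting $e^{-cN^2}$ decay is, as in the paper, more than enough for $C^{-CN}$.
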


\begin{proof}  We start the proof for $\delta^N(t).$ To prove this proposition we need to iterate the estimate (\ref{estN}), inserting into the integral the same estimate for $\delta^N(t_2)$ with $t_2 \leq t.$ Unfortunately, being $E^N$ not Lipschitz continuous, estimate (\ref{estN}) is not linear in $\delta^N(t_2).$ However, it is easily seen that for any $r\in (0,1)$ and $a\in (0,1)$ the following inequality holds by convexity: 
$$
 r(|\log r|+1)\leq r|\log a|+a.
 $$
Hence, for $\delta^N(t_2)< 1,$ Proposition \ref{p4} gives us, for any $a<1,$
\begin{equation}
\delta^N(t) \leq C \int_0^t dt_1 \int_0^{t_1} dt_2\left[N^{3\alpha} \left( \delta^N(t_2)|\log a|+a\right)+e^{-\frac{\lambda}{2} N^2}\right].\label{ee}
\end{equation}
On the other side, were $\delta^N(t_2)\geq 1,$ Corollary \ref{coro} and Proposition \ref{propE} would provide a bound like 
\begin{equation}
\sup_{t\in[0,T]}\delta^N(t)\leq C N
\label{delta-N<CN}
\end{equation} 
and in that case estimate (\ref{estN}) would be simply
\begin{equation}
\delta^N(t)\leq C \int_0^t dt_1 \int_0^{t_1} dt_2\left[ N^{3\alpha}\log N \, \delta^N(t_2)+e^{-\frac{\lambda}{2} N^2}\right].
\end{equation}

In case $\delta^N(t_2)< 1,$ we choose $a =e^{-\frac{\lambda}{2} N^2}$ in (\ref{ee}), yielding
\begin{equation}
\begin{split}
\delta^N(t)\leq \,& C \int_0^t dt_1 \int_0^{t_1} dt_2\left[ N^{3\alpha+2}\delta^N(t_2)+N^{3\alpha}e^{-\frac{\lambda}{2} N^2}\right]\leq\\&  C \int_0^t dt_1 \int_0^{t_1} dt_2\left[ N^{3\alpha+2}\delta^N(t_2)+e^{-\frac{\lambda}{4} N^2}\right].
\end{split}\label{de}
\end{equation}
Note that the bound for $\delta^N(t_2)\geq 1$ is included in that for $\delta^N(t_2)< 1$.

\noindent Now we are in condition to iterate inequality (\ref{estN}), by inserting in the integral the same estimate for $\delta^N(t_2).$ We make $k$ iterations up to time $t_{2k+2}$ and at the last step we use the estimate (\ref{delta-N<CN}). Notice that at any step we get a double factorial at the denominator, due to the double time integration, obtaining
\begin{equation}
\delta^N(t)\leq CN\left[e^{-\frac{\lambda}{4} N^2}\sum_{i=1}^{k}\frac{C^iN^{(3\alpha+2)i}T^{2i}}{(2i)!}+\frac{C^kN^{(3\alpha+2)k}T^{2k}}{(2k)!}\right].
\end{equation}
The latter term is exponentially vanishing as $N\to \infty,$ provided $k$ has been chosen sufficiently large in function of $N,$ (for instance $k>N^{(3\alpha+2)}).$ The former term can be bounded by the series, that is
\begin{equation}
\sum_{i=1}^{k}\frac{C^iN^{(3\alpha+2)i}T^{2i}}{(2i)!}\leq e^{\sqrt{C}T N^{\left(\frac{3}{2}\alpha+1\right)}}.
\end{equation}
Hence
\begin{equation}
\delta^N(t)\leq CNe^{-\frac{\lambda}{4} N^2}e^{\sqrt{C}T N^{\left(\frac{3}{2}\alpha+1\right)}}+C^{-N}.\label{fin}
\end{equation}
We stress the fact that the double factorial, coming from the fact that the characteristics equation is a second order equation, allows us to halve  the exponent of the series. On the other hand, by Proposition \ref{propE}, it is $\frac{3}{2}\alpha+1< 2,$ so that we have
\begin{equation}
\delta^N(t)\leq C^{-CN}.
\end{equation}

The same arguments used in Proposition \ref{p4} allow us to prove, in analogy with (\ref{de}), that
 \begin{equation}
\begin{split}
\eta^N(t)&\leq C \int_0^t dt_1\left[ N^{3\alpha+2}\delta^N(t_1)+e^{-\frac{\lambda}{2} N^2}\right]\leq \\&CN^{3\alpha+2} \int_0^t dt_1\int_0^{t_1}dt_2\,\eta^N(t_2)+\int_0^tdt_1e^{-\frac{\lambda}{4} N^2},
\end{split}
\end{equation}
so that we can proceed as above in proving that
\begin{equation}
\eta^N(t)\leq C^{-CN},\label{fffin}
\end{equation}
which concludes the proof of the proposition.
\end{proof}

In the previous Proposition we have shown that 
$$
\max\left\{\sum_{N=1}^\infty \delta^N(t), \sum_{N=1}^\infty \eta^N(t)\right\}=C.
$$
 This implies that the sequences $\{X^N(x,v,t)\}$ and $ \{V^N(x,v,t)\}$ are Cauchy sequences and then, for any fixed $(x,v),$ they are bounded uniformly in $N.$
  Indeed, let us fix an initial datum $(x,v)$ and choose a positive integer $N_0$ such that $N_0=\hbox{intg}(|v|+C),$ where $\hbox{intg}(a)$ represents the integer part of the number $a$ and $C$ is assumed sufficiently large. Then, for any $N>N_0,$ by Proposition \ref{p5} and (\ref{V^N}) in Corollary \ref{coro}, we have 
\begin{equation}
\begin{split}
|V^N(t)-v|\leq\,& |V^{N_0}(t)-v|+\sum_{k=N_0}^{N-1}\eta^k(t)\leq\\& |V^{N_0}(t)-v|+C\leq |v| +CN_0 \end{split}
\end{equation}
which implies, by the choice of $N_0,$
\begin{equation}
|V^N(t)|\leq C(|v|+1)\label{N_0}
\end{equation}
and hence
\begin{equation}
|X^N(t)|\leq |x|+ C\left(|v|+1\right).\label{N_1}
\end{equation}
Since for any fixed $(x,v)$ the sequences $\{X^N(t)\},$ $\{V^N(t)\}$, and consequently $\{f^N(t)\}$, are  Cauchy sequences, they converge, as $N\to \infty,$ to some limit functions, which we call $X(x,v,t)$,  $V(x,v,t)$,  $f(x,v,t)$.

Properties (\ref{N_0}) and (\ref{N_1}) allow to show that the functions $f^N(t)$ enjoy properties (\ref{dect}) and (\ref{asst}), with constants $C$ and $\bar{\lambda}$ independent of $N.$ Indeed, (\ref{dect}) can be proved by (\ref{N_0}), observing that
\begin{equation}\begin{split}
f^N(X^N(t), V^N(t),t)=&f^N_0(x,v)\leq C_0e^{-\lambda |v|^2}g(x)\\& \leq  Ce^{-\bar{\lambda}|V^N(t)|^2}.\label{G2}
\end{split}\end{equation}
To prove that $\rho^N(t)$ satisfies (\ref{asst}) uniformly in $N,$ we fix $i\in{\mathbb{Z}}^3\setminus{\{0\}}$ and we decompose the velocity space into the sets $S_i$ and $S_i^c,$ where
$$
S_i=\left\{v\in{\mathbb{R}}^3: |v|\geq a_i= \sqrt{\frac{2(2+\epsilon)\ \log |i|}{\bar{\lambda}}}\right\}.
$$ 
Hence for any such $i$ it is
\begin{equation}
\begin{split}
\int_{|i-x|\leq1}\rho^N(x,t)\ dx=&\int_{|i-x|\leq1} dxdv f^N(x,v,t)=\\&\int_{|i-x|\leq1}dx\left[\int_{S_i}f^N(x,v,t)dv+\int_{S_{i}^c}f^N(x,v,t)dv\right].\label{g0}
\end{split}
\end{equation}
By (\ref{G2}) and the choice of $a_i$ we get  
 \begin{equation}
\begin{split}
\int_{|i-x|\leq1}dx&\int_{S_i}dvf^N(x,v,t)\leq C\int_{|i-x|\leq1}dx\int_{S_i}e^{-\bar{\lambda} |v|^2}dv\leq\\&e^{-\frac{\bar{\lambda}}{2} a_i^2}\int_{|i-x|\leq1}dx\int_{}e^{-\frac{\bar{\lambda}}{2} |v|^2}dv\leq  \frac{C}{|i|^{2+\epsilon}} .\end{split}\label{g1}
\end{equation}
On the other side, by (\ref{N_1}) we have
\begin{equation}
\begin{split}
&\int_{|i-x|\leq1}dx\int_{S_{i}^c}dvf^N(x,v,t)\leq\int_{|i-x|\leq Ca_i}\rho_0(x)\,dx\leq\\&
 C \sum_{\substack{ k\in \mathbb{Z}^3\\ |k|\leq Ca_i}}\int_{| i+k-x|\leq 1}g(x)\,dx\leq C\sum_{\substack{ k\in \mathbb{Z}^3\\ |k|\leq Ca_i}}\frac{1}{|i+k|^{2+\epsilon}}.
\end{split}\end{equation}
Since $|k|\leq Ca_i,$ by the definition of $a_i$ it follows that for large $|i|$ it is $|i+k|\geq \frac{|i|}{2},$ and then
\begin{equation*}
\sum_{\substack{ k\in \mathbb{Z}^3\\ |k|\leq Ca_i}}\frac{1}{|i+k|^{2+\epsilon}}\leq 2^{2+\epsilon}\sum_{\substack{ k\in \mathbb{Z}^3\\ |k|\leq Ca_i}}\frac{1}{|i|^{2+\epsilon}}\leq C \frac{(\log|i|)^{\frac32}}{|i|^{2+\epsilon}},
\end{equation*}
so that we get
\begin{equation}
\int_{|i-x|\leq1}dx\int_{S_{i}^c}dvf^N(x,v,t) \leq  C\, \frac{(\log|i|)^{\frac32}}{|i|^{2+\epsilon}}
\end{equation}
which, together with (\ref{g0}) and (\ref{g1}), proves (\ref{asst}).
This implies that also the limit function $f(x,v,t)$ satisfies the same properties (\ref{dect}) and (\ref{asst}). 

\noindent It remains to prove that the couple $\left(X(x,v,t), V(x,v,t)\right)$ satisfies the right equation, that is we have to show that $|E^N(x,t)- E(x,t)|\to 0$ as $N\to \infty$,  with $E(x,t)$ defined in (\ref{Eq}).
To prove this, we note first that from estimate (\ref{G2}) it follows
\begin{equation}
\| \rho^N(t)\|_{L^\infty}\leq C.\label{lim}
\end{equation}
The term $|E^N(x,t)- E(x,t)|$  can be handled as the term $\mathcal{F}_2$ in the previous section, by using the bound (\ref{lim}), yielding
\begin{equation}
|E^N(x,t)- E(x,t)|\leq C\left[ \delta^N(t)\left(1+|\log \delta^N(t)|\right)+e^{-\frac{\lambda}{2} N^2}\right]
\end{equation}
which is infinitesimal as $N\to \infty,$ by Proposition \ref{p5}.
Uniqueness could be proved along the same lines and this
concludes the proof of the Theorem \ref{th_02}. 

\bigskip

\subsection{Proof of Theorem \ref{th_03}}

To prove Theorem \ref{th_03} we only need to prove that (\ref{dec}) holds for any $t\in [0,T],$ under the hypotheses (\ref{Ga}) and (\ref{asp}), since existence and uniqueness of the solution has already been proved in the preceding subsection. To this purpose, we write again:
\begin{equation}\begin{split}
f^N(X^N(t), V^N(t),t)=f^N_0(x,v)\leq f_0(x,v)\leq C_0e^{-\lambda |v|^2}g(|x|).\label{G2'}
\end{split}\end{equation}
Calling $C^*$ the constant appearing in (\ref{N_1}), we see that 
\begin{equation}
|x|\geq \left|X^N(t)\right|-C^*(|v|+1).\label{x}
\end{equation}
Now we consider the two possible cases:
\begin{equation}
C^*(|v|+1)\leq \frac12 \left|X^N(t)\right|
\end{equation}
and
\begin{equation}
C^*(|v|+1)\geq \frac12 \left|X^N(t)\right|.
\end{equation}
In the first case, by (\ref{x}) it is $|x|\geq\frac12 \left|X^N(t)\right|$ and consequently, 
by the properties of $g$ defined in (\ref{asp})
 we have
\begin{equation}
g(|x|)\leq g\left(\frac12\left|X^N(t)\right|\right)\leq Cg\left(\left|X^N(t)\right|\right).
\end{equation}
From here, going back to (\ref{G2'}) and recalling (\ref{N_0}), there exists $\lambda'>0$ such that
\begin{equation}\begin{split}
f^N(X^N(t), V^N(t),t)\leq  Ce^{-\lambda' \left|V^N(t)\right|^2}g\left(\left|X^N(t)\right|\right).\label{G'}
\end{split}\end{equation}
Let us now consider the second case. By using again (\ref{N_0}) in (\ref{G2'}), we have
\begin{equation*}\begin{split}
f^N(X^N(t), V^N(t),t)\leq&\, C_0e^{-\frac{\lambda}{2} |v|^2}e^{-\lambda'|V^N(t)|^2}g(|x|)\leq\\& Ce^{-\frac{\lambda}{2} |v|^2}e^{-\lambda'|V^N(t)|^2}.\label{}
\end{split}\end{equation*}
Notice that in the case at hand it is $|v|\geq C\left|X^N(t)\right|$ and hence
\begin{equation*}\begin{split}
f^N(X^N(t), V^N(t),t)\leq Ce^{- C\left|X^N(t)\right|^2}e^{-\lambda'|V^N(t)|^2}.\label{}
\end{split}\end{equation*}
Since  for any positive $r$  it results   $e^{-r^2}\leq \frac{C}{r^{2+\epsilon}}$,  we have also in this case
\begin{equation}\begin{split}
f^N(X^N(t), V^N(t),t)\leq  Ce^{-\lambda' \left|V^N(t)\right|^2}g\left(\left|X^N(t)\right|\right)\label{g'}
\end{split}\end{equation}
provided the constant in (\ref{asp}) is sufficiently large.

Estimates (\ref{G'}) and (\ref{g'}) conclude the proof of Theorem  \ref{th_03}.

\section{Proof of Proposition \ref{propE}}
\subsection{Proof of Proposition \ref{propE} for $\epsilon>13/19$}

This proposition will be proved in complete analogy to what done in \cite{R3}, with the only
effort to lower the exponent of the estimate, from $1$ to $2/3$.
We repeat here the proof, choosing accurately  some parameters, 
and from now on we simplify the notation skipping the index $N$. We also put
\begin{equation}
 P={\mathcal{V}}^N(T)
 \label{P}
 \end{equation}
 \begin{equation}
 Q=\sup_{t\in[0,T]}Q^N(R^N(t),t).
 \end{equation}
We define
\begin{equation}
\beta=1-\epsilon,
\end{equation}
being $\epsilon>1/15$ the parameter in (\ref{ass}), consequently it is $\beta<14/15$.
We divide the proof in a first part in which $\beta<6/19$ (that is, $\epsilon>13/19$), which
does not require an iterative method on a suitable time average, and after we improve the result
up to $\beta<14/15$ (that is, $\epsilon>1/15$) by an iterative method.
 
We fix a time interval
\begin{equation}
\Delta=\frac{1}{4C_2P^{\frac43-\gamma}Q^{\frac13}}  \label{Delta}
\end{equation}
where $C_2$ is the constant in (\ref{C1}) and $\gamma$ is any number satisfying
\begin{equation}
\frac{4}{3}\beta <\gamma<\frac{2-\beta}{4}
\label{gg}
\end{equation} 
(the reason for this range of the parameter $\gamma$ will be clear in the following;
we remark that when $\beta\to \frac{6}{19}$ the lower bound tends to the upper bound). 
We note that such choice  (\ref{Delta}) assures that  $\Delta \ll T$ (taking the constant $\widetilde C$
in (\ref{mv}) suitably large); indeed  it is $\gamma<\frac43$ 
and $P$ is defined to be greater than  $\widetilde C$. 

Let us consider two solutions to the $N$-truncated system, $\left(X(t),V(t)\right)$ and $\left(Y(t),W(t)\right).$ The following results, whose proofs are given in the Appendix, hold.
\begin{lemma}\label{lemv3}
 Let $t'\in [0,T].$
$$
\hbox{If} \qquad |V(t')-W(t')|\leq P^{\gamma} $$
then 
\begin{equation}
\sup_{t\in [t',t'+\Delta]}|V(t)-W(t)|\leq 2P^{\gamma}.
\end{equation}

\medskip

$$
\hbox{If} \qquad|V(t')-W(t')|\geq P^{\gamma} 
$$
then 
\begin{equation}
\inf_{t\in [t',t'+\Delta]}|V(t)-W(t)|\geq \frac12P^{\gamma}
\end{equation}
\end{lemma}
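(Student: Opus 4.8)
The plan is to control the evolution of the velocity difference $D(t)=|V(t)-W(t)|$ over the short time interval of length $\Delta$ using the characteristics equations (\ref{chN}) together with the quasi-Lipschitz bound on the electric field already established in Proposition \ref{prop2} and refined in the proof of Proposition \ref{p4}. The key observation is that the choice of $\Delta$ in (\ref{Delta}) is engineered precisely so that the total change in velocity over $[t',t'+\Delta]$ is small compared to the threshold $P^\gamma$. First I would write, for any two characteristics,
\begin{equation*}
\frac{d}{dt}\left(V(t)-W(t)\right)=E^N(X(t),t)-E^N(Y(t),t),
\end{equation*}
so that
\begin{equation*}
\left|D(t)-D(t')\right|\leq \int_{t'}^{t}\left|E^N(X(s),s)-E^N(Y(s),s)\right|\,ds.
\end{equation*}

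The main point is then to bound the integrand. A crude bound suffices here: by Proposition \ref{prop2} and the definitions (\ref{P}) it is $\|E^N(t)\|_{L^\infty}\leq C_2 P^{4/3}Q^{1/3}$, so that
\begin{equation*}
\left|E^N(X(s),s)-E^N(Y(s),s)\right|\leq 2C_2 P^{\frac43}Q^{\frac13}.
\end{equation*}
Integrating over an interval of length $\Delta$ and inserting the definition (\ref{Delta}) gives
\begin{equation*}
\left|D(t)-D(t')\right|\leq 2C_2 P^{\frac43}Q^{\frac13}\,\Delta=\frac{1}{2}P^{\gamma}.
\end{equation*}
This is exactly the estimate that both halves of the lemma require, and it explains why the factor $P^{-\gamma}$ was inserted into $\Delta$: the velocity difference cannot change by more than $\tfrac12 P^\gamma$ during one such time step.

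With this single inequality in hand, both conclusions follow by the triangle inequality. For the first case, if $D(t')\leq P^\gamma$ then $D(t)\leq D(t')+\tfrac12 P^\gamma\leq \tfrac32 P^\gamma\leq 2P^\gamma$ for all $t\in[t',t'+\Delta]$. For the second case, if $D(t')\geq P^\gamma$ then $D(t)\geq D(t')-\tfrac12 P^\gamma\geq \tfrac12 P^\gamma$, giving the stated infimum bound. The argument is uniform in the pair of characteristics, so taking suprema or infima over all pairs poses no difficulty.

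The part that requires the most care is not the logical structure but the justification of the crude field bound in the regime of the lemma. In the later iterative scheme (for the full range $\beta<14/15$) the effective bound on $\|E^N\|_{L^\infty}$ carried through the argument involves $P^{4/3-\gamma}$ rather than $P^{4/3}$, which is precisely the power appearing in $\Delta$; I would therefore expect the honest proof in the Appendix to replace the naive $\|E^N\|_{L^\infty}$ estimate by the sharper time-averaged control of $\int|E^N(X(s),s)|\,ds$ coming from Proposition \ref{propE} itself, so that the constant $\widetilde C$ chosen large in (\ref{mv}) guarantees $\Delta\ll T$ and the bound $|D(t)-D(t')|\leq\tfrac12 P^\gamma$ survives with the gained factor. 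Checking that the range (\ref{gg}) of $\gamma$ is consistent with this sharper estimate, and that no hidden dependence on $N$ sneaks into the constants, is the genuine obstacle; the triangle-inequality bookkeeping is routine.
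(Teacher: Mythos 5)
Your argument is exactly the paper's: the Appendix proof also writes $|V(t)-W(t)|$ via the integrated field along both characteristics, bounds the integrand by $2C_2P^{4/3}Q^{1/3}$ using Proposition \ref{prop2}, and concludes from the definition (\ref{Delta}) of $\Delta$ that the change is at most $\tfrac12 P^\gamma$. Your closing worry is unfounded for this lemma — the crude $L^\infty$ bound is precisely what the paper uses here; the sharper time-averaged control only enters later, for Lemma \ref{lemv3l} over the longer intervals $\Delta_\ell$.
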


\medskip

\begin{lemma}\label{lemrect}
Let $t'\in [0,T]$  and assume that $|V(t')-W(t')|\geq P^\gamma.$
Then  there  exists $t_0\in [t',t'+\Delta]$ such that for any $t\in [t',t'+\Delta]:$
$$
|X(t)-Y(t)|\geq \frac{P^\gamma}{4}|t-t_0|.
$$
\end{lemma}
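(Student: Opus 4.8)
The plan is to exploit the two preceding results, namely the lower bound on the relative velocity from Lemma \ref{lemv3} and the smallness of the time window $\Delta$, to show that the relative velocity $V(t)-W(t)$ cannot turn by a large angle over $[t',t'+\Delta]$; once the direction of the relative velocity is essentially frozen, the relative position $X(t)-Y(t)$ moves monotonically along a nearly fixed axis, so it can vanish at most at a single instant $t_0$ and must grow linearly away from $t_0$. Concretely, set $r(t)=X(t)-Y(t)$, so that $\dot r(t)=V(t)-W(t)$ and $\ddot r(t)=E^N(X(t),t)-E^N(Y(t),t)$. Since we are in the regime $|V(t')-W(t')|\geq P^\gamma$, Lemma \ref{lemv3} gives the uniform lower bound $|\dot r(t)|\geq \tfrac12 P^\gamma$ on the whole interval $[t',t'+\Delta]$.

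First I would estimate how much the \emph{direction} of $\dot r$ can change. The change in $\dot r$ over a subinterval is $\int E^N(X(s),s)-E^N(Y(s),s)\,ds$, which by Proposition \ref{prop2} and Corollary \ref{coro} is bounded in size by $C\Delta\,\|E^N\|_{L^\infty}\leq C\Delta P^{4/3}Q^{1/3}$. Comparing this to the definition (\ref{Delta}) of $\Delta$, the total variation of $\dot r$ across the interval is $O(P^\gamma)$ — in fact, by choosing the constant in $\Delta$ as in (\ref{Delta}), it is a small fraction of the lower bound $\tfrac12 P^\gamma$ on $|\dot r|$. Hence there is a fixed unit vector $e$ (say $e=\dot r(t')/|\dot r(t')|$) such that $\dot r(t)\cdot e \geq c P^\gamma$ throughout the interval, for a suitable constant $c>0$. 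This is the heart of the argument: the relative velocity keeps a definite positive component along the single direction $e$.

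With this in hand, the scalar function $\phi(t):=r(t)\cdot e$ satisfies $\dot\phi(t)=\dot r(t)\cdot e\geq cP^\gamma>0$, so $\phi$ is strictly increasing and therefore has a unique zero $t_0\in\mathbb{R}$ (if $\phi$ has no zero in $[t',t'+\Delta]$ one takes $t_0$ to be the appropriate endpoint). For any $t$ in the interval, the mean value theorem gives $|\phi(t)|=|\phi(t)-\phi(t_0)|\geq cP^\gamma|t-t_0|$, and since $|X(t)-Y(t)|=|r(t)|\geq |r(t)\cdot e|=|\phi(t)|$, we conclude $|X(t)-Y(t)|\geq cP^\gamma|t-t_0|$, which is the desired estimate (with the constant $c$ absorbed into the stated $\tfrac14$ after fixing the numerical constant in $\Delta$).

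The main obstacle I anticipate is bookkeeping the constants so that the turning estimate genuinely beats the velocity lower bound: one must verify that $\Delta\|E^N\|_{L^\infty}$ is controlled by $P^\gamma$ with a small enough constant, which is exactly why $\Delta$ in (\ref{Delta}) carries the factor $P^{-(4/3-\gamma)}Q^{-1/3}$ and why the constant $\tfrac14$ is chosen the way it is. A secondary technical point is handling the case where the zero $t_0$ of $\phi$ falls outside $[t',t'+\Delta]$; there the monotonicity of $\phi$ still yields the linear lower bound with $t_0$ taken at the nearer endpoint, so no generality is lost.
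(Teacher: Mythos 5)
Your argument is correct, but it follows a genuinely different route from the paper's. The paper chooses $t_0$ as the minimizer of $|\Gamma(s)|:=|X(s)-Y(s)|$ on $[t',t'+\Delta]$, writes $\Gamma(s)$ as its linearization $\bar\Gamma(s)=\Gamma(t_0)+\dot\Gamma(t_0)(s-t_0)$ plus a double time integral of the field difference, bounds that remainder by $\frac{P^\gamma}{4}|s-t_0|$ using (\ref{C1}) and (\ref{Delta}), and then extracts $|\bar\Gamma(s)|\geq|\dot\Gamma(t_0)||s-t_0|\geq\frac{P^\gamma}{2}|s-t_0|$ from the expansion of $|\bar\Gamma(s)|^2$, the key geometric point being that the cross term $\Gamma(t_0)\cdot\dot\Gamma(t_0)(s-t_0)$ is nonnegative at a minimum (whether interior or at an endpoint). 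You instead freeze the direction $e=\dot r(t')/|\dot r(t')|$, observe that $\dot r(t)\cdot e\geq|\dot r(t')|-2\|E^N\|_{L^\infty}\Delta\geq P^\gamma-\frac{P^\gamma}{2}=\frac{P^\gamma}{2}$ by (\ref{C1}) and (\ref{Delta}), and let $t_0$ be the (unique) zero of the monotone scalar $\phi(t)=r(t)\cdot e$, or the nearer endpoint if no zero exists; then $|r(t)|\geq|\phi(t)|\geq\frac{P^\gamma}{2}|t-t_0|$. Your projection argument is somewhat more elementary: it avoids identifying the minimizer, avoids the case analysis on the sign of the cross term, and in fact yields the constant $\frac12$ rather than $\frac14$ because no triangle-inequality loss occurs. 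One small imprecision: the field drift $2C_2P^{4/3}Q^{1/3}\Delta$ equals $\frac{P^\gamma}{2}$ exactly under (\ref{Delta}), so it is not a ``small fraction'' of the lower bound $\frac12 P^\gamma$ from Lemma \ref{lemv3}; what saves the argument is that you compare the drift against the hypothesis $|\dot r(t')|\geq P^\gamma$ at the initial time, not against $\frac12 P^\gamma$. Both proofs consume exactly the same inputs (Proposition \ref{prop2}, the definition of $\Delta$, and the hypothesis on $|V(t')-W(t')|$), so nothing is lost or gained in generality.
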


\bigskip

\bigskip

\noindent Let us divide the interval $[0,T]$ into $n$ sub-intervals $[t_i,t_{i+1}],$  $i=0,1,...,n-1$,
 with $t_0=0,$ $t_n=T$ and  $\frac12 \Delta \leq t_{i+1}-t_i \leq \Delta.$ Hence it is:
\begin{equation}
\int_0^t | E(X(s),s)| \, ds=\sum_{i=0}^{n-1}\int_{t_i}^{t_{i+1}}| E(X(s),s)| \, ds.
\end{equation}
Fixing the index $i$, we consider the time evolution of the system over the time interval 
$[t_i, t_{i+1}].$ 
We have
\begin{equation}
|E(X(t),t)|\leq \int dydw \ \frac{ f(y, w,t)}{|X(t)- y|^2}
\label{Ei}
\end{equation}
and we decompose the phase space $(y,w)\in \mathbb{R}^3\times \mathbb{R}^3$ in the following way, being $\gamma $ the positive parameter introduced in (\ref{Delta}), (\ref{gg}):
\begin{equation}
A=\{{(y,w)}: |X(t_i)-{y}|\leq 4R(T)\}
\label{T1}
\end{equation}
\begin{equation}
B_1= \{ {(y,w)}: |V(t_i)-{w}|\leq P^\gamma \}\label{S1}
\end{equation}
\begin{equation}
B_2=\{ {(y,w)}: \  |V(t_i)-{w}|> P^\gamma \}.\label{S3}
\end{equation}

Hence we have, for any $t\in [t_i,t_{i+1}]$,
\begin{equation}
|E(X(t),t)|\leq \sum_{j=1}^3{\mathcal{I}}_j(X(t))\label{sum}
\end{equation}
where for $ j=1,2$
\begin{equation*}
{\mathcal{I}}_j(X(t))=  \int _{A\cap B_j}dydw \frac{ f(y, w, t)}{|X(t)- y|^2} \end{equation*}
and 
\begin{equation*}
{\mathcal{I}}_3(X(t))= \int _{A^c}dydw \frac{ f(y, w, t)}{|X(t)-y|^2}
\end{equation*}
being
 \begin{equation*}
A^c=\{{(y,w)}: |X(t_i)-{y}| > 4R(T)\}.
\end{equation*}
Let us start by the first integral. Proceeding as in the proof of (\ref{E1}), with ${\mathcal{I}}_1(X(t))$ in place of ${\mathcal{J}}_1(X(t),t)+{\mathcal{J}}_2(X(t),t),$ and enlarging the region of integration to
$A'= \{{(y,w)}: |X(t)-{y}|\leq 5R(T)\}$,  we get
\begin{equation}
{\mathcal{I}}_1(X(t))\leq  C P^{\frac43\gamma}Q^{\frac13}. \label{I01}
\end{equation}

\bigskip

Now we analyse  the term ${\mathcal{I}}_2.$ 
We introduce the sets $C_{k}(t)$ and  $D_{k}(t)$,  $t\in [t_i,t_{i+1}]$,  with ${k=0,1,2,...,m}$,
which constitute a finer decomposition of the set $B_2$ and are defined in the following way:
\begin{equation}
\begin{split}
C_{k}(t)=\big\{ (y, w):  \  \alpha_{k+1}< |V(t_i)- w|\leq \alpha_k, \  |X(t)- y|\leq l_{k}\big\}
\end{split}\label{Ak}
\end{equation}
\begin{equation}
\begin{split} 
D_{k}(t)=\big\{ (y, w): \   \alpha_{k+1}< |V(t_i)-w|\leq \alpha_k, \  |X(t)-y|> l_{k}\big\}
\end{split}\label{Bk}
\end{equation}
and we choose the parameters $\alpha_k$ and $l_k$ as:
\begin{equation}
\alpha_k=\frac{P}{2^k} \quad \quad l_{k}=\frac{2^{3k}Q^{\frac13}}{P^{\frac43+\eta}}
\label{al}
\end{equation}
with $\eta$  any number satisfying
\begin{equation}
\frac{3+\beta}{3}<\eta< 1+\gamma-\beta
\label{eta}
\end{equation} 
(the reason for this range of the parameter $\eta$ will be clear in the following;
we remark that, due to the choice of $\gamma$ (\ref{gg}), this interval is well defined, 
and when $\beta\to \frac{6}{19}$ the lower bound tends to the upper bound).
Consequently we put
\begin{equation}
{\cal{I}}_2(X(t))\leq \sum_{k=0}^m \left( {\cal{I}}_2'(k)+{\cal{I}}_2''(k) \right) \label{23}
\end{equation}
being
\begin{equation}
{\cal{I}}_2'(k)=\int_{A\cap C_{k}(t)} \frac{f(y, w, t)}{|X(t)- y|^2} \,dydw
\end{equation}
and
\begin{equation}
{\cal{I}}_2''(k)=\int_{A\cap D_{k}(t)} \frac{f(y, w, t)}{|X(t)-y|^2} \, dydw.
\end{equation}

By the choice of the parameters $\alpha_k$ and $l_{k}$ made in (\ref{al}), we have:

\begin{equation}
\begin{split}{\cal{I}}_2' (k)\leq
C \, l_{k}\int_{\alpha_{k+1}< |V(t_i)- w|\leq \alpha_k} \, dw  \leq Cl_{k}\alpha^3_{k}\leq C  P^{\frac53-\eta}Q^{\frac13}. \end{split}
\label{int3}
\end{equation}
Hence
\begin{equation}
 \sum_{k=0}^m{\cal{I}}_2'(k)\leq C  P^{\frac53-\eta}Q^{\frac13} \log P,
\label{i3}\end{equation}
since, considering we are in the set $B_2,$ it is 
\begin{equation}
m\leq  (1-\gamma)\log_2 P.\label{par}
\end{equation}

Now we pass to ${\mathcal{I}}_2''(k)$, for which we need to make the time integration over the interval $[t_i,t_{i+1}]$.
We set briefly
$$
(Y(t),W(t)):=(Y(t,t_i,\bar y,\bar w),W(t,t_i,\bar y,\bar w))
$$
being
$$
 ( Y(t_i), W(t_i)) = (\bar y,\bar w),
$$
hence we have
\begin{equation}
\begin{split}
\int_{t_i}^{t_{i+1}}& {\mathcal{I}}_2''(k)\ dt\leq\int_{t_i}^{t_{i+1}}dt\int_{A\cap D_{k}(t)} \ \frac{ f(y,w,t)}{|X(t)-y|^2} 
 \, dydw =
\\&\int_{t_i}^{t_{i+1}}dt\int_{\widetilde{A\cap D_{k}(t)}} \ \frac{ f(\bar y,\bar w,t_i)}{|X(t)-Y(t)|^2} 
\, \chi(|X(t)-Y(t)|>l_k) \, d\bar y d\bar w
\end{split}
\label{ik}
\end{equation}
under the change of variables $(y,w)=(Y,W)(t,t_i,\bar y, \bar w)$, being $\widetilde{A\cap D_{k}(t)}$
the backward in time evolved set of $A\cap D_{k}(t)$ at time $t_i$,
recalling the invariance of $f$ along the characteristics and the conservation of the measure of the phase space.

\noindent We note that we enlarge the set $\widetilde{A\cap D_{k}(t)}$ by integrating over the set
$A'\cap D_k'$, where
$$
A' = \{{(\bar y,\bar w)}: |X(t_i)-{\bar y}|\leq 5R(T)\},
$$
$$
D_k' = \{{(\bar y,\bar w)}:  \frac{\alpha_{k+1}}{2} < |V(t_i)-\bar w| \leq 2\alpha_k   \},
$$
since by the definition of the maximum displacement $R(T)$ and by Lemma \ref{lemv3} (slightly adapted),
particles belonging to $A$ and $D_k(t)$ at time $t$ come necessarily from $A'$ and $D_k'$ at time $t_i$.
This observation allows us to change the order of integration in (\ref{ik}), arriving at 
\begin{equation}
\int_{t_i}^{t_{i+1}} {\mathcal{I}}_2''(k)\ dt\leq\int_{A'\cap D_k'}f(\bar y, \bar w, t_i) 
\left( \int_{t_i}^{t_{i+1}} \frac{\chi(|X(t)-Y(t)|>l_k)}{|X(t)-Y(t)|^2} \, dt \right) d\bar y d\bar w.
\label{4.26}
\end{equation}

\noindent By rephrasing Lemma \ref{lemv3} in this case it is easily seen that it holds:
\begin{equation}
\forall \ (\bar y,\bar w)\in D_{k}'\qquad  |V(t)-W(t)|\geq \frac{\alpha_{k+1}}{4}\label{lemperp1}
\end{equation} 
and consequently, by Lemma \ref{lemrect}, there exists $t_0\in [t_i,t_{i+1}]$ such that, for any $t\in [t_{i},t_{i+1}]:$
\begin{equation}
|X(t)-Y(t)|\geq \frac{\alpha_{k+1}}{16}|t-t_0|. \label{parr}
\end{equation}
Hence, putting $a =  l_{k} /\alpha_k,$  we have:
\begin{equation}
\begin{split}
&\int_{t_i}^{t_{i+1}} \frac{\chi(|X(t)-Y(t)|>l_k)}{|X(t)-Y(t)|^2} \, dt 
\leq  \,\,   \\
&\int_{\{ t: |t-t_0|\leq a \}} \frac{\chi(|X(t)-Y(t)|>l_k)}{|X(t)-Y(t)|^2} \, dt \,+ \\
&\int_{\{ t: |t-t_0| > a \}} \frac{\chi(|X(t)-Y(t)|>l_k)}{|X(t)-Y(t)|^2} \, dt   \leq  \\
&\frac{1}{l_k^2}\int_{\{ t: |t-t_0|\leq a \}} \, dt
+  \frac{C}{\alpha_{k+1}^2}\int_{\{ t: |t-t_0| > a \}} \frac{1}{|t-t_0|^2} \, dt \leq \\
& \quad \frac{2 a}{l_{k}^2} + \frac{C}{\alpha_{k+1}^2} \int_a^{+\infty} \frac{1}{t^2} \, dt
= \frac{C}{\alpha_{k} l_k}.
\end{split}
\label{eq1}\end{equation}
Moreover, defining $\widetilde y=\bar y$,  $\widetilde w = \bar w -V(t_i)$, and 
$$D_k'' = \{{(\widetilde y,\widetilde w)}:  \frac{\alpha_{k+1}}{2} < |\widetilde w| \leq 2\alpha_k   \},
$$
\begin{equation}
\begin{split}
\int_{A'\cap D_{k}''} f(\widetilde{y}, \widetilde{w}, t_i)\ d\widetilde{y}d\widetilde{w}&\leq \frac{C}{\alpha_k^2}\int_{A'\cap D_{k}''} \widetilde{w}^2 f(\widetilde{y}, \widetilde{w}, t_i)\ d\widetilde{y}d\widetilde{w} \\& \end{split}
\label{eq2}
\end{equation}
so that by (\ref{ik}), (\ref{4.26}), (\ref{eq1}) and (\ref{eq2}), taking into account (\ref{al}), we get:
\begin{equation}
\begin{split}
\sum_{k=0}^m\int_{t_i}^{t_{i+1}} {\mathcal{I}}_2''(k)&\leq\sum_{k=0}^m\frac{C}{\alpha_{k}^3 l_k}\int_{A'\cap D_{k}''} \widetilde{w}^2 f(\widetilde{y}, \widetilde{w}, t_i)\ d\widetilde{y}d\widetilde{w}\leq \\&\frac{C}{P^{\frac53-\eta}Q^{\frac13}}\sum_{k=0}^m\int_{A'\cap D_{k}''} \widetilde{w}^2 f(\widetilde{y}, \widetilde{w}, t_i)\ d\widetilde{y}d\widetilde{w}.\end{split}
\end{equation}
Now notice that:
\begin{equation}
\begin{split}
\sum_{k=0}^m&\int_{A'\cap D_{k}''} \widetilde{w}^2 f(\widetilde{y}, \widetilde{w}, t_i)\, d\widetilde{y}d\widetilde{w}\leq \int_{A'} \widetilde{w}^2 f(\widetilde{y}, \widetilde{w}, t_i)\ d\widetilde{y}d\widetilde{w}\leq\\& W(X(t_i),5R(T),t_i)\leq CQ(R(T),t_i)\leq CQ,
\end{split}
\label{i4}
\end{equation}
as it follows from  (\ref{R'/R}), and this implies:
\begin{equation}
 \sum_{k=0}^m\int_{t_i}^{t_{i+1}} {\mathcal{I}}_2''(k)\ dt\leq  \frac{CQ^{\frac23}}{ P^{\frac53-\eta}}. \label{i3'}
\end{equation}

From eqns. (\ref{I01}), (\ref{23}), (\ref{i3}) and (\ref{i3'}) it follows:
\begin{equation}
\begin{split}
\sum_{j=1}^2\int_{t_i}^{t_{i+1}}\ {\mathcal{I}}_j(X(t)) \ dt\leq  C P^{\frac43\gamma}Q^{\frac13}\Delta+ C  P^{{\frac53-\eta}}Q^{\frac13} \log P\Delta+\frac{CQ^{\frac23}}{ P^{{\frac53-\eta}}}.\end{split}\label{p}
\end{equation}
For the last term ${\mathcal{I}}_3(X(t)),$ we observe that if $|X(t_i)-y|\geq 4R(T),$ then $|X(t)-y|\geq |X(t_i)-y|-R(T)\geq 3R(T).$ Hence
$$
{\mathcal{I}}_3(X(t))=\int _{A^c}dydw \frac{ f(y, w,t)}{|X(t)-y|^2}\leq
 \int _{|X(t)-y|\geq3R(T)}dydw \frac{ f(y, w, t)}{|X(t)-y|^2}.
$$
Along the same lines of the estimate of ${\mathcal{J}}_3(x,t)$ in Proposition \ref{prop2}  we can proceed with the estimate of ${\mathcal{I}}_3(X(t))$, obtaining: 
\begin{equation}
{\mathcal{I}}_3(X(t))\leq C.
\label{i6}
\end{equation}
By (\ref{sum}), (\ref{p}) and (\ref{i6}), recalling the definition (\ref{Delta}) of the time interval $\Delta,$ we have:
\begin{equation}
\int_{t_i}^{t_{i+1}}|E(X(t),t)|dt\leq C\Delta \left(P^{\frac43\gamma}Q^{\frac13}+ C  P^{{\frac53-\eta}}Q^{\frac13} \log P+\frac{CQ}{ P^{{\frac13-\eta+\gamma}}}\right).\label{av_01}
\end{equation}

\noindent Proposition \ref{propo},  Lemma  \ref{luc},  and the definition of $R(t)$ imply that
\begin{equation}
Q\leq C P^{\beta} \label{qpiccolo}
\end{equation}
so that from (\ref{av_01}) it follows:
\begin{equation}
\int_{t_i}^{t_{i+1}}\ |E(X(t),t)| \ dt\leq  C\Delta \left(P^{\frac43\gamma+\frac{\beta}{3}}+  P^{\frac53+\frac{\beta}{3}-\eta} \log P+ P^{\beta-\frac13-\gamma+\eta}\right).\label{i00}
\end{equation}

Hence, since $n\Delta=T,$ by (\ref{i00}) we get:
\begin{equation}
\begin{split}
 \int_0^t |E(X(s),s)| &\, ds\leq \sum_{i=0}^{n-1}\int_{t_i}^{t_{i+1}}\ |E(X(t),t)|\ dt\leq\\&
 C T\left(P^{\frac43\gamma+\frac{\beta}{3}}+  P^{\frac53+\frac{\beta}{3}-\eta} \log P+P^{\beta-\frac{1}{3}-\gamma+\eta}\right).\label{fsez4}
\end{split}
\end{equation}

By the assumption on $\beta<\frac{6}{19}$ and the choices of the parameters made in (\ref{gg}) and (\ref{eta})  we finally have:
 \begin{equation}
 \int_0^t |E(X(s),s)| \, ds\leq C P^{\alpha},    \label{EP}
\end{equation}
and the thesis is proved.

\subsection{Proof of Proposition \ref{propE}  for $\frac{6}{19}\leq\beta<\frac{14}{15}$}

In the case $\frac{6}{19}\leq\beta<\frac{14}{15}$ it 
is not  possible to choose  the parameters $\gamma$ and $\eta$ in (\ref{fsez4}) in order to prove (\ref{EP}). 
What we do then is performing further iterations of the estimate of the time average of the electric field (\ref{av_01}), by using at each step a time interval
larger than the previous one; in such a way we obtain at any step an improved estimate, up to the achievement of the searched bound. 
More precisely, for a positive integer $\ell$ we define
\begin{equation}
\Delta_\ell=\Delta_{\ell -1} {\mathcal{G}}=\Delta_{\ell -2}
{\mathcal{G}}^2=...=\Delta_{1}{\mathcal{G}}^{\ell-1}
\end{equation}
where  $\Delta_{1}:=\Delta$  (defined in (\ref{Delta})), 
\begin{equation}
\mathcal{G}=\textnormal{intg}\left(P^{\delta}\right),
\label{G_factor}
\end{equation}
being $\textnormal{intg}(a)$ the integer part of $a,$ and  $\delta$ is any number satisfying
\begin{equation}
0<\delta< \frac{7}{6}-\frac54\beta.\label{delta4}
\end{equation}
The reason for such bounds relies in the proof of Lemma \ref{lemv3l} in the Appendix.
The integer part in (\ref{G_factor}) is taken in order to use  well known
properties of the average over intervals which increase by an integer factor (see Remark \ref{rem_av} in the Appendix).

We succeed in the control of the time average of the electric field over a suitable time interval,
as stated in the following proposition:
\begin{proposition}
There exists a positive constant $\bar\Delta$ such that
\begin{equation}
\langle E \rangle_{\bar\Delta} := \frac{1}{\bar\Delta} \int_t^{t+\bar\Delta} |E(X(s), s)| \, ds
\leq C P^{\alpha} \qquad \forall \,\, \alpha \in \left(\frac{5-\epsilon}{9}, \frac23 \right)
\label{iteraz}
\end{equation}
for any $t\in[0, T]$ such that  $t\leq T-\bar\Delta$.
\label{5}
\end{proposition}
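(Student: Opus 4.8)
The plan is to prove (\ref{iteraz}) by induction on the interval index $\ell$, showing that the time-averaged field over $\Delta_\ell=\Delta\,\mathcal{G}^{\ell-1}$ obeys $\langle E\rangle_{\Delta_\ell}\leq CP^{\alpha_\ell}$ with exponents $\alpha_\ell$ that strictly decrease until they fall below $\frac23$, whereupon I set $\bar\Delta=\Delta_{\ell^*}$ for the first such index. The base case $\ell=1$ is the estimate (\ref{av_01}), which holds as soon as the sets and parameters are well defined and does not require the full ranges (\ref{gg})--(\ref{eta}). I retain only $\gamma<\frac{2-\beta}{4}$ and $\eta>1+\frac\beta3$---the conditions that already place the two \emph{bulk} contributions below $P^{2/3}$---while dropping the complementary bounds $\gamma>\frac43\beta$ and $\eta<1+\gamma-\beta$, whose sole purpose was to tame the transversal contribution; I allow that contribution to exceed $P^{2/3}$ at first and rely on the iteration to bring it down.

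First I would isolate why enlarging the window helps. Rewriting (\ref{Ei})--(\ref{i6}) over a generic interval $[t,t+\Lambda]$, the terms $\mathcal{I}_1$ and $\sum_k\mathcal{I}_2'(k)$ are \emph{instantaneous} bounds, $CP^{\frac43\gamma}Q^{\frac13}$ and $CP^{\frac53-\eta}Q^{\frac13}\log P$ by (\ref{I01}) and (\ref{i3}); they contribute the same to the average for every $\Lambda$. The transversal term $\sum_k\mathcal{I}_2''(k)$ behaves differently: by the computation (\ref{eq1})--(\ref{i3'}) its contribution to the \emph{integral} $\int_t^{t+\Lambda}|E|\,ds$ is the fixed quantity $CQ^{\frac23}P^{-(5/3-\eta)}$, independent of $\Lambda$, precisely because the transversality $|X(t)-Y(t)|\geq c\,\alpha_{k+1}|t-t_0|$ makes $\int|t-t_0|^{-2}dt$ converge. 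Dividing by $\Lambda=\Delta_\ell$ therefore damps this term by the factor $\mathcal{G}^{-(\ell-1)}\sim P^{-\delta(\ell-1)}$, so each iteration lowers the offending exponent by $\delta$ while leaving the bulk untouched.

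The crux of the inductive step is to certify that (\ref{eq1}) still holds over the long window $[t,t+\Delta_\ell]$, i.e.\ that Lemmas \ref{lemv3} and \ref{lemrect} persist on $\Delta_\ell$. For this the velocity gap $|V(s)-W(s)|$ of any transversal pair must not fall below half its initial value (at least $\tfrac12 P^\gamma$) across the whole window. I would bound its drift by splitting $[t,t+\Delta_\ell]$ into the $\mathcal{G}$ sub-windows of length $\Delta_{\ell-1}$ and feeding the inductive field bound into each:
\begin{equation*}
\int_t^{t+\Delta_\ell}\!|E(X(s),s)-E(Y(s),s)|\,ds\;\leq\;2\,\mathcal{G}\,\Delta_{\ell-1}\,CP^{\alpha_{\ell-1}}\;=\;2C\,\Delta_\ell\,P^{\alpha_{\ell-1}}.
\end{equation*}
Using (\ref{Delta}) and $Q\leq CP^\beta$ from (\ref{qpiccolo}) one has $\Delta_\ell\sim P^{\delta(\ell-1)+\gamma-\frac43-\frac\beta3}$, so the requirement that this drift remain $\lesssim P^{\gamma}$ collapses to $\delta(\ell-1)+\alpha_{\ell-1}\leq\frac43+\frac\beta3$; since $\alpha_{\ell-1}$ itself drops by $\delta$ at each step, this is an essentially $\ell$-independent condition, and it is exactly what the admissible range $0<\delta<\frac76-\frac54\beta$ of (\ref{delta4})---secured through the Appendix Lemma \ref{lemv3l}---is designed to guarantee. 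This is the step I expect to be the main obstacle: there is a genuine tension between enlarging $\Lambda$ to dilute $\mathcal{I}_2''$ and keeping the characteristics transversal over all of $\Lambda$, and the scheme lives or dies by the quantitative balance between $\delta$, the window length, and the inductively improved field bound.

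Finally I would close the induction by iterating until the damped transversal exponent $\beta-\frac13-\gamma+\eta-\delta(\ell-1)$ sinks to the level of the fixed bulk exponents $\frac43\gamma+\frac\beta3$ and $\frac53-\eta+\frac\beta3$; a last optimization of $\gamma,\eta,\delta$ against the separation constraint above tunes the reachable exponent into the target window $\left(\frac{5-\epsilon}{9},\frac23\right)$, its upper end being what the convergence argument of Section 3 requires (cf.\ (\ref{fin})) and its lower end being the value fed into (\ref{b1}). Since $P\geq\widetilde C>1$, once $\langle E\rangle_{\bar\Delta}\leq CP^{\alpha_0}$ is secured for one such $\alpha_0$ the bound $\langle E\rangle_{\bar\Delta}\leq CP^{\alpha}$ follows for every $\alpha\geq\alpha_0$, yielding (\ref{iteraz}); the integral form of Proposition \ref{propE} then follows by summing (\ref{iteraz}) over the at most $T/\bar\Delta$ consecutive translates of the window.
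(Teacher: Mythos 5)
Your proposal follows essentially the same route as the paper: the same iterative enlargement of the averaging window by factors $\mathcal{G}=\textnormal{intg}(P^\delta)$, the same observation that only the transversal contribution $\sum_k\mathcal{I}_2''(k)$ is damped by the longer average while the bulk terms are instantaneous, the same inductive re-validation of Lemmas \ref{lemv3} and \ref{lemrect} on $[t,t+\Delta_\ell]$ via the level-$(\ell-1)$ averaged field bound, and the same stopping rule once the transversal exponent drops below $2/3$; your drift condition $\delta(\ell-1)+\alpha_{\ell-1}\leq\frac43+\frac\beta3$ reproduces precisely the paper's constraint $\eta<\frac53+\gamma-\frac23\beta-\delta$ in (\ref{eta_iter}). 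The only difference is presentational (you track a single decreasing exponent $\alpha_\ell$ instead of the explicit three-term bound (\ref{av_ell}) and leave the final parameter ranges (\ref{gamma_iter})--(\ref{eta_iter}) to an unspecified "last optimization"), which does not affect the substance of the argument.
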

\begin{proof}
We claim that the  following estimate holds for any positive integer $\ell$
(its proof is given in the following subsection):
\begin{equation}
\langle E \rangle_{\Delta_\ell} \leq C \left[  P^{\frac43\gamma}Q^{\frac13} +
P^{\frac53-\eta} Q^\frac13 \log P + \frac{Q}{P^{\frac13-\eta+\gamma +(\ell-1)\delta}}  \right].
\label{av_ell}
\end{equation}
Notice that here we get an improved bound by a factor $P^{-(\ell-1)\delta}$ with respect to (\ref{av_01}). From this, by (\ref{qpiccolo}) it follows
\begin{equation}
\langle E \rangle_{\Delta_\ell} \leq C \left[P^{\frac43\gamma+\frac\beta3} +
P^{\frac53+\frac\beta3-\eta}  \log P + P^{\beta-\frac13+\eta-\gamma-(\ell-1)\delta} \right].
\label{passo_ell}
\end{equation}
A differente choice of the parameters $\gamma$ and $\eta$ (with respect (\ref{gg})  and  (\ref{eta})) is necessary for the case $\frac{6}{19}\leq\beta<\frac{14}{15}$, in order to prove also Lemmas  \ref{lemv3l}  and  \ref{lemrectl} (later on):
\begin{equation}
\max\left\{ 0, \beta-\frac23+\delta \right\} < \gamma < \frac{2-\beta}{4},
\label{gamma_iter}
\end{equation}
and
\begin{equation}
\frac{3+\beta}{3}<\eta<\frac53+\gamma-\frac23\beta-\delta.
\label{eta_iter}
\end{equation}
 Defining $\bar\ell$ as the smallest integer such that
\begin{equation}
\beta-\frac13+\eta-\gamma-(\bar\ell-1)\delta < \frac23,
\label{elle_bar}
\end{equation}
estimate  (\ref{passo_ell})  implies  (\ref{iteraz})  with $\bar\Delta = \Delta_{\bar\ell}$.
\end{proof}

\medskip

\noindent It can be easily seen that 
$$
C\frac{P^{\beta+\eta-\frac73}}{Q^{\frac13}}<\bar\Delta<C\frac{P^{\beta+\eta+\delta-\frac73}}{Q^{\frac13}}
$$
and by the choice of the parameters both the exponents of $P$ are negative, in order to be $\bar\Delta \ll T$
(taking the constant $\widetilde C$
in (\ref{mv}) suitably large).

\subsection{Proof of (\ref{av_ell}).}
By an inductive method we give now the proof of  (\ref{av_ell}).

\noindent We premise the following Lemmas, direct generalizations of Lemmas \ref{lemv3} and \ref{lemrect}, whose statements hold 
also at the $\ell$-th iterative step. Their proofs are given in \cite{R3}, but for completeness  we repeat them in the Appendix:

\begin{lemma}\label{lemv3l}
 Let $t\in [0,T]$ such that $t+\Delta_\ell\in [0,T]$.
$$
\hbox{If} \qquad |V(t)-W(t)|\leq P^{\gamma} $$
then 
\begin{equation}
\sup_{s\in [t, t+\Delta_\ell]}|V(s)-W(s)|\leq 2P^{\gamma}.
\end{equation}
$$
\hbox{If} \qquad|V(t)-W(t)|\geq P^{\gamma}
$$
then 
\begin{equation}
\inf_{s\in [t, t+\Delta_\ell]}|V(s)-W(s)|\geq \frac12 P^{\gamma}. \label{L2} 
\end{equation}
\end{lemma}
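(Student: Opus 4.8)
The plan is to prove Lemma \ref{lemv3l} by induction on $\ell$, carried out jointly with the averaged estimate (\ref{av_ell}) (Lemma \ref{lemrectl} is then handled in the same spirit). The base case $\ell=1$ is exactly Lemma \ref{lemv3}: over the interval $\Delta=\Delta_1$ fixed in (\ref{Delta}), the crude field bound $\|E\|_{L^\infty}\le C_2P^{4/3}Q^{1/3}$ coming from Proposition \ref{prop2} gives $\int_t^{t+\Delta_1}\bigl(|E(X)|+|E(Y)|\bigr)\,d\tau\le 2C_2P^{4/3}Q^{1/3}\Delta_1=\tfrac12P^{\gamma}$, which controls the variation of $V-W$ by exactly half of $P^{\gamma}$. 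For $\ell\ge 2$ this crude bound fails, since $\Delta_\ell=\Delta_1\mathcal G^{\ell-1}$ is larger by the factor $\mathcal G^{\ell-1}\approx P^{(\ell-1)\delta}$; the remedy I would use is to feed in the sharper averaged estimate from the \emph{previous} step, not the (circular) step-$\ell$ estimate which itself depends on the lemma being proved.

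For the inductive step I would keep the structure of Lemma \ref{lemv3}, writing
\[
V(s)-W(s)=V(t)-W(t)+\int_t^s\bigl[E(X(\tau),\tau)-E(Y(\tau),\tau)\bigr]\,d\tau ,
\]
so that it suffices to show $\int_t^{t+\Delta_\ell}\bigl(|E(X(\tau),\tau)|+|E(Y(\tau),\tau)|\bigr)\,d\tau\le\tfrac12P^{\gamma}$, the two conclusions then following by adding and subtracting this to $|V(t)-W(t)|$. Because $\mathcal G$ is an integer, $\Delta_\ell=\mathcal G\,\Delta_{\ell-1}$ and the average over $[t,t+\Delta_\ell]$ is the arithmetic mean of the $\mathcal G$ averages over the consecutive subintervals of length $\Delta_{\ell-1}$; by the averaging property of Remark \ref{rem_av} and the inductive hypothesis (\ref{av_ell}) at step $\ell-1$ applied to each trajectory, $\int_t^{t+\Delta_\ell}|E(X)|\,d\tau\le\Delta_\ell\,C\,[\cdots]_{\ell-1}$, where $[\cdots]_{\ell-1}=P^{4\gamma/3}Q^{1/3}+P^{5/3-\eta}Q^{1/3}\log P+Q\,P^{-(1/3-\eta+\gamma+(\ell-2)\delta)}$.

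The decisive computation is to insert $Q\le CP^{\beta}$ from (\ref{qpiccolo}) and the value of $\Delta_\ell$, and to check that each of the three resulting powers of $P$ is $\le P^{\gamma}$. The crucial cancellation occurs in the third (good) term: the factor $P^{(\ell-1)\delta}$ from $\Delta_\ell$ meets $P^{-(\ell-2)\delta}$ from the step-$(\ell-1)$ bound, leaving a single residual power with exponent $\delta+\eta+\tfrac23\beta-\tfrac53$, which is $<\gamma$ precisely by the upper bound on $\eta$ in (\ref{eta_iter}). The first two terms are $\ell$-independent, so their product with $\Delta_\ell$ grows like $P^{(\ell-1)\delta}$ and must be tested at the worst case $\ell=\bar\ell$; using the characterization (\ref{elle_bar}) of $\bar\ell$ one finds that the second term (with exponent $(\ell-1)\delta+\gamma+\tfrac13-\eta$) forces $\gamma\ge\beta-\tfrac23+\delta$, the lower bound in (\ref{gamma_iter}), while the first term's constraint (exponent $(\ell-1)\delta+\tfrac73\gamma-\tfrac43$) is implied by the upper bounds on $\gamma$ in (\ref{gamma_iter}) and on $\eta$ already imposed. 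The bound (\ref{delta4}) on $\delta$ keeps all three parameter windows nonempty. Since all inequalities are strict, each exponent is strictly below $\gamma$, and the finitely many constants are absorbed by taking $\widetilde C$ in (\ref{mv}) large (so that $P\ge\widetilde C$ is large), delivering $\tfrac12P^{\gamma}$.

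The main obstacle is exactly this bookkeeping: the interval length grows like $P^{(\ell-1)\delta}$ along the iteration, and the only way to compensate is to use the averaged field estimate one step back so that the $\ell$-dependence cancels and the leftover $\delta$-power is swallowed by the parameter constraints. I expect the delicate point to be verifying that the admissible windows for $\gamma,\eta,\delta$ are simultaneously nonempty throughout the whole range $\beta<\tfrac{14}{15}$, and that the extreme case $\ell=\bar\ell$ still satisfies every one of the three inequalities rather than just two of them.
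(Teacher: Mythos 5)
Your proposal is correct and follows essentially the same route as the paper: the inductive use of the averaged field estimate (\ref{av\_ell}) at level $\ell-1$ via the subdivision property of Remark \ref{rem\_av}, followed by the verification that each of the three exponents $\frac73\gamma-\frac43+(\ell-1)\delta$, $\frac13-\eta+\gamma+(\ell-1)\delta$ and $\frac23\beta-\frac53+\eta+\delta$ stays strictly below $\gamma$ using (\ref{elle\_bar}), (\ref{delta4}), (\ref{gamma\_iter}) and (\ref{eta\_iter}). This is exactly the paper's argument, including the identification of which parameter bound each term forces.
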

\begin{lemma}\label{lemrectl}
 Let $t\in [0,T]$ such that  $t+\Delta_\ell\in [0,T]$       
  and assume that $|V(t)-W(t)|\geq  P^{\gamma}$. 
Then there  exists $t_0\in [t, t+\Delta_\ell]$ such that for any $s\in [t, t+\Delta_\ell] $ it holds:
$$
|X(s)-Y(s)|\geq \frac{P^{\gamma}}{4}|s-t_0|.
$$
\end{lemma}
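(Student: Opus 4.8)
The plan is to reduce the statement to a one–dimensional monotonicity argument along the (almost fixed) direction of the velocity difference. Write $\xi(s)=X(s)-Y(s)$ and $\nu(s)=V(s)-W(s)$, so that by the characteristic equations (\ref{chN}) one has $\dot\xi(s)=\nu(s)$ and $\dot\nu(s)=E(X(s),s)-E(Y(s),s)$. The idea is that over $[t,t+\Delta_\ell]$ the velocity difference $\nu$ stays large in modulus and, crucially, essentially keeps its direction; hence $\xi$ travels along an almost fixed axis with speed bounded below, and the closest–approach instant along that axis will serve as $t_0$.

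First I would control the rotation of $\nu$. By Proposition \ref{prop2} and the definitions $P={\mathcal V}^N(T)$, $Q=\sup_t Q^N(R^N(t),t)$, the field obeys $\|E(s)\|_{L^\infty}\le C_2 P^{4/3}Q^{1/3}$, whence $|\dot\nu(s)|\le 2C_2P^{4/3}Q^{1/3}$. On the base interval $\Delta$ of (\ref{Delta}) this already yields
\[
|\nu(s)-\nu(t)|\le 2C_2P^{4/3}Q^{1/3}\,\Delta=\tfrac12 P^{\gamma},\qquad s\in[t,t+\Delta],
\]
which is precisely the vector estimate underlying Lemma \ref{lemv3l}. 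For the enlarged interval $\Delta_\ell=\Delta\,\mathcal G^{\ell-1}$ the crude bound $\|E\|_\infty\,\Delta_\ell$ is far too large, and one must instead integrate $|E(X(\cdot),\cdot)|$ and $|E(Y(\cdot),\cdot)|$ using the time–average control coming from the previous iterative step (the averaging over integer–factor subintervals, cf. the choice of $\mathcal G$ in (\ref{G_factor}) and of $\delta$ in (\ref{delta4})). This is exactly the computation performed in the proof of Lemma \ref{lemv3l}, and under the parameter ranges (\ref{gamma_iter})--(\ref{eta_iter}) it again gives $|\nu(s)-\nu(t)|\le\tfrac12 P^{\gamma}$ throughout $[t,t+\Delta_\ell]$. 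This is the main obstacle: everything rests on showing that, with the prescribed $\gamma,\eta,\delta$, the accumulated variation of $\nu$ over the long interval is still at most $\tfrac12P^\gamma$, so that $\nu$ cannot turn by more than a fixed angle.

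Granting this bound, I would fix the unit vector $e=\nu(t)/|\nu(t)|$, well defined since by hypothesis $|\nu(t)|\ge P^\gamma$. Then, for every $s\in[t,t+\Delta_\ell]$,
\[
\nu(s)\cdot e=|\nu(t)|+\bigl(\nu(s)-\nu(t)\bigr)\cdot e\ge P^\gamma-\tfrac12P^\gamma=\tfrac12P^\gamma\ge\tfrac14P^\gamma .
\]
Setting $h(s)=\xi(s)\cdot e$, we get $h'(s)=\nu(s)\cdot e\ge\tfrac14P^\gamma>0$, so $h$ is strictly increasing on the interval.

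Finally I would locate $t_0$ and integrate. If $h$ vanishes somewhere in $[t,t+\Delta_\ell]$, let $t_0$ be that (unique, by monotonicity) zero; otherwise $h$ has a constant sign, and I take $t_0=t$ if $h>0$ and $t_0=t+\Delta_\ell$ if $h<0$, i.e. the endpoint minimizing $|h|$. In all cases $h(t_0)$ either vanishes or has the same sign as $h(s)$, so that
\[
|X(s)-Y(s)|\ge|\xi(s)\cdot e|=|h(s)|\ge|h(s)-h(t_0)|=\Bigl|\int_{t_0}^{s}h'(u)\,du\Bigr|\ge\frac{P^\gamma}{4}\,|s-t_0|,
\]
which is the assertion. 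The identical argument with $\Delta$ in place of $\Delta_\ell$ (using the base–case estimate above in the second step) simultaneously proves Lemma \ref{lemrect}.
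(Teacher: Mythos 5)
Your proof is correct, but it follows a genuinely different route from the paper's. The paper takes $t_0$ to be the minimizer of $|\Gamma(s)|$, $\Gamma=X-Y$, on $[t,t+\Delta_\ell]$, writes the Duhamel-type identity $\Gamma(s)=\bar\Gamma(s)+\int_{t_0}^s\!\int_{t_0}^\tau\big[E(X(\xi),\xi)-E(Y(\xi),\xi)\big]$ with $\bar\Gamma$ the linearization at $t_0$, uses the minimality of $t_0$ to get the sign condition $\Gamma(t_0)\cdot\dot\Gamma(t_0)(s-t_0)\ge0$ (interior critical point or endpoint), deduces $|\bar\Gamma(s)|\ge|\dot\Gamma(t_0)|\,|s-t_0|\ge\frac{P^\gamma}{2}|s-t_0|$ from Lemma \ref{lemv3l}, and subtracts the double time integral of the field, bounded by $\frac{P^\gamma}{4}|s-t_0|$ via the level-$(\ell-1)$ average. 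You instead project onto the fixed direction $e=\nu(t)/|\nu(t)|$ and run a first-order monotonicity argument on $h=\xi\cdot e$, choosing $t_0$ as the zero of $h$ or the nearest endpoint; this avoids both the second-order comparison and the minimality trick, is somewhat more elementary, and even yields the better constant $\frac{P^\gamma}{2}$. Both arguments hinge on exactly the same quantitative input, which you correctly identify as the crux: the accumulated field along the two trajectories over the enlarged interval $\Delta_\ell$ must be a small fraction of $P^\gamma$, which is \emph{not} the statement of Lemma \ref{lemv3l} but the internal estimate $C[a_1+a_2+a_3]\Delta_\ell\le\frac12 P^\gamma$ established in its proof via the iterated time average $\langle E\rangle_{\Delta_{\ell-1}}$ under the parameter constraints (\ref{delta4}), (\ref{gamma_iter}), (\ref{eta_iter}); since you invoke that computation explicitly rather than the lemma's conclusion, the argument is complete.
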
 
Formula (\ref{av_ell}) is already proved in the case $\ell=1$  (it is in fact formula (\ref{av_01})).
We show now that, assuming true (\ref{av_ell}) for $\ell-1$, then it holds also for $\ell$.
  We note that (see Remark \ref{rem_av} in the Appendix)  if estimate (\ref{av_ell}) holds for $\langle E \rangle_{\Delta_{\ell-1}}$, then the same estimate holds for $\langle E \rangle_{\Delta_{\ell}},$   since both the intervals $\Delta_{\ell}$ and the bound (\ref{av_ell}) are uniform in time. 
From what done before,   the only term in (\ref{av_ell})  which is affected
by the time average is the last one (see (\ref{i3'})). Then, by using the estimate on the time average 
$\langle E \rangle_{\Delta_{\ell-1}}$ on the larger time interval $\Delta_\ell$  we get for this term:
\begin{equation}
\frac{Q}{P^{\frac13+\gamma-\eta +(\ell-2)\delta}} \Delta_{\ell} \frac{\Delta_{\ell-1}}{\Delta_\ell}
\leq \frac{Q}{P^{\frac13-\eta+\gamma +(\ell-1)\delta}} \Delta_{\ell},
\end{equation}
which proves (\ref{av_ell}).
\qed

\medskip

The proof of Proposition \ref{propE} follows immediately, as in the passage from (\ref{fsez4})
to (\ref{EP}).

\bigskip

\bigskip

\bigskip

\subsection*{Acknowledgments}
Work performed under the auspices of 
GNFM-INDAM and the Italian Ministry of the University (MIUR).  

\bigskip

\bigskip

\section*{Appendix}

\appendix

\setcounter{equation}{0}

\def\theequation{A.\arabic{equation}}

\medskip

\subsection*{Proof of  Lemma \ref{lemv3}}

We have
by (\ref{C1}),  for any $t\in[t', t'+\Delta]\subset [0,T]$,
\begin{equation*}
\begin{split}
|V(t) - W(t)| &\leq |V(t')-W(t')| +\\& \int_{t'}^{t'+\Delta} \Big[ |E(X(s),s)| + |E(Y(s),s)| \Big] ds 
\leq \\&P^\gamma + 2 C_2 P^\frac43 Q^\frac13 \Delta \leq 2 P^\gamma.
\end{split}
\end{equation*}

Analogously we prove the second statement:
\begin{equation*}
\begin{split}
|V(t)-W(t)|&\geq |V(t')-W(t')|-\\&\int_{t'}^{t'+\Delta}
\Big[ |E(X(s),s)| +|E(Y(s),s)| \Big] ds
\geq\\& P^\gamma - 2 C_2 P^{\frac43} Q^\frac13  \Delta
\geq \frac12 P^\gamma.
\end{split}
\end{equation*}

\subsection*{Proof of  Lemma \ref{lemrect}}

Let $t_0\in [t', t'+\Delta]\subset [0,T]$ be the time at which $|X(s)-Y(s)|$ has the minimum value,
for $s\in[t', t'+\Delta]$.   We put $\Gamma(s)=X(s)-Y(s)$.  
Moreover we define the function
$$
\bar{\Gamma}(s)=\Gamma(t_0)+ \dot{\Gamma}(t_0)(s-t_0).
$$
It results
$$
\ddot{\Gamma}(s)-\ddot{\bar{\Gamma}}(s)=E(X(s),s)-E(Y(s),s)
$$
$$
\Gamma(t_0)=\bar{\Gamma}(t_0), \quad  \dot{\Gamma}(t_0)=\dot{\bar{\Gamma}}(t_0)
$$
from which it follows
$$
\Gamma(s)=\bar{\Gamma}(s)+\int_{t_0}^s d\tau \int_{t_0}^\tau d\xi \ \big[ E(X(\xi),\xi)-E(Y(\xi),\xi) \big].
$$
By  (\ref{C1})
\begin{equation}
\begin{split}
\int_{t_0}^s d\tau\int_{t_0}^\tau d\xi \,& |E(X(\xi),\xi)-E(Y(\xi),\xi)|\leq 2C_2 P^{\frac43} Q^{\frac13}
\frac{|s-t_0|^2}{2}\leq 
 \\
&C_2 P^{\frac43} Q^{\frac13} \Delta  |s-t_0|\leq P^\gamma \frac{ |s-t_0|}{4}. 
\end{split}
\label{eq_app}
\end{equation}
 Hence,
\begin{equation}
 |\Gamma(s)|\geq |\bar{\Gamma}(s)|- P^\gamma \frac{|s-t_0|}{4}.
\label{z}
\end{equation}
Now we have:
$$
|\bar{\Gamma}(s)|^2=|\Gamma(t_0)|^2+2\Gamma(t_0)\cdot \dot{\Gamma}(t_0)(s-t_0)+|\dot{\Gamma}(t_0)|^2 |s-t_0|^2.
$$
We observe that $\Gamma(t_0) \cdot \dot{\Gamma}(t_0) (s-t_0)\geq 0.$  Indeed, if $t_0 \in(t', t'+\Delta)$ 
then $\dot{\Gamma}(t_0)=0$ while if $t_0=t'$ or $t_0=t'+\Delta$ the product $\Gamma(t_0) \cdot \dot{\Gamma}(t_0) (s-t_0)\geq 0$.
Hence
$$
|\bar{\Gamma}(s)|^2\geq |\dot{\Gamma}(t_0)|^2 |s-t_0|^2.
$$
By Lemma \ref{lemv3}, since $t_0\in [t', t'+\Delta]$ it is
$$
|\dot{\Gamma}(t_0)|\geq  \frac{P^\gamma}{2}
$$
hence
$$
|\bar{\Gamma}(s)|\geq  \frac{P^\gamma}{2} |s-t_0|
$$
and finally by (\ref{z}), 
$$
 |\Gamma(s)|\geq  \frac{P^\gamma}{4}|s-t_0|,
 $$
which concludes the proof.

\subsection*{Proof of  Lemma \ref{lemv3l}}

\begin{remark}
Before giving the proofs of Lemmas  
\ref{lemv3l}  and    \ref{lemrectl} we observe that it holds:
\begin{equation}
\sup_{t\in[0, T-\Delta_\ell]}\langle E\rangle_{\Delta_\ell}\leq 
\sup_{t\in [0, T-\Delta_{\ell -1}]} \langle E\rangle_{\Delta_{\ell-1}} \qquad \forall \ell \leq \bar{\ell}.\label{ave}
\end{equation}
In fact, 
$\Delta_\ell=\mathcal{G}\Delta_{\ell-1}$ with $\mathcal{G}$ given in (\ref{G_factor}), so that:
\begin{equation*}
[t,t+\Delta_{\ell}]=\bigcup_{i=1}^\mathcal{G}[t+(i-1)\Delta_{\ell-1}, t+i\Delta_{\ell-1}],
\end{equation*}
hence
\begin{equation}
\frac{1}{\Delta_{\ell}}\int_t^{t+\Delta_{\ell}}|E(X(s),s)|ds \leq \max_i \frac{1}{\Delta_{\ell-1}}\int_{t+(i-1)\Delta_{\ell-1}}^{t+i\Delta_{\ell-1}}|E(X(s),s)|ds,
\end{equation}
whence we get (\ref{ave}).

\label{rem_av}
\end{remark}

\bigskip

In order to prove Lemma \ref{lemv3l} 
we show now that Lemma \ref{lemv3} holds true also over a time interval $\Delta_\ell$, $2\leq\ell < \bar\ell$, 
under the assumption that 
estimate (\ref{av_ell}) at level $\ell - 1$ holds.  Indeed, by the use of (\ref{av_ell}) we get, for any $s\in[t, t+\Delta_\ell]$,  
\begin{equation}
\begin{split}
|&V(s) - W(s)| \leq |V(t)-W(t)| +\\& \int_{t}^{t+\Delta_\ell} \Big[ |E(X(\tau),\tau)| + |E(Y(\tau),\tau)| \Big] d\tau 
\leq \\&P^\gamma +  C   
\left[  P^{\frac43\gamma}Q^{\frac13} +
P^{\frac53-\eta} Q^\frac13 \log P + \frac{Q}{P^{\frac13+\gamma-\eta +(\ell-2)\delta}}  \right]\Delta_\ell:=\\&
P^\gamma +  C[a_1+a_2+a_3]\ \Delta_\ell,
  \end{split}
  \label{lemma_iter}
\end{equation}
where
$$
a_1=P^{\frac43\gamma}Q^{\frac13}, \qquad a_2=P^{\frac53-\eta} Q^\frac13 \log P,
\qquad a_3=\frac{Q}{P^{\frac13+\gamma-\eta +(\ell-2)\delta}}.
$$
Since $Q\leq R(T)^{\beta}\leq C P^{\beta},$ recalling that
$$
\Delta_\ell=\frac{\left[ \textnormal{intg}\left( P^\delta \right) \right]^{\ell-1}}{4 C_2  {P}^{\frac43-\gamma} Q^\frac13},
$$
we have:
\begin{equation}
a_1\Delta_\ell\leq  P^{\frac43 \gamma - \frac43 + \gamma + (\ell - 1)\delta}
\end{equation}
\begin{equation}
a_2\Delta_\ell\leq    P^{\frac13-\eta+\gamma+(\ell-1)\delta}    \log P
\end{equation}
and 
\begin{equation}
a_3\Delta_\ell\leq   P^{\frac23 \beta -\frac53 +\eta+\delta}.
\end{equation}
Being $\ell\leq\bar{\ell},$ from the definition of $ \bar{\ell}$ given in (\ref{elle_bar}) it follows that 
\begin{equation}
\beta-\frac13+\eta-\gamma-(\bar\ell-2)\delta \geq \frac23,
\end{equation}
from which
\begin{equation}
(\ell - 1)\delta \leq (\bar\ell-1)\delta \leq\beta - \frac13 - \gamma + \eta -\frac23 + \delta
\label{app_l}
\end{equation}
(note that the right hand side of  (\ref{app_l})  is positive for $\beta\geq\frac{6}{19}$).
Therefore, by the choices (\ref{delta4}),  (\ref{gamma_iter}) and (\ref{eta_iter}) made on the parameters, it follows
$$
a_i\Delta_\ell\leq  P^{\theta}  \qquad \theta <\gamma, \  \qquad i=1,2,3.
$$
In fact, for the term $a_2\Delta_\ell$, it is sufficient to prove
$$
\frac13-\eta+\gamma+(\ell-1)\delta < \gamma,
$$
which holds true since, by (\ref{app_l}) and  (\ref{gamma_iter}),
$$
\frac13-\eta+(\ell-1)\delta \leq \beta - \gamma-\frac23+\delta<0.
$$
Let us observe that the interval in (\ref{gamma_iter}) is well defined if
$$
\beta-\frac23+\delta < \frac{2-\beta}{4},
$$
which gives condition (\ref{delta4})
$$
0< \delta < \frac76-\frac54\beta.
$$
Let us consider the term $a_3\Delta_\ell$, for which it is sufficient to prove
$$
\frac23 \beta -\frac53 +\eta+\delta < \gamma,
$$
that is 
$$
\eta < \frac53+\gamma-\frac23\beta-\delta,
$$
which is implied by (\ref{eta_iter}). The interval in (\ref{eta_iter}) is well defined, as it follows by the condition on $\gamma$
(\ref{gamma_iter}).

\noindent  Finally we examine the term $a_1\Delta_\ell$, for which we require
$$
\frac43 \gamma - \frac43 + \gamma + (\ell - 1)\delta < \gamma,
$$
which holds true since
\begin{equation*}
\begin{split}
\frac43 \gamma - \frac43 +  (\ell - 1)\delta  < \, &\frac43 \gamma - \frac43 +
\beta - \frac13 - \gamma + \eta -\frac23 + \delta = \\
&\frac{\gamma}{3}-\frac73+\beta+\eta+\delta < 0,
\end{split}
\end{equation*}
by taking 
$$
\eta < \frac73 - \frac{\gamma}{3}-\beta-\delta,
$$
condition which is automatically fulfilled by (\ref{eta_iter}), since
$$
\frac53+\gamma-\frac23\beta-\delta < \frac73 - \frac{\gamma}{3}-\beta-\delta,
$$
as it is evident by using (\ref{gamma_iter}).

Hence, provided that $P$ is sufficiently large (as the constant $\widetilde C$ in (\ref{mv}) assures), we have: 
$$
C[a_1+a_2+a_3]\ \Delta_\ell\leq P^{\gamma}
$$
which proves the thesis.

We proceed analogously for the lower bound.

\subsection*{Proof of  Lemma \ref{lemrectl}}

We note that the same arguments used in the proof of Lemma \ref{lemrect} work also for Lemma  \ref{lemrectl},
considering the interval $[t, t+\Delta_\ell]$,  $\ell > 1$,
and  for the electric field
the estimate (\ref{av_ell}) at level $\ell - 1$.  
In fact, going back to  (\ref{eq_app}), we have for any $s\in[t', t'+\Delta_{\ell}]$
\begin{equation}
\int_{t_0}^s d\tau\int_{t_0}^\tau d\xi \, |E(X(\xi),\xi)-E(Y(\xi),\xi)|\leq 2 \langle E \rangle_{\Delta_{\ell-1}} \, \Delta_\ell
\int_{t_0}^s d\tau   \leq 
\frac{P^\gamma}{4}  |s-t_0|,
\end{equation}
where we treat the term $\langle E \rangle_{\Delta_{\ell-1}}  \Delta_\ell$ in the same way as in (\ref{lemma_iter}).
 Hereafter the proof proceeds in the same way as the proof of Lemma \ref{lemrect}.

\bigskip

\bigskip

\bigskip

\bigskip

\end{document}